\documentclass[reqno,oneside,11pt]{amsart}

\usepackage[foot]{amsaddr}
\usepackage[utf8]{inputenc} 
\usepackage{amsmath, amssymb,bm, cases, mathtools, thmtools, bbm}
\usepackage{verbatim}
\usepackage{graphicx}\graphicspath{{figures/}}
\usepackage{multicol}
\usepackage{tabularx}
\usepackage[usenames,dvipsnames]{xcolor}
\usepackage{mathrsfs} 
\usepackage{url}

\usepackage[%
    minnames=4,maxnames=99,maxcitenames=4,
    style=alphabetic,
    doi=false,url=false,
    firstinits=true,hyperref,natbib,backend=bibtex]{biblatex}
\renewbibmacro{in:}{%
  \ifentrytype{article}{}{\printtext{\bibstring{in}\intitlepunct}}}
\bibliography{mcmc2} 

\usepackage[colorlinks,citecolor=blue,urlcolor=blue,linkcolor=RawSienna]{hyperref}
\usepackage{hypernat}
\usepackage{datetime}


\DeclareMathAlphabet\EuRoman{U}{eur}{m}{n}
\SetMathAlphabet\EuRoman{bold}{U}{eur}{b}{n}


\usepackage{euscript,microtype}

\usepackage[capitalize]{cleveref}

\crefname{lemma}{Lemma}{Lemmas}
\crefname{corollary}{Corollary}{Corollaries}
\crefname{theorem}{Theorem}{Theorems}

\makeatletter
\let\reftagform@=\tagform@
\def\tagform@#1{\maketag@@@{\ignorespaces\textcolor{gray}{(#1)}\unskip\@@italiccorr}}
\renewcommand{\eqref}[1]{\textup{\reftagform@{\ref{#1}}}}
\makeatother

\setlength{\marginparwidth}{1.25in}







\definecolor{WowColor}{rgb}{.75,0,.75}
\definecolor{SubtleColor}{rgb}{0,0,.50}



\newcounter{margincounter}


\declaretheorem[style=plain,numberwithin=section,name=Theorem]{theorem}
\declaretheorem[style=plain,sibling=theorem,name=Lemma]{lemma}
\declaretheorem[style=plain,sibling=theorem,name=Proposition]{proposition}
\declaretheorem[style=plain,sibling=theorem,name=Corollary]{corollary}

\declaretheorem[style=definition,sibling=theorem,name=Definition]{definition}
\declaretheorem[style=definition,qed=$\triangleleft$,sibling=theorem,name=Example]{example}
\declaretheorem[style=remark,qed=$\triangleleft$,sibling=theorem,name=Remark]{remark}

\numberwithin{theorem}{section}




\def\[#1\]{\begin{align}#1\end{align}}
\def\*[#1\]{\begin{align*}#1\end{align*}}

\newcommand{\bigO}{\mathcal{O}}
\newcommand{\normal}{\mathcal N}
\DeclareMathOperator*{\trace}{tr}	 



\newcommand{\Reals}{\mathbb{R}}

\newcommand{\dee}{\mathrm{d}}

\DeclareMathOperator{\ESJD}{ESJD}

\DeclareMathOperator*{\newlim}{\mathrm{lim}\vphantom{\mathrm{infsup}}}
\DeclareMathOperator*{\newmin}{\mathrm{min}\vphantom{\mathrm{infsup}}}
\DeclareMathOperator*{\newmax}{\mathrm{max}\vphantom{\mathrm{infsup}}}
\DeclareMathOperator*{\newinf}{\mathrm{inf}\vphantom{\mathrm{infsup}}}
\DeclareMathOperator*{\newsup}{\mathrm{sup}\vphantom{\mathrm{infsup}}}
\renewcommand{\lim}{\newlim}
\renewcommand{\min}{\newmin}
\renewcommand{\max}{\newmax}
\renewcommand{\inf}{\newinf}
\renewcommand{\sup}{\newsup}

\renewcommand{\Pr}{\mathbb{P}}
\def\EE{\mathbb{E}}
\DeclareMathOperator*{\var}{var}










\newtheorem*{assumption*}{Assumption}

\setlength{\textwidth}{\paperwidth} 
\addtolength{\textwidth}{-2.5in}
\calclayout

\title[Optimal Scaling on General Target Distributions]{Optimal Scaling of Random-Walk
	Metropolis Algorithms on General Target Distributions\\ 
	}
\author{Jun Yang$^1$}
\email{jun@utstat.toronto.edu}
\author{Gareth O. Roberts$^2$}
\email{gareth.o.roberts@warwick.ac.uk}
\author{Jeffrey S. Rosenthal$^1$}
\email{jeff@math.toronto.edu}
\address{$^1$Department of Statistical Sciences,
	University of Toronto, Canada}
\address{$^2$Department of Statistics,
	University of Warwick, UK}

\begin{document}
	
	\maketitle
	\begin{abstract}
		{One} main limitation of the existing optimal scaling results for Metropolis--Hastings algorithms is that the assumptions on the target distribution are unrealistic. In this paper, we consider optimal scaling of random-walk Metropolis algorithms on general target distributions in high dimensions arising from {practical MCMC models from Bayesian statistics}.
		For optimal scaling by maximizing expected squared jumping distance (ESJD), we show the asymptotically optimal acceptance rate $0.234$ can be obtained under general {realistic} sufficient conditions on the target distribution. The new sufficient conditions are easy to be verified and may hold for some general classes of {MCMC models arising from Bayesian statistics applications}, which substantially generalize the product i.i.d.\ condition required in most existing literature of optimal scaling. Furthermore, we show one-dimensional diffusion limits can be obtained under slightly stronger conditions, which still allow dependent coordinates of the target distribution. We also connect the new diffusion limit results to complexity bounds of Metropolis algorithms in high dimensions.
	\end{abstract}
	
		\begin{center}
		\begin{minipage}{1\linewidth}
			\setcounter{tocdepth}{1}
			\tableofcontents
		\end{minipage}
	\end{center}

\newpage
\section{Introduction}

	Markov chain Monte Carlo (MCMC) algorithms \cite{Brooks2011,Gilks1995,Liu2008,Meyn2012,Robert2004} are now routinely used in many fields to obtain approximations of integrals that could not be tackled by common numerical methods, because of the simplicity and the scalability to high-dimensional settings. The running times of MCMC algorithms are an extremely important issue of practice.
	They have been studied from a variety of perspectives, including convergence ``diagnostics'' via the Markov chain output \cite{Gelman1992}, proving weak convergence limits of sped-up versions of the algorithms to diffusion limits \cite{Roberts1997,Roberts1998}, directly bounding the convergence in total variation distance \cite{Meyn1994,Rosenthal1995a,Rosenthal1996,Roberts1999,Jones2001,Rosenthal2002,Jones2004,Baxendale2005,Flegal2008}, {and non-asymptotic guarantees when the target distribution has a smooth and log-concave density, e.g.\ \cite{Bou-Rabee2018,Dalalyan2017,Dwivedi2018,Dalalyan2019} and the references therein}.
	
	The optimal scaling framework \cite{Roberts1997,Roberts1998,Roberts2001} is one of the most successful and practically useful ways of
	performing asymptotic analysis of MCMC methods in high-dimensions. Optimal scaling results (e.g. \cite{Christensen2005,Neal2006,Bedard2008a,Bedard2008,Neal2008,Neal2011,Neal2012,Jourdain2015,Jourdain2014,Roberts2014,Zanella2017}) facilitate optimization of MCMC performance by providing clear and
	mathematically-based guidance on how to tune the parameters defining the proposal distribution for Metropolis--Hastings algorithms \cite{Metropolis1953,Hastings1970}. For instance, classical results include  tuning the acceptance probabilities to $0.234$ for random-walk Metropolis algorithm (RWM) \cite{Roberts1997} and $0.574$ for Metropolis-adjusted Langevin algorithm (MALA) \cite{Roberts1998}. Moreover, optimal scaling results have been used to analyze and
	compare a wide variety of MCMC algorithms, such as Hamiltonian Monte Carlo (HMC) \cite{Beskos2013}, Pseudo-Marginal
	MCMC \cite{Sherlock2015}, multiple-try MCMC \cite{Bedard2012}. This yields guidance which is widely used by practitioners, especially via self-tuning or Adaptive
	MCMC methodologies \cite{Andrieu2008,Rosenthal2011}.
	
	In the original paper, \citet{Roberts1997} dealt with the RWM algorithm {starting in stationarity for target distributions which have i.i.d.\ product forms. The i.i.d.\ condition for the target and the assumption for the chain to start in stationarity are two main limitations of the optimal scaling framework. Particularly, the product i.i.d.\ condition is very restrictive.} From a practitioner’s
	perspective, target distributions of the i.i.d.\ forms are too limited a class of probability
	distributions to be useful, since they can be tackled by sampling a single
	one-dimensional target due to the product structure.
	To this day, optimal scaling results have mainly been proved for target distributions with a product structure, which severely limits their applicability.  On the other hand, practitioners use
	these tuning criteria far outside the class of target distributions of product i.i.d.\ forms. For example, extensive simulations \cite{Roberts2001,Sherlock2010} show that these optimality results also hold for more complex target distributions.


	There exists only a few extensions for correlated targets and most of them are derived for very specific models.
%
		For example, \citet{Breyer2000} studied
	target densities which are Gibbs measures and \citet{Roberts2001} studied inhomogeneous target densities. \citet{Breyer2004} studied target distributions arising
	in nonlinear regression and have a mean field structure.  	\citet{Neal2006} considered the case where updates of high-dimensional Metropolis algorithms are lower dimensional than the target density itself. 
	Later, \citet{Bedard2008} studied independent targets
	with different scales (see also \cite{Bedard2007,Bedard2008a}) and \citet{Bedard2019} studied a special family of hierarchical target distributions. \citet{Neal2008} studied spherically constrained target distributions and non-Gaussian proposals \cite{Neal2011}. 
	\citet{Sherlock2009} considered
	elliptically symmetric unimodal targets.
	\citet{Neal2012} studied densities with bounded support.  \citet{Durmus2017} considered target distributions which are differentiable in $L^p$ mean. 
	Recently, \citet{Mattingly2012} studied diffusion limits for a class of high-dimensional measures found from the approximation of measures on a Hilbert space which are absolutely continuous with respect to a Gaussian reference measure (See also \cite{Pillai2012,Beskos2009,Beskos2008,Cotter2013}). 
	{Important examples of this scenario required
		by \cite{Mattingly2012}  in uncertainty quantification problems are given in \cite{Hairer2011,Stuart2010,Chen2018}.
		However, in
		this paper we shall concentrate on the situation where absolute
		continuity with respect to a Gaussian is not a reasonable assumption,
		as is the case in many Bayesian statistics applications.
	}

		Furthermore, we do not consider the transient phase of the Metropolis--Hasting algorithms in this paper. The transient phase of high-dimensional Metropolis--Hasting algorithms are studied for example in \cite{Christensen2005,Jourdain2014,Jourdain2015,Kuntz2018,Kuntz2019}.
	{\citet{Kuntz2019} studied the RWM algorithm starting out of stationarity in the settings of \cite{Mattingly2012,Jourdain2015} when  non-product target distributions are defined in a Hilbert space being absolute continuous with respect to some Gaussian measures. Such target distributions in \cite{Kuntz2019} can arise for example in Bayesian nonparametric settings, but not in many other Bayesian statistics applications which we focus on in this paper.}

	In this paper, we consider optimal scaling of RWM algorithms on general target distributions in high dimensions arising from {practical MCMC models in Bayesian statistics}. First,
	for optimal scaling by maximizing expected squared jumping distance (ESJD), we show the asymptotically optimal acceptance rate $0.234$ can be obtained under general sufficient conditions on the target distribution. 	{Very briefly speaking, $0.234$ is asymptotically optimal if (i) each coordinate of the Markov chain is only strongly dependent with a subset of other coordinates  (see assumptions \ref{assumption_A1} and \ref{assumption_A3}); (ii) the target distribution satisfies some smoothness conditions (see assumptions \ref{assumption_A2} and \ref{assumption_A4}); (iii) as the dimension goes to infinity, a key quantity of ``roughness'' of the target concentrates to a nonzero value (see assumption \ref{assumption_A5}).}
	The new sufficient conditions are easy to check in practice and may hold for some general classes of {practical MCMC models}. 
	Our results substantially generalize the commonly used product i.i.d.\ condition. Furthermore, we show one-dimensional diffusion limits can also be obtained under relaxed conditions which still allow dependent coordinates of the target distribution. Finally, we also connect the new results of diffusion limits to complexity bounds of RMW algorithms in high dimensions. {Note that although the whole paper is focused on RWM algorithm, we believe the technical proofs in this paper can be used to relax restrictive conditions on the target distribution for more general Metropolis algorithms.}

	The paper is organized as follows. In \cref{section_background}, we give a brief background review of optimal scaling for Metropolis--Hastings algorithms and complexity bounds via diffusion limits. In \cref{section_main_results}, we present our main results, which include three parts: optimal scaling by maximizing ESJD, optimal scaling via diffusion limits, and complexity bounds via diffusion limits. In \cref{section_examples}, we demonstrate the new optimal scaling result holds for some {useful MCMC models}. In \cref{proof_key_theorem}, we prove \cref{key_theorem}, which is one of our main results. The proofs of lemmas used for proving \cref{key_theorem} and other main results, such as \cref{thm_diffusion,thm_diffusion2}, are delayed to \cref{proof_lemmas_key_theorem,proof_thm_diffusion,proof_thm_diffusion2}.
	

\section{Background on Optimal Scaling}\label{section_background}

Practical implementations of Metropolis--Hastings algorithms suffer from slow mixing for at least two reasons: the Markov chain moves very slowly to the target distribution when the proposed jumps are too short; the Markov chain stays at a state for most of the time when the proposed jumps are long but the chain ends up in low probability areas of the target distribution. The optimal scaling problem \cite{Roberts1997} considers the choice of proposed distribution to optimize mixing of the Metropolis--Hastings algorithm. We focus on one of the most popular MCMC algorithms, the RWM algorithm. This algorithm proceeds by running a Markov chain $\{X^d(t), t=0,\dots,\infty\}$ as follows. Given a target distribution $\pi^d$ on the state space $\Reals^d$ and the current state $X^d(t)=x^d$, a new state is proposed by $Y^d\sim\normal(x^d,\sigma_d^2I)$, which is sampled from a multivariate Gaussian distribution centered at $x^d$, then the proposal is accepted with probability $\min\{1, \pi^d(Y^d)/\pi^d(x^d)\}$ so that $X^d(t+1)=Y^d$. Otherwise the proposal is rejected and $X^d(t+1)=x^d$. This is precisely to ensure the Markov chain is reversible with respect to the target distribution $\pi^d$. It can be shown that {the normal proposals automatically make the RWM algorithm $\pi^d$-irreducible, aperiodic, and hence ergodic \cite{Roberts1994,Mengersen1996}.} Therefore, it will converge asymptotically to $\pi^d$ in law. 
 Note that the only computational cost involved in calculating the
 acceptance probabilities is the relative ratio of densities. Within the class of all Metropolis–Hastings algorithms, the RWM algorithm is still widely used in many applications
 because of its simplicity and robustness. 

 \subsection{Optimal Scaling via Diffusion Limits}
 The most common technique to prove optimal scaling results is to show a weak convergence to diffusion limits as the dimension of a sequence of target densities converges to infinity \cite{Roberts1997,Roberts1998}. {More specifically, even though different coordinates of the Markov chain are \emph{not} independent \emph{nor} even
 	individually Markovian, when the proposal is appropriately scaled according to the dimension, the sequence of sped-up stochastic processes formed by one fixed coordinate of each Markov chain converges to an appropriate Markovian Langevin diffusion process.} The limiting diffusion limit admits a straightforward efficiency maximization problem which leads to asymptotically optimal acceptance rate of the proposed moves for the Metropolis--Hastings algorithm.
In \cite{Roberts1997},  the target distribution $\pi^d$ is assumed to be an $d$-dimensional product density with respect to Lebesgue measure, that is
\[\label{eq_iid_product_target}
\pi^d(x^d)=\prod_{i=1}^d f(x_i),
\]
where $x^d=(x_1,x_2,\dots,x_d)$.
It is shown that with the choice of scaling $\sigma^2_d=\ell^2/(d-1)$ for some fixed $\ell>0$, individual components
of the resulting Markov chain converge to the solution of a stochastic differential
equation (SDE). More specifically, denoting $X^d=(X^d_1,X^d_2,\dots,X^d_d)$, the first coordinate of the RWM algorithm, $X^d_1$, sped up by a factor of $d$, i.e. $\{X_1^d(\lfloor d t\rfloor), t=0,1,\dots\}$, converges weakly in the usual Skorokhod topology to a limiting ergodic Langevin diffusion. 

\begin{proposition}{\cite[Theorem 1.1]{Roberts1997}}\label{lemma_RGG}
	Suppose density $f$ satisfies that $f'/f$ is Lipschitz continuous and 
	\[\label{RGG_smooth_condition}
	\int\left[\frac{f'(x)}{f(x)}\right]^8f(x)\dee x<\infty,\quad \int \left[\frac{f''(x)}{f(x)}\right]^4 f(x)\dee x<\infty.
	\]
	Then for $U^d(t):=X_1^d(\lfloor d t\rfloor)$, as $d\to\infty$, we have
	$U^d\Rightarrow U$,
	where $\Rightarrow$ denotes weak convergence in Skorokhod topology, and $U$ satisfies the following Langevin SDE
	\[\label{equ_SDE}
	\dee U(t)=(h(\ell))^{1/2}\dee B(t)+h(\ell)\frac{f'(U(t))}{2f(U(t))}\dee t,
	\]
	with $h(\ell):=2\ell^2\Phi(-\ell\sqrt{\tilde{I}}/2)$ is the speed measure for the diffusion process, $\tilde{I}:=\int \left[\frac{f'(x)}{f(x)}\right]^2f(x)\dee x$, and $\Phi$ being the standard Gaussian cumulative density function.
\end{proposition}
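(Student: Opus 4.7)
The plan is to prove the weak convergence via the standard generator-convergence-plus-tightness route. Let $G$ denote the infinitesimal generator of the Langevin SDE \eqref{equ_SDE}, so that
\[
G V(u) = \tfrac{1}{2} h(\ell) V''(u) + \tfrac{1}{2} h(\ell) \tfrac{f'(u)}{f(u)} V'(u),
\]
acting on smooth compactly supported $V$. The generator of the sped-up Markov chain $U^d$, acting on $V$, is
\[
G_d V(x_1) = d \,\EE\!\oF{\of{V(x_1 + \sigma_d Z_1) - V(x_1)} \,\alpha^d(x^d, Y^d) \,\Big|\, X^d = x^d},
\]
where $Z_1 \sim \normal(0,1)$, $Y^d = x^d + \sigma_d Z$, and $\alpha^d = 1 \wedge \pi^d(Y^d)/\pi^d(x^d)$. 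The goal is to show $G_d V \to G V$ in an appropriate $L^2(\pi^d)$ sense, and then combine this with tightness.

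The main calculation is to Taylor expand $V(x_1 + \sigma_d Z_1) - V(x_1) = \sigma_d Z_1 V'(x_1) + \tfrac{\sigma_d^2}{2} Z_1^2 V''(x_1) + O(\sigma_d^3)$, and to analyze the acceptance ratio by factoring
\[
\frac{\pi^d(Y^d)}{\pi^d(x^d)} = \frac{f(Y_1)}{f(x_1)} \cdot R^d, \qquad R^d = \prod_{i=2}^d \frac{f(x_i + \sigma_d Z_i)}{f(x_i)}.
\]
Taking logs and expanding, $\log R^d = \sigma_d \sum_{i=2}^d Z_i \tfrac{f'(x_i)}{f(x_i)} + \tfrac{\sigma_d^2}{2} \sum_{i=2}^d Z_i^2 \tfrac{d^2}{dx^2}\log f(x_i) + \text{remainder}$. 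Under stationarity, $x_i \distiid f$, and the moment conditions \eqref{RGG_smooth_condition} ensure that the central limit theorem applies to the first sum and the law of large numbers to the second; the remainder is controlled by the Lipschitz assumption on $f'/f$. Using the identity $\EE\oF{\sigma_d^2 Z_i^2 \tfrac{d^2}{dx^2}\log f(x_i)} = -\sigma_d^2 \tilde I$ (via an integration by parts that uses $\int f'(x)\,\dee x = 0$), one shows that $\log R^d$ is asymptotically $\normal(-\ell^2 \tilde I / 2, \ell^2 \tilde I)$.

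With this in hand, the two summands of $G_d V$ are analyzed separately. For the quadratic term, $d \sigma_d^2 \to \ell^2$ and $\EE[Z_1^2 \alpha^d] \to \EE[1 \wedge e^W]$ where $W \sim \normal(-\ell^2 \tilde I/2, \ell^2 \tilde I)$; the classical identity $\EE[1 \wedge e^W] = 2 \Phi(-\ell \sqrt{\tilde I}/2) = h(\ell)/\ell^2$ yields the diffusion coefficient. The drift term is more delicate: the bare quantity $d \sigma_d \EE[Z_1] = 0$, so the drift arises entirely from the correlation between $Z_1$ and $\alpha^d$ induced by the $f(Y_1)/f(x_1)$ factor. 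A further expansion $f(Y_1)/f(x_1) = 1 + \sigma_d Z_1 (f'/f)(x_1) + O(\sigma_d^2)$, combined with the smoothness of the map $r \mapsto 1 \wedge r\, R^d$ (in an $L^1$ sense after averaging), produces a covariance term whose leading behaviour is $\tfrac{1}{2}\,(f'/f)(x_1)\, h(\ell)\, V'(x_1)$, matching the drift in $G$.

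The hard part, as always in this framework, will be controlling the error uniformly enough to pass to the limit: one needs $L^p$ bounds on the remainders in the two Taylor expansions and on the CLT approximation of $\log R^d$, and these require precisely the moment hypotheses \eqref{RGG_smooth_condition} together with Lipschitz continuity of $f'/f$ to handle the worst-case behaviour near where $f$ is small. Once $G_d V \to G V$ in $L^2(\pi^d)$ for a core of test functions, tightness of $\{U^d\}$ on Skorokhod space follows from standard arguments using Aldous--Rebolledo-type criteria applied to $V(U^d(t)) - \int_0^t G_d V(U^d(s))\,\dee s$, and identification of limits via the martingale problem for $G$ (which has a unique weak solution under the stated smoothness hypotheses) completes the proof.
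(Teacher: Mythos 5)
Your proposal is correct and follows essentially the same route as the source of this result: the paper does not prove \cref{lemma_RGG} itself but imports it from Roberts, Gelman and Gilks (1997), whose argument is precisely the generator-convergence scheme you describe (Taylor expansion of the log-acceptance ratio, CLT/LLN for $\log R^d$ under stationarity, the identity $\EE[1\wedge e^W]=2\Phi(-\ell\sqrt{\tilde I}/2)$, and identification via the martingale problem), and the same framework is reused in the paper's own proof of \cref{thm_diffusion}. The only quibble is a slip in your integration-by-parts step: the identity $\EE\bigl[\tfrac{\dee^2}{\dee x^2}\log f(X)\bigr]=-\tilde I$ rests on $\int f''(x)\,\dee x=0$ (i.e.\ $f'$ vanishing at infinity), not on $\int f'(x)\,\dee x=0$.
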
	

This weak convergence result leads to the interpretation
that, started in stationarity and applied to target measures of the i.i.d.\ form, the
RWM algorithm will take on the order of $d$ steps to explore the invariant measure.
Furthermore, it may be shown that the value of $\ell$ which maximizes the speed measure $h(\ell)$ and,
therefore, maximizes the speed of convergence of the limiting diffusion, leads to a
universal acceptance probability, for the RWM algorithm applied
to targets of i.i.d. forms, of approximately $0.234$.
\cref{lemma_RGG} is proved in \cite{Roberts1997} using the generator approach \cite{Ethier1986}. The same method of proof has also been applied to derive optimal scaling results for other types of MCMC algorithms: for example, the convergence of MALA to diffusion limits when $\sigma^2_d=\ell^2/d^{1/3}$ (see e.g. \cite{Roberts1998,Roberts2001,Breyer2004,Christensen2005,Neal2006}) with  asymptotically optimal acceptance rate $0.574$.

\subsection{Optimal Scaling by maximizing ESJD}
Another popular technique to prove optimal scaling is by maximizing expected squared jumping distance (ESJD) \cite{Pasarica2010,Atchade2011,Roberts2014}, which is defined as follows.
\begin{definition}(Expected Squared Jumping Distance)\label{def_ESJD}
	\[
	\ESJD(d):=
	&\EE_{X^d\sim\pi^d}\EE_{Y^d}\left[\|Y^d-X^d\|^2\left(1\wedge \frac{\pi^d(Y^d)}{\pi^d(X^d)}\right)\right]
	\]
	where the expectation over $Y^d$ is taken for $Y^d\sim\mathcal{N}(x^d,\frac{\ell^2}{d-1}I)$ for given $X^d=x^d$, and $\|\cdot\|$ denotes the Euclidean distance, i.e. $\|Y^d-X^d\|^2=\sum_{i=1}^d (Y_i-X_i)^2$.
\end{definition}

	Choosing a proposal variance to maximize ESJD  is equivalent to minimizing the first-order auto-correlation of the Markov chain, and thus maximizing the efficiency if the higher order auto-correlations are monotonically increasing with respect to the first-order auto-correlation \cite{Pasarica2010}. Furthermore, if weak convergence to a diffusion limit is established, then the ESJD converges to the quadratic variance of the diffusion limit. This suggests that maximizing the ESJD is a reasonable problem.
		For example, \citet{Atchade2011} considered to maximize the ESJD to choose optimal temperature spacings for Metropolis-coupled Markov chain Monte Carlo and simulated tempering algorithms.
	Later, \citet{Roberts2014} proved a diffusion limit for the simulated tempering algorithms. Using a new comparison of asymptotic variance of diffusions, \citet{Roberts2014} showed the results in the choice of temperatures in \cite{Atchade2011} does indeed minimize the asymptotic variance of all functionals. {Another example is the optimal scaling result for HMC, with asymptotically optimal acceptance rate $0.651$ when $\sigma^2_d=\ell^2/d^{1/4}$ \cite{Beskos2013}, is proven by maximizing the ESJD.}
	
	Although establishing weak convergence of diffusion limits gives stronger guarantee than maximizing ESJD, the price to pay is to require stronger conditions on the target distribution. Maximizing ESJD instead can lead to (much) weaker conditions on the target distribution. Later in this paper, we will show that we are able to relax the restrictive product i.i.d.\ condition on the target distribution for both cases. In particular, the new sufficient conditions on the target distribution for maximizing ESJD are weak enough to allow target distributions arising from realistic MCMC models.

\subsection{Background on Complexity Bounds}

Because of the big data world, in recent years, there is much interest in the ``large $d$, large $n$'' or ``large $d$, small $n$'' high-dimensional regime, where $d$ is the number of parameters and $n$ is the sample size.  \citet{Rajaratnam2015} use the term convergence complexity to denote
the ability of a high-dimensional MCMC scheme to draw samples from the posterior, and how the ability to do so changes as the dimension of the parameter set grows. This requires the study of computer-science-style complexity bounds \cite{Cobham1965,Cook1971} in terms of running time complexity order as the ``size'' of the problem goes to infinity.
In the Markov chain context, computer scientists have been bounding convergence times of Markov chain algorithms focusing largely on spectral gap bounds for Markov chains \cite{Sinclair1989,Lovasz2003,Vempala2005,Lovasz2006,Woodard2009,Woodard2009a}. In contrast, statisticians usually study total variation distance or other metric for MCMC algorithms. 
In order to bridge the gap between statistics-style convergence bounds, and computer-science-style complexity results, in one direction, \citet{Yang2017} recently show that complexity bounds for MCMC can be obtained by quantitative bounds using a modified drift-and-minorization approach. In another direction, \citet{Roberts2016} connect existing results on diffusion limits of MCMC algorithm to the computer science notion of algorithm complexity. The main result in \cite{Roberts2016} states that any weak limit of a Markov process implies a corresponding complexity bound in an appropriate metric. More specifically, \citet{Roberts2016} connect the diffusion limits to complexity bound using the Wasserstein metric. Let $(\mathcal{X},\mathcal{F},\rho)$ be a general measurable metric space, the distance of a stochastic process $\{X(t)\}$ on $(\mathcal{X},\mathcal{F})$ to its stationary distribution $\pi$ is defined by the KR distance
\[\label{KR_distance}
\|\mathcal{L}_x(X(t))-\pi\|_{\textrm{KR}}:=\sup_{g\in \textrm{Lip}_1^1}|\EE[g(X(t))]-\pi(g)|
\]
where {$\mathcal{L}_x(X(t))$ denotes the law of $X(t)$ conditional on starting at $X(0)=x$}, $\pi(g):=\int g(x)\pi(\dee x)$ is the expected value of $g$ with respect to $\pi$, `KR' stands for `Kantorovich--Rubinstein', and $\textrm{Lip}_1^1$ is the set of all functions $g$ from $\mathcal{X}$ to $\Reals$ with Lipschitz constant no larger than $1$ and with $|g(x)|\le 1$ for all $x\in\mathcal{X}$, i.e.
\[
\textrm{Lip}_1^1:=\{g: \mathcal{X}\to\Reals, |g(x)-g(y)|\le \rho(x,y), \forall x,y\in\mathcal{X}, |g|\le 1  \}.
\]
Note that the KR distance defined in \cref{KR_distance} is exactly the $1$-st Wasserstein metric. Then it can be shown that the $\pi$-average of the KR distance to stationarity from all initial states $X(0)$ in $\mathcal{X}$ is non-increasing, which leads to the following complexity linking proposition.
\begin{proposition}{\cite[Theorem 1]{Roberts2016}}\label{lemma_RR16}
	Let $X^d=\{X^d(t), t\ge 0\}$ be a stochastic process on $(\mathcal{X},\mathcal{F},\rho)$, for each $d\in \mathbb{N}$. Suppose $X^d$ converges weakly in the Skorokhod topology as $d\to\infty$ to a c\`adl\`ag process $X^{\infty}$. Assume these processes all have the same stationary distribution $\pi$ and that $X^{\infty}$ converges weakly
	to $\pi$. Then for any $\epsilon>0$, there are $D<\infty$ and $T<\infty$ such that
	\[
	\EE_{X^d(0)\sim \pi}\|\mathcal{L}_{X^d(0)}(X^d(t))-\pi\|_{\textrm{KR}}<\epsilon,\quad \forall t\ge T, d\ge D.
	\]	
\end{proposition}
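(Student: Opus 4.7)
The plan is to prove the result by combining three ingredients: a monotonicity property of the $\pi$-averaged KR distance, the convergence to equilibrium for $X^{\infty}$, and the Skorokhod weak convergence $X^d\Rightarrow X^{\infty}$, used to transfer smallness of the limit back to large but finite $d$. Write $F_d(t)\defas\EE_{X^d(0)\sim\pi}\|\mathcal{L}_{X^d(0)}(X^d(t))-\pi\|_{\textrm{KR}}$ and similarly $F_\infty$. First I would establish that $t\mapsto F_d(t)$ is non-increasing for each $d$ (including $d=\infty$). The argument uses Jensen's inequality together with stationarity: writing $P^d_t$ for the time-$t$ Markov semigroup and using the semigroup property, for any $g\in\textrm{Lip}_1^1$ and $0\le s\le t$,
\[
|P^d_t g(x)-\pi(g)|=|\EE_x[P^d_{t-s}g(X^d(s))-\pi(g)]|\le \EE_x|P^d_{t-s}g(X^d(s))-\pi(g)|.
\]
Taking the supremum over $g$ inside the expectation, integrating against $\pi$, and invoking stationarity $X^d(s)\sim\pi$, one concludes $F_d(t)\le F_d(t-s)$.

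Given monotonicity, I would use the assumption that $X^{\infty}$ converges weakly to $\pi$ to pick a time $T$ with $F_\infty(T)<\epsilon/2$; picking $T$ outside the at most countable set of discontinuity times of the one-dimensional laws of $X^{\infty}$ ensures that the Skorokhod convergence $X^d\Rightarrow X^{\infty}$ passes to convergence of the joint distributions $(X^d(0),X^d(T))\Rightarrow(X^{\infty}(0),X^{\infty}(T))$. The crucial step is then to deduce $F_d(T)\to F_\infty(T)$ from this joint convergence. Writing $F_d(T)=\int h_d(x)\,\pi(dx)$ with $h_d(x)\defas\|\mathcal{L}_x(X^d(T))-\pi\|_{\textrm{KR}}$, I would attempt to show $h_d(x)\to h_\infty(x)$ for $\pi$-a.e.\ $x$ via a Skorokhod representation for the joint laws together with a countable dense approximation of $\textrm{Lip}_1^1$ (valid whenever $\mathcal{X}$ is separable), and then conclude via bounded convergence since $0\le h_d\le 2$.

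Combining the pieces: choose $D$ so that $|F_d(T)-F_\infty(T)|<\epsilon/2$ for all $d\ge D$; then monotonicity gives $F_d(t)\le F_d(T)<F_\infty(T)+\epsilon/2<\epsilon$ for all $d\ge D$ and $t\ge T$. The main obstacle is the third step above: $F_d(T)$ is the $\pi$-expectation of a supremum over the function class $\textrm{Lip}_1^1$ of a \emph{conditional} expectation, and weak convergence of the joint laws $(X^d(0),X^d(T))$ does not by itself yield convergence of such conditional-sup functionals. A clean argument will have to justify passing to the limit through both the supremum over $\textrm{Lip}_1^1$ and the conditioning on $X^d(0)$, most likely by separable approximation of the test-function class together with a dominated convergence argument against the common marginal $\pi$.
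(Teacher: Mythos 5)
First, note that the paper itself does not prove this proposition: it is quoted verbatim from \cite[Theorem 1]{Roberts2016}, so there is no in-paper argument to compare against. Judged on its own terms, your skeleton --- (i) monotonicity of the $\pi$-averaged KR distance in $t$, (ii) choice of $T$ from the mixing of the limit process, (iii) transfer from $d=\infty$ back to large finite $d$ --- is exactly the architecture of the cited proof, and your step (i) is correct: the Jensen/semigroup computation $h_d^t(x)\le \EE_x\bigl[h_d^{t-s}(X^d(s))\bigr]$ followed by integration against the stationary $\pi$ is the standard argument (it does use the Markov property and time-homogeneity, which are implicit in the statement).

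The genuine gap is the one you flag yourself, in step (iii), and the repair you sketch would not close it. Weak convergence of the joint laws $(X^d(0),X^d(T))\Rightarrow(X^\infty(0),X^\infty(T))$ gives no control over the regular conditional distributions $\mathcal{L}_x(X^d(T))$: a Skorokhod representation of the joint laws produces almost-sure convergence of the \emph{pairs}, not of the conditional laws given $X^d(0)=x$, and $h_d(x)\to h_\infty(x)$ $\pi$-a.e.\ can simply fail (think of joint laws converging to a product measure while each conditional law $\mathcal{L}_x(X^d(T))$ remains a point mass at a rapidly oscillating function of $x$; then $h_\infty\equiv 0$ but $h_d$ is bounded away from $0$). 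What joint weak convergence with common first marginal $\pi$ gives for free is only \emph{lower} semicontinuity, $\liminf_d F_d(T)\ge F_\infty(T)$, since $F_d(T)$ is a supremum over test functions $G(x,y)$ that are Lipschitz and bounded in $y$; the proof needs the opposite inequality $\limsup_d F_d(T)\le F_\infty(T)+\epsilon/2$, and that direction requires genuinely more input than the joint law at the single time $T$ --- for instance couplings built via the dual (transport) form of the KR distance, or additional Feller/continuity structure of the chains. So your proposal is an honest and correctly structured outline whose central analytic step is missing, and the specific route you propose for that step is not viable as stated.
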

\cref{lemma_RR16} allows us to bound the convergence of the sequence of processes uniformly over all sufficiently large $d$, if the sequence of Markov processes converges weakly to a limiting ergodic process. Combining \cref{lemma_RR16} with previously-known MCMC diffusion limit results, \citet{Roberts2016} prove that the RWM algorithm in $d$ dimensions takes $\bigO(d)$ iterations to converge to stationarity. However, in \cite{Roberts2016}, the target distribution needs to be product i.i.d.\ with density satisfies all the assumptions of \cref{lemma_RGG}. Furthermore, the condition \cref{RGG_smooth_condition} is replaced by a stronger condition
\[\label{RR16_smooth_condition}
\int\left[\frac{f'(x)}{f(x)}\right]^{12}f(x)\dee x<\infty,\quad \int \left[\frac{f''(x)}{f(x)}\right]^6 f(x)\dee x<\infty.
\]

\section{Main Results}	\label{section_main_results}

In this section, we show our main results on optimal scaling of RWM algorithms on general target distributions.
We first consider optimal scaling by maximizing ESJD in \cref{subsection_ESJD}. We show asymptotic form of the ESJD in \cref{key_theorem} under very mild conditions on the target distribution. Then we show in \cref{thm_maximize_lower_bound} that if we directly maximize the asymptotic ESJD, we can obtain $0.234$ as an upper bound of the asymptotically optimal acceptance rate. Next, we show the acceptance rate $0.234$ is asymptotically optimal under one more weak law of large number (WLLN) condition on the target distribution in \cref{thm_ESJD}. {In order to give the reader a brief idea that to what extend the class of target distributions can be enlarged. We first present an example of a non-product non-i.i.d.~ class of distributions, which is a straightforward corollary of our main result in \cref{thm_ESJD}. Note that our main result includes much more general class of distributions that this simple example.} {Recall that a (probabilistic) graphical model is a family of probability distributions defined in terms of a directed or undirected graph \cite{Jordan2004}. Suppose that the statistical model can be represented as a graphical model, then we have the following corollary.}
\begin{corollary}{(A Simple Corollary of \cref{thm_ESJD})	
If the following three conditions hold, $0.234$ is indeed the asymptotic acceptance rate: (i) in the graph representation, each node of the graph has at most $o(d^{1/4})$ links; (ii) the target density $\pi^d$ is bounded and $\log\pi^d$ has up to the third bounded partial derivatives; (iii) for $X^d\sim\pi^d$,    $\frac{1}{d}\sum_{i=1}^d\left(\frac{\partial}{\partial x_i}\log \pi^d(X^d)\right)^2$ converges to a positive constant as $d\to\infty$.}
\end{corollary}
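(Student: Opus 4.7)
The plan is to derive this corollary directly from \cref{thm_ESJD} by verifying that the three stated conditions imply the five assumptions A1--A5. As the introduction indicates, those assumptions are organized into three groups: dependence (A1 and A3), smoothness (A2 and A4), and a weak law of large numbers for a roughness functional (A5). The corollary's conditions (i), (ii), (iii) are tailor-made to match these three groups respectively, so the proof reduces to checking each matching.

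For (i) $\Rightarrow$ A1 and A3, I would use the Markov property of the graphical model: conditional on its neighbours, each coordinate is independent of the remaining coordinates, so the ``strongly coupled'' set of a given coordinate $i$ (in the sense used by A1/A3) is contained in its neighbourhood in the graph. The maximum-degree bound $o(d^{1/4})$ then feeds directly into whatever quantitative sparsity A1/A3 require. For (ii) $\Rightarrow$ A2 and A4, I would note that the proof of \cref{key_theorem} only uses a uniform third-order Taylor expansion of $\log\pi^d$ around the current state $X^d$, evaluated at a Gaussian increment of scale $\sigma_d=\ell/\sqrt{d-1}$. Uniform bounds on the first three partial derivatives of $\log\pi^d$, combined with the boundedness of $\pi^d$, give integrable bounds on all Taylor terms and their remainders under $\pi^d$-expectation, which is precisely what A2 and A4 demand. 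For (iii) $\Rightarrow$ A5, the quantity displayed in the corollary is exactly the empirical Fisher-information-like roughness $\tilde I_d := \frac{1}{d}\sum_{i=1}^d (\partial_i \log \pi^d(X^d))^2$ whose concentration to a positive constant $\tilde I$ is assumption A5, so nothing further is needed.

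Once A1--A5 are verified, \cref{thm_ESJD} applies and yields that the asymptotically optimal acceptance rate is $0.234$, completing the proof.

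The main obstacle, I expect, is the first step: converting the informal statement ``each node has at most $o(d^{1/4})$ links'' into the precise quantitative dependence bounds used by A1 and A3. The exponent $1/4$ is not arbitrary; it should arise from the competition between (a) the $d^{-3/2}$ size of the cubic Taylor remainder in the coordinatewise expansion of $\log(\pi^d(Y^d)/\pi^d(X^d))$, (b) the $d$ coordinates being summed, and (c) the variance of that sum, which under locally-dependent structure grows like $d \cdot (\text{degree})$, so that after centring and rescaling the required smallness forces the maximum degree to be $o(d^{1/4})$. Making this covariance argument precise via the Markov random-field property, so that the graph-neighbourhood sparsity translates cleanly into the form demanded by A1/A3, is where the real work lies; everything else is bookkeeping.
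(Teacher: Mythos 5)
Your overall strategy --- verify \ref{assumption_A1}--\ref{assumption_A5} and invoke \cref{thm_ESJD} --- is exactly the paper's route (it records the same checklist in the corollary following \cref{key_theorem}), and your treatment of (i)$\Rightarrow$\ref{assumption_A1} and (iii)$\Rightarrow$\ref{assumption_A5} is sound. But two of your intermediate attributions do not work as stated. The more serious one concerns \ref{assumption_A4}: its third requirement, $\sup_{x^d\in F_d^+}1/I_d(x^d)=\bigO(d^{\alpha/2})$, is a \emph{lower} bound on the roughness $I_d$ over the typical set, and no upper bounds on $\pi^d$ or on the derivatives of $\log\pi^d$ can supply it --- a nearly flat density satisfies your condition (ii) and violates it. The paper obtains it from the tightness of $I_d(X^d)$, which here is furnished by condition (iii): convergence in probability to a positive constant lets one choose $F_d$ with $\pi^d(F_d)\to1$ on which $I_d$ is bounded away from $0$, and the passage from $F_d$ to $F_d^+$ costs only $\bigO(\sqrt{\log d/d})$ per coordinate (\cref{remark_A4}). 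So (iii) must do double duty in your argument, not just deliver \ref{assumption_A5}.

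Second, \ref{assumption_A2} is not a statement about Taylor remainders of $\log\pi^d$ and is not delivered by integrable derivative bounds. For a graphical model with $j\notin\mathcal{H}_i$ the conditional density factorizes, $\pi_{i,j|-i-j}(x_i,x_j)=f(x_i)\,g(x_j)$, so the integrand collapses to $f''(x_i)g''(x_j)$ and the integral equals $\bigl(\int f''\bigr)\bigl(\int g''\bigr)=0$ once the conditionals have flat tails (\cref{remark_A2}); the operative mechanism is conditional independence, not smoothness. (Flat tails do follow from (ii) for an integrable bounded density with bounded $(\log\pi^d)'$, so the hypotheses suffice, but the argument is different from the one you give.) Finally, your heuristic for the exponent $1/4$ lands in the right place but is not the paper's mechanism: the binding constraint is the last line of \ref{assumption_A3}, which sums $\sup\bigl|\partial^3\log\pi^d/\partial x_i\partial x_j\partial x_k\bigr|$ over distinct triples; for a graphical model these vanish off the $3$-cliques, a graph of maximum degree $l_d$ has at most $\bigO(dl_d^2)$ of those, and $dl_d^2=o(d^{3/2})$ forces $l_d=o(d^{1/4})$. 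With these repairs your proof coincides with the paper's.
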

In \cref{subsection_diffusion}, we consider optimal scaling via diffusion limits. We prove the new conditions for weak convergence to diffusion limits in \cref{thm_diffusion}. We then strengthen this result to consider fixed starting state in \cref{thm_diffusion2}. 
Finally, in \cref{subsection_complexity_bounds}, we apply our new result on diffusion limits with fixed starting state to obtain complexity bounds for the RMW algorithm, which is given in \cref{corollary_complexity}.

Before presenting our main results, we first define a sequence of ``sets of typical states''. 
\begin{definition}
	We call $\{F_d\}$ a sequence of ``sets of typical states'' 
	{if $\pi^d(F_d)\to 1$.}
\end{definition}
Next, we enlarge $\{F_d\}$ in different ways, which will be used later for the new conditions on the target.
\begin{definition}
	For a given sequence of ``sets of typical states'' $\{F_d\}$, we define
	\[
	F_d^{(i)}:=\{(x_1,\dots,x_{i-1},y,x_{i+1},\dots,x_d): \exists (x_1,\dots,x_d)\in F_d, \textrm{such that } |y-x_i|<\sqrt{\log d/d} \}.
	\]
	Furthermore, we define $
	F_d^+:=\bigcup_{i=1}^d F_d^{(i)}$.
\end{definition}

\begin{remark}
It is clear from the definitions that $F_d^{(i)}$ is to enlarge the $i$-th coordinate of $x^d\in F_d$ by covering it with an open interval $(x_i-\sqrt{\log d/d}, x_i+\sqrt{\log d/d})$; $F_d^+$ is the union of $F_d^{(i)}, i=1,\dots,d$. 
Then clearly we have $F_d\subseteq F_d^{(i)}\subseteq F_d^+$. 
In practice, the difference between $F_d^+$ and $F_d$ is usually asymptotically ignorable in high dimensions.
\end{remark}

Finally, we introduce the idea of ``neighborhoods'' of a coordinate, which is later used to capture the correlation among different coordinates. We {use $\mathcal{H}_i$ to denote a collection of coordinates} which are {called ``neighborhoods'' of coordinate $i$. That is, 
 $\mathcal{H}_i\subseteq \{1,\dots,d\}$.} {We also assume $i\in \mathcal{H}_i$.}
 Although the definition of the set $\mathcal{H}_i$ is quite arbitrary, we expect that $j\in \mathcal{H}_i$ implies the coordinates $i$ and $j$ are correlated even conditional on all other coordinates. This idea of ``neighborhoods'' become clearer if the target distribution comes from a model which can be written as {a probabilistic graphical model \cite{Jordan2004}. For a graphical model, it is convenient to define the ``neighborhood'' $j\in \mathcal{H}_i$ if there is an edge between nodes $i$ and $j$. In this definition, clearly $j\notin \mathcal{H}_i$ implies that the two coordinates $i$ and $j$ are conditional independent given all the other $d-2$ coordinates.}

\subsection{Optimal Scaling for Maximizing ESJD}\label{subsection_ESJD}

Suppose $\{F_d\}$ is a sequence of ``sets of typical states'' and $\{\mathcal{H}_i\}$ are collections of ``neighborhoods'' for each coordinate. Throughout the paper, we assume $\sup_{i \in \{1,\dots,d\}}|\mathcal{H}_i|< l_d$ where $l_d=o(d)$.

\begin{remark}
	For graphical models, if we define $\mathcal{H}_i$ as the collection of nodes that is directly connected to $i$ by an edge, then $l_d=o(d)$ rules out ``dense graphs'' for which $l_d\propto d$.
\end{remark}

Now we introduce the first assumption \ref{assumption_A1} on the target $\pi^d$.
\begin{equation}
\label{assumption_A1}\tag{A1}
\begin{split}
&\sup_{(i,j): j\notin \mathcal{H}_i}\sup_{x^d\in F_d^+}\frac{\partial^2 \log \pi^d(x^d)}{\partial x_i \partial x_j}=o(1),\quad
\sup_{(i,j):j\in\mathcal{H}_i} \sup_{x^d\in F_d^+} \frac{\partial^2 \log \pi^d(x^d)}{\partial x_i \partial x_j}=o(\sqrt{d/l_d}).
\end{split}
\end{equation}
\begin{remark}\label{remark_A1}
	For graphical models, if node $i$ is not directly connected to node $j$, we always have $\frac{\partial^2 \log \pi^d(x^d)}{\partial x_i \partial x_j}=0$. Therefore, in order to make \ref{assumption_A1} hold, it suffices to check for each edge of the graph, say $(i,j)$, that $\frac{\partial^2 \log \pi^d(x^d)}{\partial x_i \partial x_j}=o(\sqrt{d/l_d})$. Since we have assumed $l_d=o(d)$, this is a very weak condition. For example, \ref{assumption_A1} holds for all graphical models with bounded second partial derivatives.
\end{remark}

Next, we {denote the conditional density of the $i$-th and $j$-th coordinates, given all the other coordinates fixed, by $\pi_{i,j|-i-j}:= \pi^d(x_i,x_j\,|x_{-i-j})$} where $x_{-i-j}$ with $i<j$ denotes all coordinates of $x^d$ other than the $i$-th, and $j$-th coordinates, i.e. $$x_{-i-j}:=(x_1, \dots, x_{i-1}, x_{i+1}, \dots, x_{j-1}, x_{j+1}, \dots, x_d).$$ 
{Note that $\pi_{i,j|-i-j}$ is a probability measure in $\Reals^2$.}
Then we introduce the next assumption \ref{assumption_A2} on the target as follows.
\[\label{assumption_A2}\tag{A2}
&\sup_{(i,j): j\notin \mathcal{H}_i}\sup_{\{x_{-i-j}: x^d\in F_d\}}\int \frac{\partial^2 \pi_{i,j|-i-j}}{\partial x_i^2} \frac{\partial^2 \pi_{i,j|-i-j}}{\partial x_j^2}\frac{1}{\pi_{i,j|-i-j}}\dee x_i\dee x_j=o(1).
\]
\begin{remark}\label{remark_A2}
	The assumption \ref{assumption_A2} is very weak, since it is only to require that the target has a ``flat tail''. To see this, consider the target distribution $\pi^d$ has the special i.i.d.\ product form of \cref{eq_iid_product_target}, then \ref{assumption_A2} reduces to
	\[
	\int \frac{\partial^2 f(x_i)f(x_j)}{\partial x_i^2}\frac{\partial^2 f(x_i)f(x_j)}{\partial x_j^2}\frac{1}{f(x_i)f(x_j)}\dee x_i \dee x_j=\left(\int \frac{\dee^2 f(x)}{\dee x^2}\dee x\right)^2=0,
	\]
	when $f$ has a ``flat tail'' so that $\frac{\dee f(x)}{\dee x}\to 0$ when $|x|\to\infty$. Similarly, for graphical models, if there is no edge between $i$ and $j$, then when $\pi^d$ has ``flat tail'' we have $\int \frac{\partial^2 \pi_{i,j|-i-j}}{\partial x_i^2} \frac{\partial^2 \pi_{i,j|-i-j}}{\partial x_j^2}\frac{1}{\pi_{i,j|-i-j}}\dee x_i\dee x_j=0$.
\end{remark}

The next assumption is about conditions on the third partial derivatives.
	\begin{equation}
\label{assumption_A3}
\tag{A3}
\begin{split}
&\sup_{(i,j): j\notin \mathcal{H}_i}\sup_{x^d \in {\Reals^d}}\frac{\partial^3 \log \pi^d(x^d)}{\partial x_i^2 \partial x_j}=o(1),\quad\sup_{(i,j): j\in \mathcal{H}_i}\sup_{x^d \in {\Reals^d}}\frac{\partial^3 \log \pi^d(x^d)}{\partial x_i^2 \partial x_j}=o(d/l_d), \\ &{\sup_i \sup_{x^d \in \Reals^d} \frac{\partial^3 \log \pi^d(x^d)}{\partial x_i^3}=o(d^{1/2}),}\quad
{\sum_{i\neq j\neq k}\left(\sup_{x^d\in {\Reals^d}}\left|\frac{\partial^3 \log \pi^d(x^d)}{\partial x_i\partial x_j \partial x_k}\right|\right)=o(d^{3/2}).}
\end{split}
\end{equation}
\begin{remark}\label{remark_A3}
	We consider graphical models that satisfy \ref{assumption_A3}. The first three equations of \ref{assumption_A3} are similar to \ref{assumption_A1} and they hold for all graphical models with bounded third partial derivatives. {Recall that, in graph theory, a $n$-clique of a graph is a fully-connected subset of nodes of the graph with cardinality $n$. The last equation of \ref{assumption_A3} then involves the number of $3$-cliques in the graph.} Note that for many realistic hierarchical models, there are no $3$-cliques for the corresponding graphs, which implies $\sum_{i\neq j\neq k}\left|\frac{\partial^3 \log \pi^d(x^d)}{\partial x_i\partial x_j \partial x_k}\right|=0$. Even for the worst case, considering a graph that has $d$ nodes and each has $l_d$ neighbors, since there are $d l_d/2$ links, the number of $3$-cliques is at most ${l_d\choose 2}d/3=\bigO(l_d^2 d)$. Therefore, \ref{assumption_A3} holds for any graphical model with $l_d={o(d^{1/4})}$ and bounded third partial derivatives.
\end{remark}	

The next assumption is the last assumption before our first main result. We first define a quantity which measures the ``roughness'' of $\log \pi^d$.
\[
I_d(x^d):=\frac{1}{d}\sum_{i=1}^d \left(\frac{\partial}{\partial x_i}\log \pi^d(x^d) \right)^2.
\]
Similarly, we can consider $I_d(X^d)$ where $X^d\sim \pi^d$ as a random variable. Later we will see that it turns out that $I_d(X^d)$ is a key quantity for optimal scaling results. Assumption \ref{assumption_A4} is as follows.

There exists $\alpha$ with $0<\alpha<1/2$ such that
\begin{equation}
\label{assumption_A4}
\tag{A4}
\begin{split}
& \sup_{i}\sup_{x^d \in F_d^{(i)}}\frac{\partial \log \pi^d(x^d)}{\partial x_i}=\bigO(d^{\alpha}),\quad
\sup_{x^d \in F_d^+}  \pi^d(x^d)=o(d^{1/2-\alpha}),\quad
\sup_{x^d \in F_d^+} 1/I_d(x^d)=\bigO(d^{\alpha/2}).
\end{split}
\end{equation}

\begin{remark}\label{remark_A4}
	For \ref{assumption_A4}, the first two conditions do not even require $\pi^d$ and the first partial derivative of $\log\pi^d$ to be bounded. Thus, they are quite weak. For the last condition, although the mode of $\pi^d$ is ruled out from $F_d^+$, the condition can hold as long as $\sup_{i}\sup_{x^d \in F_d^{(i)}}\frac{\partial \log \pi^d(x^d)}{\partial x_i}=\bigO(d^{\alpha/2})$ and $I_d(X^d)$ is tight. That is, $\forall  0<\epsilon<1$, there exists $K_{\epsilon}>0$ such that $\Pr(I_d(X^d)>K_{\epsilon})<1-\epsilon)$. To see this, one can choose $F_d$ using the tightness such that $\sup_{x^d\in F_d} 1/I_d(x^d)=\bigO(d^{\alpha/2})$. Then we can replace $F_d$ by $F_d^+$ since $\inf_{x^d\in F_d} I_d(x^d)-\inf_{x^d\in F_d^+} I_d(x^d)=\bigO(d^{\alpha/2}(\log d)^{1/2}d^{-1/2})=o(d^{-1/4})=o(d^{-\alpha/2})$. Note that $I_d(X^d)$ being tight is a very reasonable assumption, since if $I_d(X^d)$ is not tight, the target $\pi^d$ becomes ``flat'' at almost every state $x^d$. 
\end{remark}

We are now ready to present our first main result using the assumptions \ref{assumption_A1}, \ref{assumption_A2}, \ref{assumption_A3}, and  \ref{assumption_A4}. We establish the following results on asymptotic ESJD and asymptotic acceptance rate.

\begin{theorem} (Asymptotic ESJD and acceptance rate)\label{key_theorem}		
	Suppose $\pi^d$ satisfies \ref{assumption_A1}, \ref{assumption_A2}, \ref{assumption_A3}, and \ref{assumption_A4}, then as $d\to \infty$, we have
	\[
	&\left|\ESJD(d)-2\frac{d\ell^2}{d-1}\EE_{X^d\sim \pi^d}\left[ \Phi\left(-\frac{\ell \sqrt{I_d(X^d)}}{2}\right) \right]\right|\to 0,\\
	&\left|\EE_{X^d\sim\pi^d}\EE_{Y^d}\left(1\wedge \frac{\pi^d(Y^d)}{\pi^d(X^d)}\right)-2\EE_{X^d\sim \pi^d}\left[ \Phi\left(-\frac{\ell \sqrt{I_d(X^d)}}{2}\right) \right]\right|\to 0,
	\]
	where the expectation over $Y^d$ is taken for $Y^d\sim \mathcal{N}(x^d,\frac{\ell^2}{d-1}I)$ for given $X^d=x^d$.
\end{theorem}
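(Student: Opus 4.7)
The plan is to reduce the acceptance-rate claim to a tail bound for $R := \log\pi^d(Y^d) - \log\pi^d(X^d)$ via a symmetry identity, and then deduce the ESJD claim from concentration of $\|Y^d - X^d\|^2$. Because the Gaussian proposal is symmetric and $X^d \sim \pi^d$, swapping $X^d \leftrightarrow Y^d$ in the joint density $\pi^d(x)q(x,y)$ yields the fluctuation identity $\EE[g(R)] = \EE[e^R g(-R)]$ for every measurable $g$. Applied to $g(r) = \bone_{r \ge 0}$, together with the decomposition $\min(1,e^R) = \bone_{R \ge 0} + e^R \bone_{R<0}$, this collapses to the clean identity $\EE[\min(1,e^R)] = 2\Pr(R > 0)$. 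So it suffices to show $\Pr(R > 0 \mid X^d = x) \approx \Phi(-\ell\sqrt{I_d(x)}/2)$ uniformly for $\pi^d$-typical $x \in F_d$, and then integrate over $X^d \sim \pi^d$.

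Write $Z := Y^d - X^d \sim \mathcal{N}(0,\sigma^2 I)$ with $\sigma^2 = \ell^2/(d-1)$, and Taylor expand $R = S_1 + S_2 + \mathrm{rem}$ with $S_1 = \sum_i \partial_i \log\pi^d(X^d) Z_i$, $S_2 = \tfrac12 \sum_{i,j} \partial_{ij}\log\pi^d(X^d) Z_i Z_j$, and $\mathrm{rem}$ a cubic-and-higher tail in $Z$. Conditional on $X^d$, $S_1$ is exactly $\mathcal{N}(0,s_d^2(X^d))$ with $s_d^2(x) := \sigma^2 \sum_i(\partial_i\log\pi^d(x))^2 = (d\ell^2/(d-1)) I_d(x)$, giving the correct variance. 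Since $\EE[S_1 \mid X^d] = 0$ and odd Gaussian moments kill the leading cubic, the conditional mean of $R$ is driven by $\EE[S_2 \mid X^d] = (\sigma^2/2)\sum_i \partial_{ii}\log\pi^d(X^d)$. To obtain $R \mid X^d \approx \mathcal{N}(-s_d^2(X^d)/2,\,s_d^2(X^d))$ one therefore needs the pointwise Fisher-type concentration $\sum_i \partial_{ii}\log\pi^d(X^d) \approx -\sum_i(\partial_i\log\pi^d(X^d))^2$ in $\pi^d$-probability.

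This is the heart of the argument. Using the pointwise identity $\partial_{ii}\log\pi^d + (\partial_i\log\pi^d)^2 = \partial_{ii}\pi^d/\pi^d$, the quantity to control is $\sum_i \partial_{ii}\pi^d(X^d)/\pi^d(X^d)$. Integration by parts (the flat-tail behavior built into A2 and A4) gives $\EE_{\pi^d}[\sum_i \partial_{ii}\pi^d/\pi^d] = 0$. For its variance under $\pi^d$, expand as a double sum of covariances and observe that, after conditioning on $x_{-i-j}$, the $(i,j)$ covariance reduces to exactly the integrand $\int (\partial_i^2 \pi_{i,j|-i-j})(\partial_j^2 \pi_{i,j|-i-j})/\pi_{i,j|-i-j}\,dx_i dx_j$ appearing on the left-hand side of A2. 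The $\lesssim d^2$ non-neighbor pairs each contribute $o(1)$ by A2, and the at most $d\,l_d$ neighbor pairs are controlled using A1 together with the bounds on $\partial_i \log\pi^d$ and $\pi^d$ from A4, so the total variance is $o(d^2)$ since $l_d = o(d)$. Chebyshev then delivers $\sum_i \partial_{ii}\pi^d(X^d)/\pi^d(X^d) = o_{\pi^d}(d)$, and hence $\EE[S_2 \mid X^d] = -s_d^2(X^d)/2 + o_{\pi^d}(1)$.

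The remaining ingredients are relatively standard once the dependencies are controlled. The fluctuation $\var(S_2 \mid X^d) = \tfrac12 \sigma^4 \sum_{i,j}(\partial_{ij}\log\pi^d)^2$ is $o(1)$ by A1 (splitting into diagonal, neighbor, and non-neighbor blocks and multiplying by $\sigma^4 = \bigO(d^{-2})$); the cubic Taylor remainder is controlled by A3, whose final ``$3$-clique'' bound $\sum_{i\ne j\ne k}\sup|\partial_{ijk}\log\pi^d| = o(d^{3/2})$ is exactly what kills the fully off-diagonal contribution to its $L^2$ norm, while A4 lets all the suprema be restricted from $\Reals^d$ to $F_d^+$ or $F_d^{(i)}$ at cost $o(1)$ in $\pi^d$-probability. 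Combining, $\Pr(R > 0 \mid X^d = x) = \Phi(-\ell\sqrt{I_d(x)}/2) + o(1)$ uniformly in $x \in F_d$, and integrating gives the acceptance-rate claim. The ESJD statement then follows from $\ESJD(d) = (d\ell^2/(d-1))\EE[\min(1,e^R)] + \EE[(\|Z\|^2 - d\sigma^2)\min(1,e^R)]$ and the elementary bound $\EE[|\|Z\|^2 - d\sigma^2|] \le \sigma^2\sqrt{2d} = \bigO(1/\sqrt d) = o(1)$. The main obstacle is the variance bound of paragraph three: A2 is designed precisely to replace the one-line LLN that the product-i.i.d.\ argument uses for the Fisher identity, and verifying its applicability through the neighborhood decomposition is the real technical content of the proof.
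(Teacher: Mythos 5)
Your proposal isolates the same three technical pillars as the paper's proof: a second‑order Taylor expansion of the log acceptance ratio, the Fisher‑type identity $\partial_{ii}\log\pi^d+(\partial_i\log\pi^d)^2=\partial_{ii}\pi^d/\pi^d$ whose sum over $i$ is controlled in variance by expanding into pairwise covariances and invoking \ref{assumption_A2} for the $\lesssim d^2$ non‑neighbour pairs (this is exactly \cref{lemma_approx3}), \ref{assumption_A1} for the fluctuation of the quadratic term, and \ref{assumption_A3} for the cubic remainder. But you reach the conclusion by a genuinely different route in two places. First, your reduction of the ESJD claim to the acceptance‑rate claim via $\ESJD(d)=d\sigma^2\,\EE[\min(1,e^R)]+\EE[(\|Z\|^2-d\sigma^2)\min(1,e^R)]$ with $\EE\bigl|\|Z\|^2-d\sigma^2\bigr|\le\sigma^2\sqrt{2d}=\bigO(d^{-1/2})$ is simpler than the paper's: the paper decomposes $\ESJD=\sum_i\ESJD_i$ coordinate‑wise and must then control the coupling between $(Y_i-X_i)^2$ and the acceptance probability by Taylor‑expanding $\hat M^{(i)}_{x^d}$ in $Y_i$, which costs an entire extra lemma (\cref{lemma_approx5}, the bound $\dee\hat M^{(i)}_{x^d}/\dee y_i=o(d^{1/2})$); your observation makes that lemma unnecessary for this theorem. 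Second, where you use the reversibility identity $\EE[\min(1,e^R)]=2\Pr(R>0)$ and a conditional CLT for $R$, the paper instead approximates the log‑ratio in $L^1$ by an exactly Gaussian quantity, pushes the error through the $1$‑Lipschitz map $x\mapsto 1\wedge e^x$ (\cref{lemma_Lipschitz_bound}), and evaluates $\EE[1\wedge e^Z]$ for Gaussian $Z$ in closed form via \cref{lemma_Lipschitz_expectation}, which at $\mu=-\sigma^2/2$ gives $2\Phi(-\sigma/2)$ directly.

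The one step you should not wave through is the passage from $L^1$ closeness of $R$ to its Gaussian surrogate $G\sim\mathcal N(-s_d^2(x)/2,s_d^2(x))$ to closeness of $\Pr(R>0\mid X^d=x)$ to $\Phi(-\ell\sqrt{I_d(x)}/2)$. The indicator $\bone_{r>0}$ is not Lipschitz, so $\EE|R-G|=o(1)$ only yields $|\Pr(R>0)-\Pr(G>0)|\lesssim\inf_{\epsilon>0}\{\Pr(|G|\le\epsilon)+\EE|R-G|/\epsilon\}\lesssim\sqrt{\EE|R-G|/s_d(x)}$. Assumption \ref{assumption_A4} only guarantees $1/I_d(x)=\bigO(d^{\alpha/2})$ on $F_d^+$, so $s_d(x)$ may shrink with $d$, and you then need the Taylor‑remainder and drift‑concentration errors to be $o(\sqrt{I_d(x)})$ rather than merely $o(1)$. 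The paper's Lipschitz route avoids this anti‑concentration issue entirely, which is precisely why it can run on the stated $o(1)$ rates. Either track the powers of $d^{\alpha}$ quantitatively through your error bounds, or switch to the Lipschitz functional at this step; with that repair your argument goes through.
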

\begin{proof}
	See \cref{proof_key_theorem}.
\end{proof}

Since the assumptions required by \cref{key_theorem} are very mild, the result of \cref{key_theorem} holds for a large class of realistic MCMC models. As an example, we give a class of graphical models that all conditions \ref{assumption_A1}, \ref{assumption_A2}, \ref{assumption_A3}, and \ref{assumption_A4} hold. Therefore, the asymptotic ESJD and acceptance rate by \cref{key_theorem} hold for this class of graphical models. We will further discuss realistic MCMC models later in \cref{subsection_verify} and  \cref{subsection_realistic}. 

{We give a simple criterion that the assumptions \ref{assumption_A1}, \ref{assumption_A2}, \ref{assumption_A3}, and \ref{assumption_A4} hold. More discussions and examples are delayed to \cref{section_examples}.}
	\begin{corollary} {If a graphical model satisfies (i) \emph{either} each node has at most  $l_d=o(d^{1/4})$ links \emph{or} the number of $3$-cliques of the graph is $o(d^{3/2})$; (ii)  $I_d(X^d)$ is tight; (iii) $\pi^d$ has bounded density and $\log \pi^d$ has up to the third bounded partial derivatives, then the assumptions \ref{assumption_A1}, \ref{assumption_A2}, \ref{assumption_A3}, and \ref{assumption_A4} hold. Therefore, the asymptotic ESJD and acceptance rate results by \cref{key_theorem} hold.}
	\end{corollary}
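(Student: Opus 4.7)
The plan is to verify the four assumptions \ref{assumption_A1}--\ref{assumption_A4} one by one, exploiting the graphical model structure: whenever $j\notin \mathcal{H}_i$ in the graphical representation, the pair $(i,j)$ is not connected by an edge, hence $\frac{\partial^2 \log \pi^d}{\partial x_i \partial x_j}\equiv 0$, and more generally any mixed partial of $\log \pi^d$ that involves a pair of indices with no edge between them vanishes identically. All other derivatives appearing in \ref{assumption_A1}--\ref{assumption_A4} are bounded by hypothesis (iii), so each assumption reduces to a counting/scaling check.

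First, for \ref{assumption_A1}: the $j\notin \mathcal{H}_i$ part is $0=o(1)$ by the graphical structure, and the $j\in\mathcal{H}_i$ part is $O(1)=o(\sqrt{d/l_d})$ since $l_d=o(d)$. For \ref{assumption_A3}, the same two observations handle the first two inequalities, and $\frac{\partial^3 \log \pi^d}{\partial x_i^3}=O(1)=o(d^{1/2})$ gives the third. For the last inequality of \ref{assumption_A3}, any triple $(i,j,k)$ of distinct indices for which $\frac{\partial^3 \log \pi^d}{\partial x_i \partial x_j \partial x_k}\not\equiv 0$ must form a $3$-clique (every pair must share an edge). Under condition (i), if the maximum degree is $l_d=o(d^{1/4})$, the number of $3$-cliques is at most $\binom{l_d}{2} d/3 = O(l_d^2 d)=o(d^{3/2})$; alternatively we use the direct $o(d^{3/2})$ bound on the number of $3$-cliques. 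Combined with boundedness of the third partial, the total sum is $o(d^{3/2})$.

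For \ref{assumption_A2}, I would use that $j\notin \mathcal{H}_i$ implies conditional independence of $x_i$ and $x_j$ given $x_{-i-j}$, so $\pi_{i,j|-i-j}(x_i,x_j) = \pi_{i|-i-j}(x_i)\,\pi_{j|-i-j}(x_j)$. Plugging in and using $\partial^2(fg)/\partial x_i^2 = g \partial^2 f/\partial x_i^2$ (the $x_j$-dependent factor acts as a constant), the inner integral factors as
\[
\left(\int \frac{\dee^2 \pi_{i|-i-j}}{\dee x_i^2}\dee x_i\right)\left(\int \frac{\dee^2 \pi_{j|-i-j}}{\dee x_j^2}\dee x_j\right).
\]
Both factors vanish provided the first derivatives of the conditional marginals tend to zero at $\pm\infty$. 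This ``flat tail'' condition follows from bounded density together with bounded first and second partials: an integrable $C^2$ function on $\Reals$ with uniformly bounded second derivative must have derivative tending to zero at infinity (a standard argument via Taylor expansion, since otherwise $|f(x+h)-f(x)|$ would be bounded below over arbitrarily long intervals, contradicting integrability). This is the only step requiring genuine care, and I expect it to be the main obstacle.

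Finally, for \ref{assumption_A4}, I would choose $\alpha\in(0,1/2)$ arbitrarily, say $\alpha=1/4$. The first condition $\sup_i \sup_{x^d\in F_d^{(i)}}\frac{\partial \log \pi^d}{\partial x_i}=O(1)=O(d^\alpha)$ follows from bounded first partials; the second, $\sup_{x^d\in F_d^+}\pi^d(x^d)=O(1)=o(d^{1/2-\alpha})$, from bounded density. For the third condition on $1/I_d$, I would invoke the construction sketched in Remark~\ref{remark_A4}: by tightness of $I_d(X^d)$, for every $\epsilon>0$ there is $K_\epsilon>0$ such that $\Pr(I_d(X^d)\le K_\epsilon^{-1})<\epsilon$; choosing $F_d:=\{x^d\colon I_d(x^d)> K_\epsilon^{-1}\}$ gives $\pi^d(F_d)\to 1$ (with $\epsilon\downarrow 0$ along a suitable sequence) and $\sup_{x^d\in F_d} 1/I_d(x^d)\le K_\epsilon = O(1)$. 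Enlarging to $F_d^+$ only perturbs $I_d$ by $O(d^{-1/2}\sqrt{\log d})$ using bounded first and second partials, which is negligible compared with $d^{\alpha/2}$. Once all four assumptions are verified, \cref{key_theorem} applies directly, finishing the proof.
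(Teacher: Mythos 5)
Your proposal is correct and follows essentially the same route as the paper, which simply defers to the observations in Remarks~\ref{remark_A1}--\ref{remark_A4}: vanishing mixed partials off the edge set plus bounded derivatives handle \ref{assumption_A1} and \ref{assumption_A3}, the conditional-independence factorization and flat tails give \ref{assumption_A2}, and tightness of $I_d(X^d)$ is used to build $F_d$ for \ref{assumption_A4}. The only cosmetic slip is in the last step: with $\epsilon=\epsilon_d\downarrow 0$ the threshold $K_{\epsilon_d}$ need not be $O(1)$, but it can be chosen to grow slowly enough that $\sup_{F_d^+}1/I_d(x^d)=\bigO(d^{\alpha/2})$ still holds, which is all \ref{assumption_A4} requires.
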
	
\begin{proof}
{	First, the assumption \ref{assumption_A1} holds when second partial derivatives of $\log\pi^d$ are bounded. Next, the assumption \ref{assumption_A2} automatically holds for graphical models. Furthermore, $l_d=o(d^{1/4})$ implies that the number of $3$-cliques is $o(d^{3/2})$. Then one can easily verify that the assumption \ref{assumption_A3} holds using the fact that the third partial derivatives of $\log\pi^d$ are bounded. Finally, the assumption \ref{assumption_A4} holds since $I_d(X^d)$ is tight.}
\end{proof}

Note that \cref{key_theorem} suggests that under mild conditions on the target distribution, the expected acceptance rate 
\[
\EE_{X^d\sim\pi^d}\EE_{Y^d}\left(1\wedge \frac{\pi^d(Y^d)}{\pi^d(X^d)}\right)\to 2\EE_{X^d\sim \pi^d}\left[ \Phi\left(-\frac{\ell \sqrt{I_d(X^d)}}{2}\right) \right].
\]
Therefore, we can define asymptotic acceptance rate as a function of $\ell$ as follows.
\begin{definition}{(Asymptotic Acceptance Rate)}\label{def_asymp_accept_rate}
	The asymptotic acceptance rate function is defined by
	\[
	a(\ell):=2\EE_{X^d\sim \pi^d}\left[ \Phi\left(-\frac{\ell \sqrt{I_d(X^d)}}{2}\right) \right].
	\]
\end{definition}

The next theorem shows that if the target distribution satisfies \ref{assumption_A1}, \ref{assumption_A2}, \ref{assumption_A3} and \ref{assumption_A4}, then if we maximize the asymptotic ESJD, the resulting asymptotic acceptance rate is no larger than $0.234$. 
\begin{theorem}\label{thm_maximize_lower_bound}
	Defining  the optimal parameter for maximizing the asymptotic ESJD by $\hat{\ell}$, i.e.\
	\[
	\hat{\ell}:=\arg\max_{\ell} h(\ell), \quad h(\ell):= 2\ell^2\EE_{X^d\sim \pi^d}\left[\Phi\left(-\frac{\ell \sqrt{I_d(X^d)}}{2}\right)\right],
	\]
	then we have $a(\hat{\ell})\le 0.234$ {(to three decimal places).}
\end{theorem}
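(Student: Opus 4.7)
The plan is to extract from the stationarity condition $h'(\hat\ell)=0$ a convenient integral identity, and then to upper-bound $a(\hat\ell)$ via a one-parameter Lagrangian-style device. Writing $J := \sqrt{I_d(X^d)}$, so that $h(\ell) = 2\ell^2\,\EE[\Phi(-\ell J/2)]$ and $a(\ell) = 2\,\EE[\Phi(-\ell J/2)]$, I would first differentiate to obtain $h'(\ell) = 4\ell\,\EE[\Phi(-\ell J/2)] - \ell^2\,\EE[J\phi(\ell J/2)]$ (using $\Phi' = \phi$ and the symmetry $\phi(-x)=\phi(x)$). Setting $h'(\hat\ell)=0$ and substituting $U := \hat\ell J/2$, this rearranges to
\*[
\EE[U\phi(U)] \;=\; 2\,\EE[\Phi(-U)] \;=\; a(\hat\ell),
\]
where the right-hand equality is the definition of $a(\hat\ell)$. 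The key observation is that at $\hat\ell$ the a priori distinct expectations $\EE[U\phi(U)]$ and $2\,\EE[\Phi(-U)]$ coincide and both equal $a(\hat\ell)$.

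The second step leverages this coincidence. For any $\lambda\in\Reals$, linearity of expectation gives
\*[
a(\hat\ell) \;=\; (1-\lambda)\EE[U\phi(U)] + \lambda\cdot 2\,\EE[\Phi(-U)] \;=\; \EE\bigl[g_\lambda(U)\bigr], \qquad g_\lambda(u) \,:=\, (1-\lambda)u\phi(u) + 2\lambda\,\Phi(-u),
\]
so for every $\lambda$, $a(\hat\ell) \le \sup_{u\ge 0} g_\lambda(u)$. The task is to choose $\lambda$ to minimize this supremum.

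For the third step, let $u^*>0$ denote the unique positive solution of the classical Roberts--Gelman--Gilks equation $u\phi(u)=2\Phi(-u)$ (numerically $u^*\approx 1.19$, with $u^*\phi(u^*)\approx 0.234$). Using $\phi'(u)=-u\phi(u)$, one computes $g_\lambda'(u) = \phi(u)\bigl[(1-\lambda)(1-u^2)-2\lambda\bigr]$, and imposing $g_\lambda'(u^*)=0$ forces $\lambda = \lambda^* := (1-(u^*)^2)/(3-(u^*)^2)$, which is negative since $(u^*)^2 \in (1,3)$. The second-derivative test at $u^*$ gives $g_{\lambda^*}''(u^*) = -2u^*(1-\lambda^*)\phi(u^*) < 0$ (local max); solving $g_{\lambda^*}'(u)=0$ for $u\ge 0$ yields only the root $u=u^*$; and the boundary values $g_{\lambda^*}(0) = \lambda^* < 0$ together with $\lim_{u\to\infty} g_{\lambda^*}(u) = 0$ then confirm that $u^*$ is in fact the \emph{global} maximizer of $g_{\lambda^*}$ on $[0,\infty)$. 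Finally, the identity $u^*\phi(u^*)=2\Phi(-u^*)$ makes the two terms of $g_{\lambda^*}(u^*)$ collapse to $u^*\phi(u^*)\approx 0.234$, yielding $a(\hat\ell) \le 0.234$.

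The main obstacle I anticipate is the global (as opposed to merely local) optimality of $u^*$ for $g_{\lambda^*}$ on $[0,\infty)$ in step 3: one must rule out other potential maximizers even though $\lambda^*$ is negative and so $g_{\lambda^*}$ is not a convex combination of positive summands. This reduces to an elementary one-variable calculus exercise together with explicit limits at $0$ and $\infty$, and it is also what pins down the numerical constant $0.234$ to the stated three decimal places.
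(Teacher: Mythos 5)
Your proof is correct, and it takes a genuinely different route from the paper's. Both arguments start from the same first-order condition $h'(\hat\ell)=0$, which yields the identity $a(\hat\ell)=2\,\EE[\Phi(-U)]=\EE[U\phi(U)]$ with $U=\hat\ell\sqrt{I_d(X^d)}/2\ge 0$; from there the paper applies Jensen's inequality to the function $x\mapsto -\Phi^{-1}(x)\Phi'(\Phi^{-1}(x))$, whose concavity on $(0,1)$ is a nontrivial external fact imported from Sherlock (2006), and then finishes with a sign analysis of $x\mapsto 2\Phi(-x)-x\Phi'(-x)$. You instead use a dual-certificate device: since $a(\hat\ell)$ equals every affine combination $\EE[g_\lambda(U)]$ with $g_\lambda(u)=(1-\lambda)u\phi(u)+2\lambda\Phi(-u)$, you bound it by $\sup_{u\ge0}g_{\lambda^*}(u)$ for the specific $\lambda^*$ that makes the Roberts--Gelman--Gilks root $u^*$ the critical point. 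Your calculus checks out: the bracket $(1-\lambda^*)(1-u^2)-2\lambda^*$ in $g_{\lambda^*}'(u)=\phi(u)[(1-\lambda^*)(1-u^2)-2\lambda^*]$ is strictly decreasing in $u$ on $[0,\infty)$ and positive at $u=0$ (because $\lambda^*<0$), so $g_{\lambda^*}$ increases on $[0,u^*]$ and decreases thereafter, making $u^*$ the unique global maximizer with value $u^*\phi(u^*)=2\Phi(-u^*)\approx 0.234$. What your approach buys is self-containedness (no concavity lemma needed, only elementary one-variable calculus) and a transparent equality case: $a(\hat\ell)=0.234$ forces $U=u^*$ almost surely, i.e.\ $I_d(X^d)$ degenerate, which dovetails with the role of assumption \ref{assumption_A5} in \cref{thm_ESJD}. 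The one point worth making explicit, which both you and the paper gloss over at the same level, is that the maximizer $\hat\ell$ lies in $(0,\infty)$ and satisfies the first-order condition (this follows since $h>0$ on $(0,\infty)$, $h(0)=0$, and $h(\ell)\to 0$ as $\ell\to\infty$ provided $I_d(X^d)>0$ with positive probability), and that differentiation under the expectation is justified.
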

\begin{proof}
We follow the arguments in \cite[Lemma 5.1.4]{Tawn2017}. First, it can be verified by taking the second derivatives of $h(\ell)$ with respect to $\ell$ that the maximum of $h(\ell)$ is achieved at $\ell$ such that $\frac{\partial h(\ell)}{\partial \ell}=0$. Therefore, the optimal $\hat{\ell}$ satisfies
\[
2\EE_{X^d\sim \pi^d}\left[\Phi\left( -\frac{\hat{\ell}\sqrt{I_d(X^d)}}{2}\right)\right]=\EE_{X^d\sim \pi^d}\left[\frac{\hat{\ell}\sqrt{I_d(X^d)}}{2}\Phi'\left( -\frac{\hat{\ell}\sqrt{I_d(X^d)}}{2}\right)\right].
\]
Therefore, the asymptotic acceptance rate
\[
a(\hat{\ell})=\EE_{X^d\sim \pi^d}\left[\frac{\hat{\ell}\sqrt{I_d(X^d)}}{2}\Phi'\left( -\frac{\hat{\ell}\sqrt{I_d(X^d)}}{2}\right)\right]=\EE_{X^d\sim \pi^d}\left[-\Phi^{-1}(V)\Phi'\left(\Phi^{-1}(V)\right) \right],
\]
where $V:=\Phi\left(-\frac{\hat{\ell}\sqrt{I_d(X^d)}}{2}\right)$. By \cite{Sherlock2006}, the function $-\Phi^{-1}(x)\Phi'\left(\Phi^{-1}(x)\right)$
is a concave function for any $x\in (0,1)$. Therefore, we have
\[
a(\hat{\ell})=\EE_{X^d\sim \pi^d}\left[-\Phi^{-1}(V)\Phi'\left(\Phi^{-1}(V)\right) \right]\le -\Phi^{-1}[\EE_{X^d\sim \pi^d}(V)]\Phi'\left[\Phi^{-1}(\EE_{X^d\sim \pi^d}(V))\right].
\]
Defining $m:=-\Phi^{-1}[\EE_{X^d\sim \pi^d}(V)]$, we can then write
$a(\hat{\ell})=2\Phi(-m)\le m\Phi'(-m)$.
Finally, it suffices to show that $2\Phi(-m)\le m\Phi'(-m)$ implies $2\Phi(-m)\le 0.234$ (to three decimal places). Note that the function $x^2\Phi(-x)$ is maximized at $\hat{m}$ such that $2\Phi(-\hat{m})=\hat{m}\Phi'(-\hat{m})\approx 0.234$. By \cite[Lemma 5.1.4]{Tawn2017}, the function $2\Phi(-x)-x\Phi'(-x)$ is positive for $x<\hat{m}$ and negative for $x>\hat{m}$. Therefore, $2\Phi(-m)\le m\Phi'(-m)$ implies that $m>\hat{m}$. Since $\Phi(-x)$ is monotonically decreasing with $x$, we have
$a(\hat{\ell})=2\Phi(-m)\le 2\Phi(-\hat{m})\approx 0.234$.
\end{proof}

The next result is our main result for optimal scaling by maximizing ESJD. 
Defining the following WLLN condition for the target $\pi^d$:
\[\label{assumption_A5}\tag{A5}
{I_d(X^d)-\bar{I}_d\to 0}\quad \textrm{in probability}
\]
where $X^d\sim \pi^d$ and $\bar{I}_d:=\EE_{X^d\sim \pi^d}[I_d(X^d)]$,
we show that if the target distribution $\pi^d$ satisfies \ref{assumption_A1}, \ref{assumption_A2},  \ref{assumption_A3}, \ref{assumption_A4}, and the WLLN assumption in \ref{assumption_A5}, then the acceptance rate $0.234$ is asymptotically optimal.

\begin{theorem}(Optimal Scaling for Maximizing ESJD)\label{thm_ESJD}
	Suppose the target distribution $\pi^d$ satisfies \ref{assumption_A1}, \ref{assumption_A2}, \ref{assumption_A3}, \ref{assumption_A4}, and \ref{assumption_A5}. 
	Then
	the asymptotic optimal acceptance rate $a(\hat{\ell})\approx 0.234$ (to three decimal places).
\end{theorem}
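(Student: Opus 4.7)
The plan is to reduce the statement to a routine argmax-convergence problem by reparametrizing so that the extra assumption \ref{assumption_A5} becomes simply $R_d\to 1$ in probability. Define $\bar I_d:=\EE_{X^d\sim\pi^d}[I_d(X^d)]$ and $R_d:=I_d(X^d)/\bar I_d$, and set $u:=\ell\sqrt{\bar I_d}/2$. With this substitution, the function to be maximized in the definition of $\hat\ell$ becomes
\[
h(\ell)\;=\;2\ell^2\,\EE_{X^d\sim\pi^d}\!\left[\Phi\!\left(-\tfrac{\ell\sqrt{I_d(X^d)}}{2}\right)\right]\;=\;\tfrac{8}{\bar I_d}\,\bigl[u^2\,\EE_{X^d\sim\pi^d}\!\left[\Phi(-u\sqrt{R_d})\right]\bigr],
\]
so, up to the $d$-dependent positive constant $8/\bar I_d$, maximizing $h$ in $\ell$ is the same as maximizing $g_d(u):=u^2\,\EE_{X^d\sim\pi^d}[\Phi(-u\sqrt{R_d})]$ in $u>0$. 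The same reparametrization applied to $a(\ell)$ from \cref{def_asymp_accept_rate} gives $a(\hat\ell)=2\,\EE_{X^d\sim\pi^d}[\Phi(-\hat u_d\sqrt{R_d})]$, where $\hat u_d$ is the maximizer of $g_d$.

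Next I would show that $g_d\to g$ uniformly on compacts of $[0,\infty)$, where $g(u):=u^2\Phi(-u)$. Indeed, by \ref{assumption_A5}, $R_d\to 1$ in probability, and $\Phi$ is bounded and continuous, so bounded convergence gives $\EE[\Phi(-u\sqrt{R_d})]\to\Phi(-u)$ pointwise in $u$. The map $r\mapsto\Phi(-u\sqrt r)$ has derivative of modulus at most $\tfrac{u}{2\sqrt r}\phi(u\sqrt r)$, which is uniformly bounded on any compact $u$-interval and any compact $r$-interval bounded away from $0$; combined with $R_d\in[0,\infty)$ and the bound $\Phi\le 1$ on the complement, this upgrades the pointwise convergence to uniform convergence in $u$ on every compact set.

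The limit function $g(u)=u^2\Phi(-u)$ vanishes at both endpoints of $(0,\infty)$ and, by the calculus already carried out in the proof of \cref{thm_maximize_lower_bound} (and \cite{Sherlock2006}), is strictly unimodal with unique maximum at $u=m^{\ast}$, the unique solution of $2\Phi(-m^{\ast})=m^{\ast}\Phi'(-m^{\ast})$, giving $2\Phi(-m^{\ast})\approx 0.234$. A standard argmax-convergence argument then yields $\hat u_d\to m^{\ast}$: tightness of $\hat u_d$ follows because $g_d(u)\le u^2$ controls the small-$u$ range while a uniform upper bound $g_d(u)\le u^2\Phi(-u/2)+o(1)$ (coming from $R_d\ge 1/4$ with probability tending to one) controls the large-$u$ range; any limit point of $\hat u_d$ must then maximize $g$, hence equal $m^{\ast}$. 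A final application of bounded convergence in $a(\hat\ell)=2\,\EE_{X^d\sim\pi^d}[\Phi(-\hat u_d\sqrt{R_d})]$ gives $a(\hat\ell)\to 2\Phi(-m^{\ast})\approx 0.234$.

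The main obstacle is the tightness of $\hat u_d$: one has to rule out the possibility that $\hat u_d\to 0$ or $\hat u_d\to\infty$ along a subsequence even though $\bar I_d$ itself may degenerate (\ref{assumption_A4} only bounds $1/I_d$ polynomially). This is precisely where \ref{assumption_A5} is doing real work beyond \ref{assumption_A1}--\ref{assumption_A4}: by concentrating $R_d$ at $1$, it forces $g_d$ to inherit from $g$ the decay at the endpoints, which is what confines $\hat u_d$ to a compact set and in turn collapses the Jensen gap used in the upper bound of \cref{thm_maximize_lower_bound} to $o(1)$.
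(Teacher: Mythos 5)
Your proposal is correct and rests on the same core reduction as the paper's proof: assumption \ref{assumption_A5} is used to collapse the expectation over $I_d(X^d)$ to a point evaluation, after which everything reduces to the classical scalar problem of maximizing $u^2\Phi(-u)$, whose maximizer $m^*\approx 1.19$ gives $2\Phi(-m^*)\approx 0.234$. The execution differs in a way worth noting. The paper's argument is very short: convexity of $\Phi(-x)$ gives the Jensen lower bound $\EE[\Phi(-\ell\sqrt{I_d(X^d)}/2)]\ge\Phi(-\ell\,\EE[\sqrt{I_d(X^d)}]/2)$, assumption \ref{assumption_A5} is asserted to make it asymptotically tight, and the optimal $\hat\ell\to 2.38/\EE[\sqrt{I_d(X^d)}]$ and the limiting acceptance rate are then read off from \cite{Roberts1997}. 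You instead reparametrize to $u=\ell\sqrt{\bar I_d}/2$, prove uniform-on-compacts convergence of $g_d(u)=u^2\,\EE[\Phi(-u\sqrt{R_d})]$ to $u^2\Phi(-u)$, and run an explicit argmax-consistency argument; this makes rigorous the convergence of the maximizers, which the paper handles only by citation. One caveat: your claim that \ref{assumption_A5} yields $R_d=I_d(X^d)/\bar I_d\to 1$ in probability requires $\liminf_d\bar I_d>0$, since \ref{assumption_A5} controls the difference $I_d(X^d)-\bar I_d$ rather than the ratio; you correctly flag that \ref{assumption_A4} only bounds $1/I_d$ polynomially, but if $\bar I_d$ really did degenerate then your bound $R_d\ge 1/4$ with high probability (used for tightness of $\hat u_d$) would fail too. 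The paper's proof carries the same implicit restriction (its use of the tight Jensen bound at the possibly diverging $\hat\ell$, and the surrounding discussion where $\bar I_d\to\bar I>0$), so this is not a defect relative to the paper's own standard, but your more careful argument would close cleanly if you add $\liminf_d\bar I_d>0$ as an explicit hypothesis.
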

\begin{proof} 
	By convexity of the function $\Phi(-x)$ when $x\ge 0$, we can immediately obtain a lower bound
	\[
	\ell^2\EE_{X^d\sim \pi^d}\left[ \Phi\left(-\frac{\ell \sqrt{I_d(X^d)}}{2}\right) \right]\ge \ell^2\left[ \Phi\left(-\frac{\ell\EE_{X^d\sim \pi^d}[ \sqrt{I_d(X^d)}]}{2}\right) \right].
	\]
	Under \ref{assumption_A5}, this lower bound is asymptotically tight.
	Therefore, as $d\to \infty$, according to \cite{Roberts1997}, we have (to two decimal places)
	\[
	\hat{\ell}\to \frac{2.38}{\EE_{X^d\sim \pi^d}[ \sqrt{I_d(X^d)}]},\quad h(\hat{\ell})\to \frac{1.3}{\left(\EE_{X^d\sim \pi^d}[ \sqrt{I_d(X^d)}]\right)^2}.
	\]
	The acceptance rate which maximizing the asymptotic ESJD is
	\[
	a(\hat{\ell})&=2\EE_{X^d\sim \pi^d}\left[ \Phi\left(-\frac{\hat{\ell} \sqrt{I_d(X^d)}}{2}\right) \right]\to  2\Phi\left(-\frac{\hat{\ell} \EE_{X^d\sim \pi^d}\sqrt{I_d(X^d)}}{2}\right) \\
	&\approx 2\Phi\left(-\frac{2.38}{\EE_{X^d\sim \pi^d}[ \sqrt{I_d(X^d)}]}\frac{ \EE_{X^d\sim \pi^d}[\sqrt{I_d(X^d)}]}{2}\right)= 2\Phi(-1.19)\approx 0.234.
	\]
\end{proof}

\begin{remark}\label{remark_A5}
	Comparing the results of \cref{thm_maximize_lower_bound} and \cref{thm_ESJD}, it is clear that the ``roughness'' of $\pi^d$, $I_d(X^d)$, is the key quantity which determines the optimal acceptance rate $a(\hat{\ell})\le 0.234$ when only the tightness of $I_d(X^d)$ can be verified, or $a(\hat{\ell})\approx 0.234$ when the concentration of $I_d(X^d)$ as defined in \ref{assumption_A5} can be verified. We will later demonstrate how to verify \ref{assumption_A5} for some realistic MCMC models in \cref{subsection_verify} and  \cref{subsection_realistic}. 
\end{remark}

\subsection{Optimal Scaling via Diffusion Limits}\label{subsection_diffusion}
In this subsection, we consider sufficient conditions on $\pi^d$ for establishing weak convergence of diffusion limits. As we discussed before, establishing such results gives stronger guarantee for optimal scaling than maximizing ESJD. However, it also requires stronger conditions on the target distribution. As we will see in the following, we need to strengthen assumptions \ref{assumption_A2}, \ref{assumption_A3}, \ref{assumption_A4}, \ref{assumption_A5} and add one more assumption \ref{assumption_A6}.

We first strengthen \ref{assumption_A2} to a new assumption \ref{assumption_A2_plus} as follows.
\[\label{assumption_A2_plus}
\tag{A2+}
\sum_{i=1}^d\sum_{j=1}^d\sum_{k=1}^d\int \left(\frac{\partial^2 \pi^d}{\partial x_i^2}\frac{1}{\pi^d}\right)\left(\frac{\partial^2 \pi^d}{\partial x_j^2}\frac{1}{\pi^d}\right)\left(\frac{\partial^2 \pi^d}{\partial x_k^2}\frac{1}{\pi^d}\right)\pi^d \dee x^d=\bigO(d^{2-\delta})
\]
for some $\delta>0$.
\begin{remark}
The new assumption \ref{assumption_A2_plus} is stronger than \ref{assumption_A2} but is still very mild. To see this, we consider graphical models as examples. For graphical models with $d$ nodes each with $\bigO(l_d)$ links, there are at most $\bigO(dl_d^2)$ $3$-cliques. Therefore, \ref{assumption_A2_plus} holds for any graphical model with $l_d=o(d^{1/2-\delta})$ and bounded second partial derivatives of $\log\pi^d$. Note that this is only for the worst case, as many realistic graphical models do not have $3$-cliques.
\end{remark}
Next, we slightly strengthen \ref{assumption_A3} and \ref{assumption_A4} to \ref{assumption_A3_plus} and \ref{assumption_A4_plus}.  
\begin{equation}
\label{assumption_A3_plus}
\tag{A3+}
\begin{split}
&\sup_{(i,j): j\notin \mathcal{H}_i} \sup_{x^d \in {\Reals^d}}\frac{\partial^3 \log \pi^d(x^d)}{\partial x_i^2 \partial x_j}=o(1),\quad
\sup_{(i,j): j\in\mathcal{H}_i} \sup_{x^d \in {\Reals^d}}\frac{\partial^3 \log \pi^d(x^d)}{\partial x_i^2 \partial x_j}=o(\sqrt{d/l_d}),\\	
&{\sum_{i\neq j\neq k}\left(\sup_{x^d\in {\Reals^d}}\left|\frac{\partial^3 \log \pi^d(x^d)}{\partial x_i\partial x_j \partial x_k}\right|\right)=o(d^{3/2}).}
\end{split}
\end{equation}
Suppose exists $0<\alpha<1/2$ that
\begin{equation}
\label{assumption_A4_plus}
\tag{A4+}
\begin{split}
&\sup_{i}\sup_{x^d \in F_d^{(i)}}\frac{\partial^2 \log \pi^d(x^d)}{\partial x_i^2}=o(d^{\alpha}),\quad
\sup_{i}\sup_{x^d \in F_d^{(i)}}\frac{\partial \log \pi^d(x^d)}{\partial x_i}=\bigO(d^{\alpha/2}),\\
&\sup_{x^d \in F_d^+}  \pi^d(x^d)=o(d^{1/2-\alpha}),\quad
\sup_{x^d \in F_d^+} 1/I_d(x^d)=\bigO(d^{\alpha/4}).
\end{split}
\end{equation}
Furthermore, we strengthen the WLLN condition \ref{assumption_A5} to the following \ref{assumption_A5_plus}.
\[\label{assumption_A5_plus}
\tag{A5+}
\sup_{x^d \in F_d^+} \left|I_d(x^d)-\bar{I}\right|\to 0
\]
where $\bar{I}:=\lim_{d\to \infty} \bar{I}_d$ exists.

\begin{remark}
	\ref{assumption_A3_plus} is only slightly stronger than \ref{assumption_A3} on the rates. \ref{assumption_A4_plus} also includes a new condition on the rate of $\frac{\partial^2 \log \pi^d(x^d)}{\partial x_i^2}$ which is quite weak. \ref{assumption_A5_plus} requires any sequence $(x^1,x^2,\dots,x^d,\dots)$ where $x^i\in F_i^+$ converges to the same limit $\bar{I}$, so it is (slightly) stronger than WLLN condition in \ref{assumption_A5}. It will become clear in the proof of \cref{thm_diffusion} that \ref{assumption_A5_plus} is to ensure the speed measure of the diffusion process $h(\ell)$ does not depend on the state $x^d$.
\end{remark}	
Finally, we define a new assumption \ref{assumption_A6} on the target distribution. Roughly speaking, the new assumption is to require the first coordinate of $\pi^d$ is asymptotically independent with the rest.
\begin{equation}\label{assumption_A6}\tag{A6}
\begin{split}
&{\lim_{d\to \infty}}\sup_{x^d\in F_d^{+}}\left|{\frac{\dee }{\dee x_1}\left[\log \pi^d(x_1\,|\,x_{-1})-\log {\tilde{\pi}}(x_1)\right]}\right|{=} 0,
\end{split}
\end{equation}
where $x_{-1}:=(x_2,\dots,x_d)$, {$\tilde{\pi}$ is a one-dimensional density and $(\log {\tilde{\pi}})'$ is Lipschitz continous.}

\begin{remark}
	Note that  \ref{assumption_A6} is a strong condition, which may not be satisfied for many realistic MCMC models. However, it might be necessary in order to get a one-dimensional diffusion limit for the first coordinate. In the proof of the optimal scaling via diffusion limits result in \cref{thm_diffusion}, the assumption \ref{assumption_A6} is to ensure the SDE for the first coordinate $x_1$ doesn't depend on the values of other coordinates. Furthermore, although we do not pursue in this paper, if in \ref{assumption_A6} we instead assume not just the first component but 
	a finite collection of components are asymptotically independent from the rest, a version of weak convergence to multi-dimensional diffusion limits could be obtained following similar arguments as the proof of the one-dimensional diffusion limit case in \cref{thm_diffusion}. 
\end{remark}
Now we are ready for the main result of optimal scaling via diffusion limits, which is given in \cref{thm_diffusion}. 
{We show that, even though different coordinates of the Markov chain are \emph{not} independent \emph{nor} even individually Markovian, the sped-up first-coordinate process converges to a limiting diffusion limit under much more general conditions on the target distribution.}
Comparing with the assumptions in \cref{thm_ESJD}, the new sufficient conditions for diffusion limits include strengthening \ref{assumption_A2} to \ref{assumption_A2_plus}, \ref{assumption_A3} and \ref{assumption_A4} to \ref{assumption_A3_plus} and \ref{assumption_A4_plus}, \ref{assumption_A5} to \ref{assumption_A5_plus}, and adding \ref{assumption_A6}. We also require slightly stronger condition on the sequence of ``sets of typical states'' $\{F_d\}$.
\begin{theorem}{(Optimal Scaling via Diffusion Limits)}\label{thm_diffusion}
	Suppose the sequence $\{F_d\}$ satisfies
	{$\pi^d(F_d^c)=\bigO(d^{-1-\delta})$}
	for some $\delta>0$, the target distribution $\pi^d$ satisfies \ref{assumption_A1}, \ref{assumption_A2_plus}, \ref{assumption_A3_plus}, \ref{assumption_A4_plus}, \ref{assumption_A5_plus}, and \ref{assumption_A6},
	then for $U^d(t):=X_1^d(\lfloor d t\rfloor)$, as $d\to\infty$, we have
	$U^d\Rightarrow U$,
	where $\Rightarrow$ denotes weak convergence in Skorokhod topology,  and $U$ satisfies the Langevin SDE
	\[\label{equ_SDE}
	\dee U(t)=(h(\ell))^{1/2}\dee B(t)+h(\ell){\frac{\tilde{\pi}'(U(t))}{2\tilde{\pi}(U(t))}}\dee t,
	\]
	where $h(\ell):=2\ell^2\Phi(-\ell\sqrt{\bar{I}}/2)$ is the speed measure for the diffusion process.
\end{theorem}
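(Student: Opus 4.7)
The plan is to apply the generator-convergence method of \citet{Ethier1986}, following the classical strategy of \citet{Roberts1997} but adapted to the present dependent-coordinate setting. Fix a test function $V\in C_c^\infty(\Reals)$, lift it to $\Reals^d$ by $\tilde V(x^d):=V(x_1)$, and compute the generator of the sped-up process on $\tilde V$:
\[
G_d V(x^d)\defas d\cdot \EE_{Y^d}\!\left[(V(Y_1)-V(x_1))\left(1\wedge \frac{\pi^d(Y^d)}{\pi^d(x^d)}\right)\right],
\]
with $Y^d=x^d+Z^d$, $Z^d\sim\normal(0,\tfrac{\ell^2}{d-1}I)$. I would show that, uniformly on $F_d^+$, $G_dV(x^d)\to GV(x_1):=\tfrac{h(\ell)}{2}V''(x_1)+\tfrac{h(\ell)}{2}\tfrac{\tilde\pi'(x_1)}{\tilde\pi(x_1)}V'(x_1)$. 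Combined with compact containment (stationarity plus tightness of the marginal of $\pi^d$, which is implied by \ref{assumption_A6}) and strong uniqueness for the limiting SDE (which follows from Lipschitzness of $(\log\tilde\pi)'$), Corollary~4.8.7 of \citet{Ethier1986} then yields weak convergence in Skorokhod topology.

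Taylor-expand $V(Y_1)-V(x_1)=V'(x_1)Z_1+\tfrac12 V''(x_1)Z_1^2+R_V$ with $|R_V|\lesssim |Z_1|^3$, and split $G_dV$ into three pieces. The \emph{diffusion piece} is handled by showing that $Z_1^2$ and the acceptance indicator decouple asymptotically, so that
\[
d\cdot\EE\!\left[\tfrac12 V''(x_1)Z_1^2\left(1\wedge e^{W}\right)\right]\longrightarrow \tfrac{h(\ell)}{2}V''(x_1),
\]
where $W:=\log\pi^d(Y^d)-\log\pi^d(x^d)$; the convergence of the mean acceptance probability to $a(\ell)=2\Phi(-\ell\sqrt{\bar I}/2)$ is the uniform-on-$F_d^+$ version of \cref{key_theorem}, made possible by the pointwise strengthening \ref{assumption_A5_plus}. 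The \emph{remainder piece} is bounded by $d\,\EE|R_V|\lesssim d\cdot\EE|Z_1|^3=\bigO(d^{-1/2})$.

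The \emph{drift piece} is the heart of the argument. Conditioning on $Z_{-1}$, which is independent of $Z_1\sim\normal(0,\sigma_d^2)$, Gaussian integration by parts (with the distributional derivative $\tfrac{\dee}{\dee W}(1\wedge e^W)=e^W\bone_{\{W<0\}}$) gives
\[
\EE\!\left[Z_1(1\wedge e^{W})\,\big|\,Z_{-1}\right]=\sigma_d^2\,\EE\!\left[e^{W}\bone_{\{W<0\}}\,\partial_{Z_1}W \,\big|\,Z_{-1}\right],
\]
with $\partial_{Z_1}W=\partial_1\log\pi^d(Y^d)=\partial_1\log\pi^d(x^d)+\bigO(\|Z\|)$. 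Next I would argue that $W$ is asymptotically $\normal(-\ell^2\bar I/2,\ell^2\bar I)$: the cross-quadratic $\tfrac12\sum_{i,j}Z_iZ_j\partial_i\partial_j\log\pi^d(x^d)$ concentrates at $-\ell^2 I_d(x^d)/2$ using \ref{assumption_A2_plus}, the third-order Taylor remainder is negligible by \ref{assumption_A3_plus} and \ref{assumption_A4_plus}, and $I_d(x^d)$ is replaced by $\bar I$ uniformly on $F_d^+$ via \ref{assumption_A5_plus}. The change-of-measure identity $\EE[e^W\bone_{\{W<0\}}]\to \Phi(-\ell\sqrt{\bar I}/2)$, combined with \ref{assumption_A6} to replace $\partial_1\log\pi^d(x^d)$ by $\tilde\pi'(x_1)/\tilde\pi(x_1)$ uniformly on $F_d^+$, yields the claimed drift coefficient after multiplying by $d$.

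The main obstacle is making the drift computation rigorous without any product structure: $W$ entangles all coordinates, so the asymptotic Gaussianity of $W$ and the uniform replacement of $\partial_1\log\pi^d$ by $(\log\tilde\pi)'$ must be carried out carefully using precisely the strengthened assumptions \ref{assumption_A2_plus}, \ref{assumption_A3_plus}, and \ref{assumption_A6}. A secondary technicality is promoting the in-probability statements of \cref{thm_ESJD} to uniform-on-$F_d^+$ statements; this is exactly what the reinforced hypotheses \ref{assumption_A5_plus} and $\pi^d(F_d^c)=\bigO(d^{-1-\delta})$ are designed for, the latter allowing control of $\pi^d$-expectations of the error on $F_d^c$ by a straightforward complement-set argument.
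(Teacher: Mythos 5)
Your proposal is correct in outline and shares the paper's overall architecture: generator convergence on the core $C_c^\infty$ (justified by Lipschitz continuity of $(\log\tilde\pi)'$ from \ref{assumption_A6}), restriction to typical sets $F_d\cap F_d'$ with a union bound over the $\lfloor dt\rfloor$ steps (which is exactly why the rate $\pi^d(F_d^c)=\bigO(d^{-1-\delta})$ is imposed, and why \ref{assumption_A2_plus} is needed to make $\sup_i|R_d^{(i)}+S_d^{(i)}|\le d^{-\delta}$ hold on a large enough set), and the Gaussian approximation $W\approx\normal(-\ell^2\bar I/2,\ell^2\bar I)$ uniformly on $F_d^+$. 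Where you genuinely diverge is the drift term. The paper first integrates out $Y_{-1}$ conditionally on $Y_1$, replaces the conditional acceptance probability $M_{x^d}(Y_1)$ by the explicit function $\hat M_{x^d}(Y_1)$ via the closed form for $\EE(1\wedge e^z)$ with Gaussian $z$, and then extracts the drift by differentiating $\hat M_{x^d}$ in $Y_1$ (\cref{lemma_property_hatM}); most of the technical work, and several of the rate conditions in \ref{assumption_A4_plus}, go into bounding $\hat M'_{x^d}$ and $\hat M''_{x^d}$. You instead obtain the drift directly by Gaussian integration by parts in $Z_1$, which produces the factor $\EE[e^W\bone_{\{W<0\}}]\to\Phi(-\ell\sqrt{\bar I}/2)=h(\ell)/(2\ell^2)$ without ever computing $\hat M'$; the second-order (diffusion) term is then handled by the Lipschitz bound on $1\wedge e^{(\cdot)}$ together with $\EE|Z_1|^3=\bigO(d^{-3/2})$. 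This is a legitimate and arguably lighter route to the same limit generator.

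Two caveats, both repairable. First, your estimate $\partial_{Z_1}W=\partial_1\log\pi^d(x^d)+\bigO(\|Z\|)$ is not sufficient as written: $\|Z\|\to\ell$ in probability, so $\bigO(\|Z\|)=\bigO(1)$, the same order as the drift you are trying to isolate. What you actually need, and what \ref{assumption_A1} provides, is $\EE\bigl|\partial_1\log\pi^d(Y^d)-\partial_1\log\pi^d(x^d)\bigr|=o(1)$: the linear term $\sum_j Z_j\,\partial_1\partial_j\log\pi^d(x^d)$ is mean-zero Gaussian with variance $\frac{\ell^2}{d-1}\bigl(l_d\,o(d/l_d)+d\,o(1)\bigr)=o(1)$, and the higher-order remainder is controlled by the $\Reals^d$-uniform third-derivative bounds in \ref{assumption_A3_plus}; since $e^W\bone_{\{W<0\}}\le 1$, this $L^1$ bound closes the drift computation. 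Second, the concentration of the quadratic part of $W$ at $-\ell^2 I_d(x^d)/2$ rests on \ref{assumption_A1} for the variance of the quadratic form (\cref{lemma_approx2}) and on \ref{assumption_A2_plus} only for the identification $S_d^{(i)}\approx -R_d^{(i)}$ (\cref{lemma_approx3}); and the compact-containment step you attribute to \ref{assumption_A6} deserves an explicit argument, as \ref{assumption_A6} only controls the conditional score on $F_d^+$. None of this undermines the approach, but the first point in particular must be argued via the variance computation rather than via $\|Z\|$.
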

\begin{proof}
	See \cref{proof_thm_diffusion}.
\end{proof}
\begin{remark}
	Note that \cref{thm_diffusion} allows dependent coordinates on the target distribution, which is much more general than the product i.i.d.\ condition. The only strong assumption is  \ref{assumption_A6} which requires the first coordinate is asymptotically independent with other coordinates.
\end{remark} 


Next, we present another result with slightly stronger conditions, which allows the RWM algorithm to start at a fixed state. This stronger convergence result later allows us to establish a complexity bound for the RMW algorithm in \cref{subsection_complexity_bounds}
{Let $X^d=\{X^d(t), t\ge 0\}$ for $d\in \mathbb{N}$ be the RWM processes defined earlier. Without loss of generality, suppose $\{X^d, d=1,2,\dots\}$ are defined in a common measurable metric space $(\Reals^{\infty},\mathcal{F},\rho)$ as independent processes. }

\begin{theorem}{(Optimal Scaling via Diffusion Limits with fixed starting state)}\label{thm_diffusion2}
	Suppose $X_1^d$ converges weakly in the Skorokhod topology as $d\to\infty$ to a c\`adl\`ag process $X_1^{\infty}$. Moreover, assume these processes $\{X^d, d=1,2,\dots\}$ all have the same marginal stationary distribution $\pi_1$ for the first coordinate and that the first coordinate of $X^{\infty}$ converges {weakly}
	to $\pi_1$. 
	Suppose the sequence $\{F_d\}$ satisfies
	{$\pi^d(F_d^c)=\bigO(d^{-2-\delta})$}
	for some $\delta>0$, the target distribution $\pi^d$ satisfies \ref{assumption_A1}, \ref{assumption_A3_plus}, \ref{assumption_A4_plus}, \ref{assumption_A5_plus}, and \ref{assumption_A6}. We strengthen \ref{assumption_A2_plus} to the following condition
	\begin{equation}\label{assumption_A2_pp}\tag{A2++}
	\sum_{i,j,k,l,m\in \{2,\dots,d\}}\int \left(\frac{\partial^2 \pi_{-1}}{\partial x_i^2}\cdot\frac{\partial^2 \pi_{-1}}{\partial x_j^2}\cdot\frac{\partial^2 \pi_{-1}}{\partial x_k^2}\cdot\frac{\partial^2 \pi_{-1}}{\partial x_l^2}\cdot\frac{\partial^2 \pi_{-1}}{\partial x_m^2}\right)\left(\frac{1}{\pi_{-1}}\right)^5\pi^d \dee x^d=\bigO(d^{3-6\delta}).
	\end{equation}
	Then as $d\to\infty$, we have
		${}_xU^d\Rightarrow {}_xU$,
	where ${}_xU^d(t):=(X_1^d(\lfloor d t\rfloor)\,|\, X_1^d(0)=x)$ is the first coordinate of the RWM algorithm sped up by a factor of $d$, conditional on starting at the state $x$, and ${}_xU$ is the limiting ergodic Langevin diffusion $U$ in \cref{equ_SDE} also conditional on starting at $x$.
\end{theorem}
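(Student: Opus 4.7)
The plan is to adapt the generator-based proof of \cref{thm_diffusion} so that Skorokhod convergence holds conditional on the initial state $X_1^d(0)=x$ rather than only under the stationary start $X^d(0)\sim\pi^d$. \cref{thm_diffusion} gives $U^d\Rightarrow U$ when $U^d(0)\sim\pi_1$; here I need the analogous convergence for the conditional law given $U^d(0)=x$. I would split the argument into three stages: (i) uniform convergence of the discrete generators to the generator of the limiting SDE on a core of test functions; (ii) tightness of the conditional laws ${}_xU^d$ uniformly in the starting state; and (iii) identification of the limit using the ergodicity hypotheses.

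For stage (i), the generator calculations from the proof of \cref{thm_diffusion} already deliver uniform convergence over initial states in $F_d^+$ under \ref{assumption_A1}, \ref{assumption_A3_plus}, \ref{assumption_A4_plus}, \ref{assumption_A5_plus}, and \ref{assumption_A6}, so this stage is essentially inherited. For stage (ii), the strengthened tail bound $\pi^d(F_d^c)=\bigO(d^{-2-\delta})$ plays the role that stationarity played in \cref{thm_diffusion}: a union bound over the $\lfloor dt\rfloor$ steps combined with reversibility of the RWM dynamics (which allows one to dominate the law at any finite step by $\pi^d$ times a bounded factor on an event of high probability) shows that the chain started at any $x\in F_d$ stays in $F_d^+$ throughout $[0,t]$ with probability at least $1-\bigO(d^{-\delta})$, which vanishes as $d\to\infty$. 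The new hypothesis \ref{assumption_A2_pp} provides the fifth-moment control, strengthening the third-moment bound in \ref{assumption_A2_plus}, that is needed to bound the higher-order error terms in an Aldous--Kurtz tightness criterion uniformly in the starting state $x$.

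For stage (iii), the hypothesis $X_1^d\Rightarrow X_1^\infty$ combined with the ergodicity assumption that $X_1^\infty(t)\Rightarrow\pi_1$, together with the Lipschitz continuity of $(\log\tilde\pi)'$ from \ref{assumption_A6} (which guarantees strong well-posedness, and in particular the Feller property, of the limiting SDE \eqref{equ_SDE}), identifies any weak limit of ${}_xU^d$ with ${}_xU$. Skorokhod representation can then be invoked to realize the convergence pathwise on a common probability space. The main obstacle I anticipate is exactly the passage from weak convergence at the level of a $\pi_1$-mixture (delivered by \cref{thm_diffusion}) to weak convergence of the individual conditional laws: disintegrating $U^d\Rightarrow U$ into ${}_xU^d\Rightarrow {}_xU$ for every $x$ rather than merely $\pi_1$-a.e.\ $x$ requires a continuity-in-initial-state estimate for the sped-up process that is uniform in $d$. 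Such an estimate should follow from the smoothness bounds in \ref{assumption_A3_plus} and \ref{assumption_A4_plus} combined with the local Gaussian structure of the RWM proposals, but it is the most delicate step of the argument and is the reason for the strengthened hypotheses \ref{assumption_A2_pp} and $\pi^d(F_d^c)=\bigO(d^{-2-\delta})$ relative to \cref{thm_diffusion}.
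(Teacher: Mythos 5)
There is a genuine gap in stage (ii), which is exactly the step where this theorem differs from \cref{thm_diffusion}. You assert that ``reversibility of the RWM dynamics \ldots allows one to dominate the law at any finite step by $\pi^d$ times a bounded factor on an event of high probability,'' and conclude that the chain started at \emph{any} $x\in F_d$ stays in the typical sets with probability $1-\bigO(d^{-\delta})$. No such domination holds for a chain started at a fixed point: the law of $X^d(k)$ given $X^d(0)=x$ has no uniformly bounded density with respect to $\pi^d$, and this is precisely why the stationary-start argument of \cref{thm_diffusion} does not transfer pointwise. The paper's actual mechanism is different and weaker in its conclusion: it sets $p(d,x):=\Pr\bigl(X^d(\lfloor ds\rfloor)\notin F_d\cap F_d',\ \exists\, 0\le s\le t \mid X_1^d(0)=x\bigr)$, notes that $\EE_{x\sim\pi_1}[p(d,x)]\le dt\,\pi^d(F_d^c)+dt\,\pi^d((F_d')^c)$ by a union bound \emph{under stationarity}, and then applies Markov's inequality and the Borel--Cantelli lemma in $d$ to conclude $p(d,x)\to 0$ for $\pi_1$-almost every $x$. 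The result is therefore an a.e.-$x$ statement (as in \cite[Proposition 3]{Roberts2016}), not the ``any $x\in F_d$'' statement you claim, and your stage (iii) programme of disintegrating $U^d\Rightarrow U$ into ${}_xU^d\Rightarrow{}_xU$ for \emph{every} $x$ via a $d$-uniform continuity-in-initial-state estimate is not carried out in the paper and is not needed.

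Relatedly, you misidentify the role of \ref{assumption_A2_pp}. It is not there to feed an Aldous--Kurtz tightness criterion; it supplies the fifth moment $\EE|R_d+S_d|^5=\bigO(d^{-2-6\delta})$ so that $\Pr_{\pi^d}(|R_d+S_d|>d^{-\delta})\le d^{5\delta}\,\EE|R_d+S_d|^5=\bigO(d^{-2-\delta})$, which after multiplication by $dt$ in the union bound is summable over $d$ --- exactly what Borel--Cantelli requires. The tail condition $\pi^d(F_d^c)=\bigO(d^{-2-\delta})$ plays the same summability role for the $F_d$ part. Once the typical-set control is established for $\pi_1$-a.e.\ starting point, the generator computation of \cref{thm_diffusion} goes through verbatim conditional on the start; no separate tightness or limit-identification machinery beyond what \cref{thm_diffusion} already provides is invoked. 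To repair your proposal you should replace the reversibility/domination claim and the disintegration step with this Borel--Cantelli argument, and weaken your conclusion to $\pi_1$-a.e.\ $x$.
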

\begin{proof}
	See \cref{proof_thm_diffusion2}.
\end{proof}

\begin{remark}
	The new assumption \ref{assumption_A2_pp} is stronger than \ref{assumption_A2_plus} but is still not strong. To see this, for graphical models with $d$ nodes, each with $\bigO(l_d)$ links, we have at most $\bigO(dl_d^2)$ $3$-cliques. Under flat tail assumptions, at most $\bigO(d^2l_d^3)$ terms in the summation in \ref{assumption_A2_pp} is not zero. Therefore, \ref{assumption_A2_pp} holds for any graphical model with $l_d=o(d^{1/3-2\delta})$ and bounded second partial derivatives of $\log\pi^d$. Note that this is only for the worst case, as many realistic graphical models do not have $3$-cliques.
\end{remark}

\subsection{Complexity Bounds via Diffusion Limits}\label{subsection_complexity_bounds}


In the following, by
combing \cref{thm_diffusion2} and \cref{lemma_RR16}, we present a complexity bound for the RWM algorithm which holds for much more general target distributions comparing with \cite{Roberts2016}. More specifically, if the target distribution satisfies the conditions given in \cref{thm_diffusion2} which allows dependent coordinates of the target distribution, the RWM algorithm in $d$ dimensions takes $\bigO(d)$ iterations to converge to stationarity.
\begin{corollary}{(Complexity Bound for RWM Algorithms)}\label{corollary_complexity}
	{Under the conditions of \cref{thm_diffusion2}, for any $\epsilon>0$, there exists $D<\infty$ and $T<\infty$, such that
	\[
	\EE_{X_1^d(0)\sim \pi_1}\|\mathcal{L}_{X_1^d(0)}(X^d_1(\lfloor dt\rfloor))-\pi_1\|_{\textrm{KR}}<\epsilon,\quad \forall t\ge T, d\ge D,
	\]
	where $\pi_1$ denotes the marginal stationary distribution of the first coordinate.}
\end{corollary}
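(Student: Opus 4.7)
The plan is to apply \cref{lemma_RR16} with the role of the process sequence played by the sped-up first-coordinate processes ${}_xU^d(t):=X_1^d(\lfloor dt\rfloor)$, each viewed as a c\`adl\`ag Markov process on $\Reals$ with marginal stationary distribution $\pi_1$. \cref{thm_diffusion2} already supplies the crucial weak convergence ${}_xU^d\Rightarrow {}_xU$ in the Skorokhod topology for each fixed starting state $x$, so the corollary should reduce to a routine verification of the hypotheses of \cref{lemma_RR16}.

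First I would verify the stationarity hypothesis. Each ${}_xU^d$ has $\pi_1$ as its marginal stationary law because the RWM chain on $\Reals^d$ is $\pi^d$-reversible, hence $\pi^d$-invariant, and marginalizing to the first coordinate gives $\pi_1$; the time rescaling $t\mapsto \lfloor dt\rfloor$ does not affect invariance. For the limit $U$, the SDE in \eqref{equ_SDE} is a Langevin diffusion with drift $h(\ell)\tilde{\pi}'/(2\tilde{\pi})$ and constant diffusion coefficient $\sqrt{h(\ell)}$, so its invariant density is proportional to $\tilde{\pi}$; since $(\log\tilde{\pi})'$ is Lipschitz by \ref{assumption_A6}, standard Langevin ergodicity gives $U(t)\Rightarrow\tilde{\pi}$ as $t\to\infty$. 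The matching $\tilde{\pi}=\pi_1$ then follows from the hypothesis in \cref{thm_diffusion2} that the first coordinate of $X^{\infty}$ converges weakly to $\pi_1$, forcing the common stationary distribution required by \cref{lemma_RR16}.

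The last step is to upgrade the conditional weak convergence from \cref{thm_diffusion2} to the unconditional statement required by \cref{lemma_RR16}. When $X_1^d(0)\sim\pi_1$, the unconditional law of ${}_{X_1^d(0)}U^d$ in Skorokhod space is the $\pi_1$-mixture of the conditional laws $\mathcal{L}({}_xU^d)$, so weak convergence transfers to the mixture via the Portmanteau theorem and dominated convergence, using measurability of $x\mapsto\mathcal{L}({}_xU^d)$. \cref{lemma_RR16} then yields $D,T<\infty$ such that the $\pi_1$-averaged KR distance from $\mathcal{L}_{X_1^d(0)}({}_xU^d(t))$ to $\pi_1$ is less than $\epsilon$ for all $t\ge T$ and $d\ge D$, which on substituting ${}_xU^d(t)=X_1^d(\lfloor dt\rfloor)$ is exactly the claim. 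The main obstacle, such as it is, lies in this bookkeeping: carefully identifying $\tilde{\pi}$ with $\pi_1$ and passing from the conditional to the unconditional formulation---no genuinely new analytic estimates beyond those already embedded in \cref{thm_diffusion2} and \cref{lemma_RR16} are required.
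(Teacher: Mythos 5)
Your proposal is correct and takes essentially the same route as the paper, whose entire proof is the one-line observation that the result follows by applying \cref{lemma_RR16} to the sped-up first-coordinate processes supplied by \cref{thm_diffusion2}. Your additional verifications---stationarity of $\pi_1$ via reversibility, the identification $\tilde{\pi}=\pi_1$, and the passage from the conditional laws $\mathcal{L}({}_xU^d)$ to the $\pi_1$-mixture needed for \cref{lemma_RR16}---are exactly the bookkeeping the paper leaves implicit.
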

\begin{proof}
	The result directly comes from \cref{lemma_RR16} and \cref{thm_diffusion2}.
\end{proof}

\section{Examples and Applications}\label{section_examples}

In this section, we further discuss examples and applications of the main results in \cref{section_main_results}. We first discuss in \cref{subsection_verify} on verifying the assumptions of \cref{thm_ESJD} for realistic MCMC models. We have explained in \cref{remark_A1,remark_A2,remark_A3,remark_A4} that  \ref{assumption_A1}, \ref{assumption_A2}, \ref{assumption_A3}, and \ref{assumption_A4} are typically very weak conditions and they hold for some classes of graphical models. However, as discussed in \cref{remark_A5}, the assumption  \ref{assumption_A5} may need to be verified case by case. Particularly, in order to satisfy \ref{assumption_A5}, we may need to make additional assumptions on the observed data. Fortunately, we show by a simple Gaussian example in \cref{example_toy} that, in some cases, \ref{assumption_A5} can be easily verified without any further assumptions. 
Then, in \cref{subsection_realistic},
{we extend the simple Gaussian example in \cref{example_toy} to a more realistic MCMC model in \cref{example_realistic} and show it satisfies all the assumptions required by \cref{thm_ESJD}.} Thus, the acceptance rate $0.234$ is indeed asymptotically optimal for this realistic MCMC model.

\subsection{Discussions on \cref{thm_ESJD}}\label{subsection_verify}

The optimal scaling result for maximizing ESJD in \cref{thm_ESJD} requires one to verify that the target distribution satisfies \ref{assumption_A1}, \ref{assumption_A2}, \ref{assumption_A3}, \ref{assumption_A4}, and \ref{assumption_A5}. We discuss how to verify the conditions on the target distribution required by \cref{thm_ESJD} in practice. We explain that \ref{assumption_A1}, \ref{assumption_A2}, \ref{assumption_A3} and \ref{assumption_A4} are quite mild and usually easy to be verified. Therefore, we usually only need to focus on the
WLLN condition in  \ref{assumption_A5}, which might be difficult to check in practice. Throughout this subsection, we demonstrate verification of all the assumptions by a simple Gaussian example, which can be seen as a simplified version of typical Bayesian hierarchical models.
\begin{example}{(A Gaussian Example)}\label{example_toy}
	Consider a simple Gaussian MCMC model
	\[
	\begin{split}
	Y_{ij}~|~\theta_{ij}\quad &\sim \mathcal{N}(\theta_{ij},1),\quad i,j\in\{1,\dots, n\}\\
	\theta_{ij}~|~\mu_j &\sim \mathcal{N}(\mu_j,1), \quad i\in \{1,\dots, n\}\\
	\mu_j~|~\nu &\sim \mathcal{N}(\nu,1)\\
	\nu &\sim \textrm{ flat prior on }\mathbb{R},\\
	\end{split}
	\]
	where $\{Y_{ij}\}_{i,j=1}^n$ are the observed data, and $x^d=(\nu,\{\mu_j\}_{j=1}^n, \{\theta_{ij}\}_{i,j=1}^n)$ are parameters. Note that we have the number of parameters $d=n^2+n+1$ in this example. The target distribution (i.e. the posterior distribution) satisfies
	\[
	\begin{split}
	\pi^d(x^d)&=\Pr(x^d~|~ \{Y_{ij}\}_{i,j=1}^n)
	\propto \prod_{j=1}^n\prod_{i=1}^n\frac{1}{\sqrt{2\pi }}e^{-\frac{(\mu_j-\nu)^2}{2}}\frac{1}{\sqrt{2\pi }}e^{-\frac{(\theta_{ij}-\mu_j)^2}{2}}\frac{1}{\sqrt{2\pi}}e^{-\frac{(Y_{ij}-\theta_{ij})^2}{2}}.
	\end{split}
	\]
Note that the hyperparameters $\nu$ is conditionally independent given $\{\theta_{ij}\}$. Therefore, $\nu$ is only directly dependent with $n$ coordinates $\{\mu_j\}_{j=1}^n$. We can define the ``neighborhoods'' of $\nu$ using the collection of $\mu_j, j=1,\dots,n$.  Similarly, $\mu_j$ is directly dependent with $\nu$ and $\{\theta_{ij}\}_{i=1}^n$ and $\theta_{ij}$ is directly dependent with $\mu_j$. Therefore, if we choose the directly dependent coordinates as ``neighborhoods'', we have $l_d=n+1=\bigO(d^{1/2})$.	
\end{example}	

\subsubsection{Verifying  \ref{assumption_A1} to \ref{assumption_A4}}
First of all, the two conditions for $(i,j): j\neq \mathcal{H}_i$ in \ref{assumption_A1} and \ref{assumption_A3} hold trivially for graphical models.
Furthermore, in \cref{example_toy}, the parameter $\nu$ is conditional independent with all $\theta_{ij}$ and the corresponding conditional posterior distributions all have Gaussian tails, which implies \ref{assumption_A2} holds for any pair of coordinates $(\nu,\theta_{ij})$. Similarly, one can easily verify the assumption holds for other pairs of parameters.

Next, all the conditions on the third partial derivatives of $\log \pi^d$ hold, since there is no $3$-cliques. Moreover, in \cref{example_toy}, we have $l_d=\bigO(d^{1/2})$. The second partial derivative is $\bigO(1)$, and the density $\pi^d$ is bounded, so 
 the following conditions hold without the need of choosing $\{F_d\}$:
\[
&\sup_{(i,j): j\in \mathcal{H}_i}\sup_{x^d\in F_d^+} \frac{\partial^2 \log \pi^d(x^d)}{\partial x_i \partial x_j}=o(\sqrt{d/l_d}),\quad
\sup_{x^d \in F_d^+}  \pi^d(x^d)=o(d^{1/2-\alpha}).
\]

Finally, the last two conditions are almost immediately true once \ref{assumption_A5} has been verified: 
\[
& \sup_{i\in\{1,\dots,d\}}\sup_{x^d \in F_d^+}\frac{\partial \log \pi^d(x^d)}{\partial x_i}=\bigO(d^{\alpha}),\quad
\sup_{x^d \in F_d^+}1/I(x^d)=\bigO(d^{\alpha/2}).
\]
To see this, under \ref{assumption_A5}, we have
$\frac{1}{d}\sum_{i=1}^d \left(\frac{\partial}{\partial x_i}\log \pi^d(x^d) \right)^2 \to \bar{I}_d$. If $\bar{I}_d\to \bar{I}$ and $\bar{I}>0$, then we can select constant $K_2>0$ small enough such that $\bar{I}>K_2 d^{-\alpha/2}>0$ then $\bar{I}_d> K_2d^{-\alpha/2}$ for all large enough $d$. Next, by choosing the typical set $F_d$ such that for any $x^d\in F_d^+$, we have
$\frac{\partial \log \pi^d(x^d)}{\partial x_i} \le K_1 d^{\alpha},\quad I_d(x^d)\ge K_2d^{-\alpha/2}$,
where $K_1$ is a large enough constant. Then it suffices to check if $\{F_d\}$ is a valid sequence of typical sets such that
{$\pi^d(F_d)\to 1$.}
 For \cref{example_toy}, we have $X^d=(\nu,\{\nu_j\}_{j=1}^n,\{\theta_{ij}\}_{i,j=1}^n)$. We will show later that \ref{assumption_A5} holds such that under $X^d\sim \pi^d$ we have
$\frac{1}{d}\sum_{i=1}^d \left(\frac{\partial}{\partial x_i}\log \pi^d(X^d) \right)^2 \to 3$.
For example, we can choose $K_2=0.01$, $K_1=100$, and the typical set $F_d$ such that, for any $X^d=x^d\in F_d^+$, we have
\[
&I_d(x^d)>0.01 n^{-\alpha},\quad \frac{\partial \log \pi^d}{\partial \nu }=n(\bar{\mu}-\nu)\le 100n^{2\alpha},\\
& \frac{\partial \log \pi^d}{\partial \mu_j}=(n+1)\left(\frac{\sum_i \theta_{ij}+\nu}{n+1}-\mu_j\right)\le 100n^{2\alpha},\\
&\frac{\partial \log \pi^d}{\partial \theta_{ij}}=2\left(\frac{Y_{ij}+\mu_j}{2}-\theta_{ij}\right)\le 100 n^{2\alpha},
\] 
where $\alpha<1/2$ can be arbitrarily close to $1/2$. 	Observing that, under $X^d\sim \pi^d$, we have the following conditional distributions.
\begin{equation}\label{eq_observe_cond_iid}
\begin{split}
\theta_{ij}~|~Y_{ij},\mu_j\quad &\sim^{\textrm{indep.}} \mathcal{N}\left(\frac{\mu_j+Y_{ij}}{2},\frac{1}{2}\right),\quad i,j\in\{1,\dots, n\},\\
\mu_j~|~\sum_i \theta_{ij}, \nu &\sim^{\textrm{indep.}} \mathcal{N}\left(\frac{\sum_i \theta_{ij}+\nu}{n+1},\frac{1}{n+1}\right), \quad i\in \{1,\dots, n\},\\
\nu~|~\bar{\mu} &\sim \mathcal{N}\left(\bar{\mu},\frac{1}{n}\right).
\end{split}
\end{equation}
Then it can be easily verified that {$\pi^d(F_d)\to 1$.} 
\subsubsection{Verifying  \ref{assumption_A5}}

One assumption of \cref{thm_ESJD} that could be difficult to verify in practice is \ref{assumption_A5}.
It requires the sequence of random variables $\{I_d(X^d)\}$ converge to a sequence of constants in probability. We feel this assumption has to be checked case by case and it is hard to get general sufficient condition for it to hold. For realistic MCMC models, this may require assumptions on the observed data so that the posterior distribution has certain ``concentration'' properties as $d\to \infty$.  

Fortunately, for \cref{example_toy}, we can verify that  \ref{assumption_A5} holds without any further assumption on the observed data $\{Y_{ij}\}$. Note that in \cref{example_toy}, we have
	\[
	&\left(\frac{\partial \log \pi^d}{\partial \nu}\right)^2=\left(\sum_{j}(\mu_j-\nu)\right)^2=n^2\left( \bar{\mu}-\nu\right)^2,\\
	&\left(\frac{\partial \log \pi^d}{\partial \mu_j}\right)^2=\left(\sum_i (\theta_{ij}-\mu_j) -(\mu_j-\nu)\right)^2=(n+1)^2\left(\frac{\sum_i \theta_{ij}+\nu}{n+1}-\mu_j\right)^2,\\
	&\left(\frac{\partial \log \pi^d}{\partial \theta_{ij}}\right)^2=\left((Y_{ij}-\theta_{ij})-(\theta_{ij}-\mu_j)\right)^2=4\left(\frac{Y_{ij}+\mu_j}{2}-\theta_{ij}\right)^2.
	\]
	Hence, if suffices to show that, under $X^d=(\nu,\{\mu_j\}_{j=1}^n,\{\theta_{ij}\}_{i,j=1}^n)\sim \pi^d$, the following three terms converges to some constants in probability or in distribution:
	\[
	&\frac{1}{d}\left(\frac{\partial \log \pi^d}{\partial \nu}\right)^2=\frac{n^2}{n^2+n+1}(\bar{\mu}-\nu)^2,\\
	&\frac{1}{d}\sum_j\left(\frac{\partial \log \pi^d}{\partial \mu_j}\right)^2=\frac{(n+1)^2}{n^2+n+1}\sum_j\left(\frac{\sum_i \theta_{ij}+\nu}{n+1}-\mu_j\right)^2,\\
	&\frac{1}{d}\sum_{ij}\left(\frac{\partial \log \pi^d}{\partial \theta_{ij}}\right)^2=\frac{4}{d}\sum_{ij}\left(\frac{Y_{ij}+\mu_j}{2}-\theta_{ij}\right)^2.
	\]
	We have observed that the target distribution $\pi^d$ has conditional independence structure in \cref{eq_observe_cond_iid},
	which immediately  leads to
	\[
	&(\bar{\mu}-\nu)^2\to^{\Pr} 0,\quad
	\sum_j\left(\frac{\sum_i\theta_{ij}+\nu}{n+1}-\mu_j\right)^2\to^{\Pr} 1,\quad
	\frac{1}{d}\sum_{ij}\left(\frac{Y_{ij}+\mu_j}{2}-\theta_{ij}\right)^2\to^{\Pr} \frac{1}{2}.
	\]
	Therefore, \ref{assumption_A5} is satisfied.
	
	Overall, we have checked all the assumptions of \cref{thm_ESJD} for our simple Gaussian example. Therefore, by \cref{thm_ESJD}, we have the following optimal scaling result for \cref{example_toy}.
	\begin{proposition}
		The optimal scaling for \cref{example_toy} by maximizing ESJD is to choose (to two decimal places)
		$\hat{\ell}\approx\frac{2.38}{\EE_{X^d\sim \pi^d}[ \sqrt{I(X^d)}]}\to \frac{2.38}{\sqrt{3}}\approx 1.37$
		and the corresponding asymptotic acceptance rate is (to three decimal places) $0.234$.
	\end{proposition}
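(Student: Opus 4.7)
The proposition is a direct application of \cref{thm_ESJD} to \cref{example_toy}, so the plan is simply to verify assumptions \ref{assumption_A1}--\ref{assumption_A5} and then read off the numerical values from the formula $\hat{\ell}\to 2.38/\EE_{X^d\sim\pi^d}[\sqrt{I_d(X^d)}]$. Most of the verification of \ref{assumption_A1}--\ref{assumption_A4} is already carried out in \cref{subsection_verify}: the graphical-model structure (with $l_d = n+1 = \bigO(d^{1/2})$) makes the off-neighborhood parts of \ref{assumption_A1} and \ref{assumption_A3} trivial; the Gaussian tails of all conditional posteriors yield \ref{assumption_A2}; there are no $3$-cliques in the hierarchical graph so the third-derivative clique sum in \ref{assumption_A3} vanishes; and the bounded second partial derivatives with the explicit choice of $F_d$ described in the preceding subsection handle \ref{assumption_A4}. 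So the core task is to identify $\bar{I}$ by verifying \ref{assumption_A5}.

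For \ref{assumption_A5}, I would compute $I_d(X^d)$ coordinate-by-coordinate using the log-density derivatives already displayed in \cref{subsection_verify}. Summing,
\[
I_d(X^d) = \frac{n^2}{d}(\bar{\mu}-\nu)^2 + \frac{(n+1)^2}{d}\sum_{j=1}^n\left(\frac{\sum_i\theta_{ij}+\nu}{n+1}-\mu_j\right)^2 + \frac{4}{d}\sum_{i,j}\left(\frac{Y_{ij}+\mu_j}{2}-\theta_{ij}\right)^2,
\]
with $d=n^2+n+1$. The key observation is the conditional independence structure in \cref{eq_observe_cond_iid}: conditional on the appropriate parents, $\nu-\bar{\mu}\sim\mathcal{N}(0,1/n)$, each $\mu_j-(\sum_i\theta_{ij}+\nu)/(n+1)\sim \mathcal{N}(0,1/(n+1))$ independently across $j$, and each $\theta_{ij}-(Y_{ij}+\mu_j)/2\sim\mathcal{N}(0,1/2)$ independently across $(i,j)$. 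A weak law of large numbers argument applied term by term then gives $\frac{n^2}{d}(\bar\mu-\nu)^2\to^{\Pr} 0$ (since $n^2(\bar\mu-\nu)^2$ is $\bigO_{\Pr}(n)$ and $d=\Theta(n^2)$); $\frac{(n+1)^2}{d}\sum_j[\cdots]^2\to^{\Pr} 1$ (each summand has conditional mean $1/(n+1)$ and there are $n$ of them); and $\frac{4}{d}\sum_{i,j}[\cdots]^2\to^{\Pr} 2$ (each summand has conditional mean $1/2$). Summing yields $\bar{I}=0+1+2=3$, so \ref{assumption_A5} holds.

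Having verified all assumptions, \cref{thm_ESJD} applies. Substituting $\EE_{X^d\sim\pi^d}[\sqrt{I_d(X^d)}]\to\sqrt{3}$ (which follows from $I_d(X^d)\to 3$ in probability together with uniform integrability — the latter coming from the control on $F_d^+$ used for \ref{assumption_A4}), the formulas in the proof of \cref{thm_ESJD} give $\hat{\ell}\to 2.38/\sqrt{3}\approx 1.37$ and the asymptotic acceptance rate $2\Phi(-1.19)\approx 0.234$. The only nontrivial obstacle is the WLLN step for $I_d(X^d)$, since one must handle the three sums with different conditioning structures simultaneously; but the exact conditional Gaussian form in \cref{eq_observe_cond_iid} reduces this to a straightforward application of Chebyshev's inequality and does not require any assumption on the observed data $\{Y_{ij}\}$.
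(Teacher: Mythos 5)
Your proposal is correct and follows essentially the same route as the paper: the paper's proof of this proposition is precisely the content of \cref{subsection_verify}, namely verifying \ref{assumption_A1}--\ref{assumption_A4} via the graphical-model structure and Gaussian tails, and verifying \ref{assumption_A5} by decomposing $I_d(X^d)$ into the $\nu$, $\{\mu_j\}$, and $\{\theta_{ij}\}$ blocks and using the conditional independence structure in \cref{eq_observe_cond_iid} to get the limits $0$, $1$, and $2$ summing to $\bar{I}=3$. Your added remark that passing from $I_d(X^d)\to 3$ in probability to $\EE_{X^d\sim\pi^d}[\sqrt{I_d(X^d)}]\to\sqrt{3}$ needs uniform integrability is a small point of extra care that the paper leaves implicit, but it does not change the argument.
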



\subsection{Optimal Scaling of a Realistic MCMC Model}\label{subsection_realistic}
We first
discuss sufficient conditions for two more classes of graphical models. In \cref{corollary_jeffrey}, we give sufficient conditions for the first equation of \ref{assumption_A1}, \ref{assumption_A2}, and the first equation of \ref{assumption_A3} to hold for one particular class of graphical models. In \cref{corollary_aaron}, we give sufficient conditions for \ref{assumption_A5} to hold for one specific class of graphical models.

First, we consider the class of graphical models represented by the factor graphs:
\[\label{graphical_model0}
\pi^d(x^d)\propto\prod_{k=1}^{K_d}\psi_k(\{x_i: i\in C_k\}),
\]
where $C_k$ are cliques, $\psi_k$ are potentials, $K_d$ denotes the number of potentials.
\begin{proposition}\label{corollary_jeffrey}
	For the class of graphical models represented by \cref{graphical_model0}. Let $m_d$ denotes  the  maximum number of cliques a coordinate can belong to. If all the potentials $\psi_k$ have ``flat tails'' in the sense that for all $k$ we have $\frac{\partial \psi_k}{\partial x_i}\to 0$ as $|x_i|\to\infty$ for all  $i\in C_k$, and the cardinality of $C_k$ satisfies $\sup_k |C_k| =o(d/m_d)$, then the first equation in \ref{assumption_A1}, \ref{assumption_A2}, and the first equation in \ref{assumption_A3} hold.
\end{proposition}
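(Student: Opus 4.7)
The plan is to exploit the factor-graph structure to show that the first equations of \ref{assumption_A1} and \ref{assumption_A3} reduce to identically-zero statements, while \ref{assumption_A2} reduces to a one-dimensional integration-by-parts argument. First I would fix the natural choice of neighborhoods $\mathcal{H}_i := \bigcup_{k: i \in C_k} C_k$, so that $i \in \mathcal{H}_i$ automatically. The standing condition $l_d = o(d)$ then follows because each coordinate $i$ belongs to at most $m_d$ cliques of size $\le \sup_k |C_k|$, so $|\mathcal{H}_i| \le m_d \cdot \sup_k |C_k| = m_d \cdot o(d/m_d) = o(d)$.

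For the first equations of \ref{assumption_A1} and \ref{assumption_A3}, I would use that $j \notin \mathcal{H}_i$ means no single clique contains both $i$ and $j$. Writing $\log \pi^d(x^d) = \sum_k \log \psi_k(\{x_l : l \in C_k\}) + \mathrm{const}$, every summand depends on at most one of $x_i$ and $x_j$, so $\partial^2 \log \pi^d/(\partial x_i \partial x_j) \equiv 0$ and all higher mixed partials involving both $x_i$ and $x_j$ vanish identically. Both claims then hold trivially.

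The substantive part is \ref{assumption_A2}. Partitioning the cliques into those containing $i$ only, $j$ only, both, or neither (with the third class empty by $j \notin \mathcal{H}_i$) yields a factorization
\[
\pi^d(x^d) \propto A(x_i, x_{-i-j})\, B(x_j, x_{-i-j})\, C(x_{-i-j}),
\]
with $A = \prod_{k: i \in C_k} \psi_k$ and analogous definitions for $B$ and $C$. Normalizing in $(x_i, x_j)$ with $x_{-i-j}$ held fixed gives the conditional factorization $\pi_{i,j|-i-j}(x_i, x_j \mid x_{-i-j}) = f(x_i)\, g(x_j)$, where $f \propto A(\cdot, x_{-i-j})$ and $g \propto B(\cdot, x_{-i-j})$. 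Substituting into the integral defining \ref{assumption_A2}, the $1/(fg)$ factor cancels the $f(x_i)g(x_j)$ produced by the second derivatives, and the integral separates as $\left(\int f''(x_i)\, \dee x_i\right)\left(\int g''(x_j)\, \dee x_j\right)$. By the fundamental theorem of calculus each factor equals $f'(+\infty) - f'(-\infty)$, so it suffices to show $f'(\pm\infty) = 0$.

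The hardest part, I expect, will be this last step, because the flat-tail hypothesis is stated on each individual potential rather than on the marginal density $f$. Differentiating $f \propto A$ in $x_i$ gives $f'(x_i) \propto \sum_{k: i \in C_k} (\partial_{x_i} \psi_k) \prod_{k' \neq k,\, k': i \in C_{k'}} \psi_{k'}$. The flat-tail assumption supplies $\partial_{x_i} \psi_k \to 0$ as $|x_i| \to \infty$; combined with (implicit) boundedness of the remaining potentials uniformly in the other coordinates, each of the at most $m_d$ summands tends to zero, so $f'(\pm\infty) = 0$, and symmetrically $g'(\pm\infty) = 0$. Hence the double integral vanishes identically, which is certainly $o(1)$ uniformly over admissible $(i,j)$ and over $x_{-i-j}$, completing the verification of \ref{assumption_A2}.
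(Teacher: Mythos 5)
Your proof is correct and follows essentially the same route the paper intends: the paper states this proposition without a written proof, but the argument it has in mind is exactly the one you give — the mixed partials vanish identically for $j\notin\mathcal{H}_i$ by the clique structure, and the integral in \ref{assumption_A2} factors and reduces to $\bigl(\int f''\bigr)\bigl(\int g''\bigr)=0$ via the flat-tail condition, as sketched in \cref{remark_A1,remark_A2,remark_A3}. The one caveat you correctly flag — that transferring the flat-tail hypothesis from the individual potentials $\psi_k$ to the conditional marginal requires the remaining potentials in the product to stay bounded as $|x_i|\to\infty$ — is a genuine (if minor) gap in the proposition's hypotheses as stated, and is left implicit by the paper as well.
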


Next, we consider Bayesian hierarchical modeling where $K$ denotes the number of ``layers'' or ``stages'' of the model. We use $\theta^{(k)}, k=1,\dots,K$ to denote the parameter vector with length $n_k$ for the $k$-th layer, where $\theta^{(k)}:=(\theta^{(k)}_1,\dots,\theta^{(k)}_{n_k})$. We consider the special structure of the graphical model such that $\theta^{(k)}$ is only connected to $\theta^{(k-1)}$ and $\theta^{(k+1)}$. Using factor graphs, let $x^d=(\theta^{(1)},\dots,\theta^{(K)})$ we can represent the target distribution as
\[\label{graphical_model}
\pi^d(x^d)\propto\prod_{k=1}^K \psi_k(\theta^{(k-1)},\theta^{(k)}),
\]
where $d=\sum_{k=1}^K n_k$, $\{\psi_k\}$ are the potentials, and without loss of generality we assumed $\theta^{(0)}$ to be the observed data. 

In the following, we show that  \ref{assumption_A5} hold for the class of graphical models represented by \cref{graphical_model} under certain conditions.
\begin{proposition}\label{corollary_aaron}
	For the class of graphical models represented by \cref{graphical_model}, if
	$\theta^{(k)}=(\theta^{(k)}_1,\theta^{(k)}_2,\dots,\theta^{(k)}_{n_k})$ are independent conditional on $\theta^{(k-1)}$ and $\theta^{(k+1)}$ and this holds for all $k$. Moreover, if under $X^d=(\theta^{(1)},\dots,\theta^{(K)})\sim\pi^d$ all the potentials $\psi_k$ satisfy
	\[
	\sup_{i \in \{1,\dots,n_k\}}\left|\frac{\partial \log \psi_k}{\partial \theta^{(k)}_i}\right|=\bigO_{\Pr}\left(\sqrt{d/n_k}\right), \quad \sup_{j \in \{1,\dots,n_{k-1}\}}\left|\frac{\partial \log \psi_k}{\partial \theta^{(k-1)}_j}\right|=\bigO_{\Pr}\left(\sqrt{d/n_{k-1}}\right)
	\] 
	then  \ref{assumption_A5} holds.
\end{proposition}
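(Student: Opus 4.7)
The plan is to decompose $d \cdot I_d(X^d) = \sum_{k=1}^K S_k$, where $S_k := \sum_{i=1}^{n_k} g_{k,i}^2$ and $g_{k,i} := \partial \log \pi^d / \partial \theta^{(k)}_i = \partial \log \psi_k / \partial \theta^{(k)}_i + \partial \log \psi_{k+1} / \partial \theta^{(k)}_i$, and then to establish $I_d(X^d) - \bar I_d \to 0$ in probability via a conditional Chebyshev argument that exploits the nearest-neighbour chain structure of the model. A key preliminary observation is that the within-layer conditional independence assumption is equivalent to the statement that $\pi^d(\theta^{(k)} \mid \theta^{(k-1)}, \theta^{(k+1)}) \propto \psi_k \psi_{k+1}$ factorises over the coordinates of $\theta^{(k)}$, so one can write $\psi_k(\theta^{(k-1)},\theta^{(k)})\,\psi_{k+1}(\theta^{(k)},\theta^{(k+1)}) = \prod_i h_{k,i}(\theta^{(k-1)}, \theta^{(k)}_i, \theta^{(k+1)})$; consequently $g_{k,i}$ depends on $\theta^{(k)}$ only through the single coordinate $\theta^{(k)}_i$ once the two flanking layers are fixed.

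Next, I would split the layers by parity. Conditioning on the collection of even-indexed layers $\Xi := \{\theta^{(2m)}\}_m$, each odd-indexed $\theta^{(k)}$ is separated from every other odd layer by a fixed even layer in the chain, and within each odd layer the coordinates are conditionally independent. Hence $\{g_{k,i} : k \text{ odd},\, 1 \le i \le n_k\}$ is a mutually independent collection given $\Xi$. Conditional Chebyshev then gives
\begin{equation*}
\mathrm{Var}\!\left( \frac{1}{d} \sum_{k \text{ odd}} S_k \,\Big|\, \Xi \right) \le \frac{1}{d^2} \sum_{k \text{ odd}} \sum_{i=1}^{n_k} \mathbb{E}[g_{k,i}^4 \mid \Xi],
\end{equation*}
and after truncating onto the high-probability event on which the $\bigO_{\Pr}(\sqrt{d/n_k})$ bound on $\sup_i |g_{k,i}|$ holds, each fourth-moment term is $\bigO_{\Pr}((d/n_k)^2)$, so the right-hand side is $\bigO_{\Pr}\bigl(\sum_{k \text{ odd}} 1/n_k\bigr)$. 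Swapping the roles of even and odd gives an analogous bound, and combining the two via the total-variance identity $\mathrm{Var}(I_d) = \mathbb{E}[\mathrm{Var}(I_d \mid \Xi)] + \mathrm{Var}(\mathbb{E}[I_d \mid \Xi])$ (iterated once with the conditioning reversed to concentrate the remaining conditional means) controls $\mathrm{Var}(I_d)$, so that Chebyshev yields $I_d - \bar I_d \to 0$ in probability.

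The main obstacle lies in layers with bounded or very small $n_k$, for which the Chebyshev bound $1/n_k$ above does not tend to zero. For such layers (e.g.\ the top-level scalar $\nu$ of \cref{example_toy}), the $\bigO_{\Pr}(\sqrt{d/n_k})$ bound on the gradient is typically not tight; rather, posterior pooling forces $|\partial \log \psi_k / \partial \theta^{(k)}_i|$ to be of strictly smaller order than $\sqrt{d/n_k}$, so that the layer's contribution $S_k/d$ is actually $o_{\Pr}(1)$ directly and can be absorbed as a negligible residual. Handling this dichotomy between the ``large'' layers (which concentrate via the Chebyshev step) and the ``small'' layers (whose entire contribution to $I_d$ vanishes) is the delicate bookkeeping that finishes the proof; the rest is the decoupling identity of Step~1 together with routine conditional second-moment estimates.
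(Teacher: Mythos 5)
The paper states this proposition without supplying any proof, so there is nothing of its own to compare your argument against; I am judging the proposal on its merits. Your overall strategy --- decompose $d\,I_d(X^d)=\sum_k S_k$ layer by layer, note that within-layer conditional independence together with the chain structure of \cref{graphical_model} makes $\{g_{k,i}\}$ mutually independent given the alternating layers $\Xi$, and then run a conditional Chebyshev bound whose fourth moments are controlled by the hypothesised gradient bounds --- is the natural route and is surely the intended mechanism; the parity-conditioning trick is a clean way to make the independence precise.

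There are, however, two genuine gaps. The first is the small-layer step, which you flag but resolve only by asserting that "posterior pooling" makes the $\bigO_{\Pr}(\sqrt{d/n_k})$ bound non-tight; that is a claim about what typically happens, not a consequence of the stated hypotheses, and it can fail. Take $K=2$ with $\theta^{(1)}=(\theta_1,\dots,\theta_n)$, $\theta^{(2)}=\nu$ scalar, $Y_j\,|\,\theta_j\sim\normal(\theta_j,1)$, $\theta_j\,|\,\nu\sim\normal(\nu,1)$, $\nu\sim\normal(0,1)$, so $d=n+1$. The gradient bounds of the proposition hold coordinate-wise (which is also the only reading under which the paper's own verification for \cref{example_realistic} goes through), yet $\partial\log\pi^d/\partial\nu=n\bar\theta-(n+1)\nu$ is conditionally $\normal(0,n+1)$, so its square divided by $d$ converges to a nondegenerate $\chi^2_1$ variable and $I_d$ does \emph{not} concentrate. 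Your Chebyshev bound $\sum_k 1/n_k$ is therefore not an artifact: the argument genuinely needs every $n_k\to\infty$, or an explicit extra assumption that the small layers' contribution to $I_d$ is $o_{\Pr}(1)$ (which holds in \cref{example_toy} only because there $n_k/d\to 0$ for the thin layers). The second gap is the term $\var(\EE[I_d\,|\,\Xi])$ in your total-variance identity: each conditional mean $\EE[g_{k,i}^2\,|\,\Xi]$ is a function of the \emph{entire} flanking layers $\theta^{(k-1)},\theta^{(k+1)}$, not of a single coordinate, so "iterating with the conditioning reversed" does not return you to the additive, coordinate-wise independent form that your Chebyshev step requires; you need a separate argument that these conditional second moments depend on $\Xi$ only through functionals that themselves concentrate (true in the Gaussian examples, where they are constants, but not supplied in general). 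A smaller issue of the same flavour: truncating to the high-probability event converts the $\bigO_{\Pr}$ bounds into moment bounds only for concentration around a \emph{truncated} mean, whereas \ref{assumption_A5} is centred at $\bar I_d=\EE[I_d]$, so some uniform-integrability condition is being used implicitly.
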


Next, we extend the simple Gaussian example in \cref{example_toy} to a more realistic MCMC model which belongs to both classes of graphical models in \cref{graphical_model0,graphical_model} and show that all the assumptions for the optimal scaling result in \cref{thm_ESJD} hold.
\begin{example}{(A Realistic MCMC Model)}\label{example_realistic}
	Consider a realistic MCMC model
	\[
	\begin{split}
	Y_{ij}~|~\theta_{ij}\quad &\sim \mathcal{N}(\theta_{ij},W),\quad i,j\in\{1,\dots, n\}\\
	\theta_{ij}~|~\mu_j &\sim \mathcal{N}(\mu_j,V), \quad i\in \{1,\dots, n\}\\
	\mu_j~|~\nu &\sim \mathcal{N}(\nu,A)\\
	\nu &\sim \textrm{ flat prior on }\mathbb{R},\\
	A &\sim \textbf{IG}(a,b),
	\end{split}
	\]
	where $x^d=(\nu,A,\{\mu_j\}_{j=1}^n, \{\theta_{ij}\}_{i,j=1}^n)$ are parameters, $\{Y_{ij}\}$ are the observed data, and $a,b,W,V$ are known constants. 
	\end{example}

	We further assume that  the observed data $\{Y_{ij}\}$ is not abnormal so that the  posterior of the hyperparameter $A$ concentrates to some unknown constant.
	\begin{assumption*}{}
			The  posterior of the hyperparameter $A$ in \cref{example_realistic} concentrates to some unknown constant $A_0>0$ as $n\to\infty$.
	\end{assumption*}
	Note that this is a very reasonable assumption which implies the MCMC model is not seriously misspecified. We do not discuss sufficient conditions on the observed data $\{Y_{ij}\}_{i,j=1}^n$ for concentration of posterior distribution of $A$ here since it is not the focus of this paper. Next, we show that, under this assumption, the realistic MCMC model satisfies all the conditions required for optimal scaling in \cref{thm_ESJD}. Therefore, the acceptance rate $0.234$ is indeed asymptotically optimal for this MCMC model in the sense of maximizing ESJD.

\begin{proposition}\label{prop_realistic_example}
	Under the above assumption, the optimal asymptotic acceptance rate for the realistic MCMC model in \cref{example_realistic} is (to three decimal places) $0.234$.
\end{proposition}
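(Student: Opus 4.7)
The plan is to verify the five hypotheses \ref{assumption_A1}--\ref{assumption_A5} of \cref{thm_ESJD} for the model in \cref{example_realistic} and then invoke that theorem. The target $\pi^d$ is a factor graph of the form \cref{graphical_model0}, with potentials arising from the Gaussian likelihoods $\psi(Y_{ij},\theta_{ij})$, the hierarchical Gaussian potentials $\psi(\theta_{ij},\mu_j)$ and $\psi(\mu_j,\nu,A)$, and the inverse-gamma prior on $A$. I will choose the neighborhoods $\mathcal H_\nu=\mathcal H_A=\{\nu,A,\mu_1,\dots,\mu_n\}$, $\mathcal H_{\mu_j}=\{\nu,A,\mu_j,\theta_{1j},\dots,\theta_{nj}\}$, and $\mathcal H_{\theta_{ij}}=\{\mu_j,\theta_{ij}\}$, so that $l_d=O(n)=O(\sqrt d)$. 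The graph has no $3$-clique since it is a strict hierarchy, $\sup_k|C_k|=3$, and every coordinate belongs to at most $m_d=O(n)$ cliques.

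Given this setup, I will first apply \cref{corollary_jeffrey} to obtain the non-neighborhood parts of \ref{assumption_A1}, \ref{assumption_A2}, and \ref{assumption_A3} immediately: $\sup_k|C_k|=3=o(d/m_d)$ holds, and each Gaussian or inverse-gamma potential has the required flat tails. The neighborhood portions of \ref{assumption_A1} and \ref{assumption_A3}, and the $o(d^{1/2})$ bound on $\partial^3\log\pi^d/\partial x_i^3$, reduce to bounds on derivatives of Gaussian expressions or of polynomials in $1/A$; these are uniformly controlled once the typical set $F_d$ is chosen to keep $A\in[A_0/2,2A_0]$. The last clause of \ref{assumption_A3}, $\sum_{i\ne j\ne k}|\partial^3_{ijk}\log\pi^d|=0$, is automatic from the absence of $3$-cliques.

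For \ref{assumption_A4} and \ref{assumption_A5}, I mirror the calculation in \cref{example_toy}. Under $X^d\sim\pi^d$, the Gibbs full conditionals are explicit Gaussians,
\[
\theta_{ij}\mid\cdots\ \sim\ \mathcal N\Bigl(\tfrac{VY_{ij}+W\mu_j}{V+W},\tfrac{VW}{V+W}\Bigr),\quad \mu_j\mid\cdots\ \sim\ \mathcal N\bigl(\cdot,(n/V+1/A)^{-1}\bigr),\quad \nu\mid\bar\mu,A\ \sim\ \mathcal N(\bar\mu,A/n),
\]
while $A$ has an inverse-gamma full conditional. Each squared score $(\partial_{x_i}\log\pi^d)^2$ thus has an explicit conditional mean, and a law of large numbers for conditionally independent terms gives, after dividing by $d=n^2+n+2$, that the $n^2$ terms from the $\theta_{ij}$-coordinates contribute a limit $1/W+1/V$ in probability, the $n$ terms from the $\mu_j$-coordinates contribute $1/V$, and the $\nu$ and $A$ terms are $O(1/d)$ each. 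Hence $I_d(X^d)$ converges in probability to the positive constant $\bar I=1/W+2/V$, establishing \ref{assumption_A5}. The remaining parts of \ref{assumption_A4} (polynomial growth of first derivatives, $1/I_d=O(d^{\alpha/2})$) then follow exactly as in \cref{subsection_verify}, using the tightness of $I_d(X^d)$ to define $F_d$.

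The main obstacle is uniformly controlling the $A$-dependent portions of $\log\pi^d$: both $\partial_A\log\pi^d=\tfrac{1}{2A^2}\sum_j(\mu_j-\nu)^2-\tfrac{n+2a+2}{2A}+\tfrac{b}{A^2}$ and the mixed derivatives $\partial_A\partial_{\mu_j}\log\pi^d$, $\partial_A^2\log\pi^d$ involve negative powers of $A$ that are unbounded without an a priori lower bound on $A$. The standing assumption that the posterior of $A$ concentrates at $A_0>0$ allows me to enlarge $F_d$ to include $\{A\in[A_0/2,2A_0]\}$ while retaining $\pi^d(F_d)\to 1$, after which every remaining bound reduces to a routine Gaussian calculation as in \cref{example_toy}. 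Once \ref{assumption_A1}--\ref{assumption_A5} are in hand, \cref{thm_ESJD} yields the asymptotically optimal acceptance rate $0.234$.
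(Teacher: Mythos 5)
Your proposal is correct and follows essentially the same route as the paper: invoke \cref{corollary_jeffrey} for the non-neighborhood parts of \ref{assumption_A1}, \ref{assumption_A2}, \ref{assumption_A3}, use the assumed posterior concentration of $A$ at $A_0>0$ to control the negative powers of $A$ on a typical set, and verify the remaining conditions from the explicit Gaussian full conditionals exactly as for \cref{example_toy}, then apply \cref{thm_ESJD}. The only (harmless) difference is that you verify \ref{assumption_A5} by a direct conditional law of large numbers giving the explicit limit $\bar{I}=1/W+2/V$ (consistent with the value $3$ obtained in \cref{example_toy} when $V=W=A=1$), whereas the paper routes this step through \cref{corollary_aaron} by checking the $\bigO_{\Pr}(\sqrt{d/n_k})$ bounds on the potentials; both arguments rest on the same conditional-independence structure.
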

\begin{proof}
	See \cref{proof_prop_realistic_example}.		
\end{proof}


\section*{Acknowledgment}
The authors thank Jeffrey Negrea for suggestions on graphical models, and Aaron Smith for helpful discussions. J.~R.\ is partly supported by the Natural Sciences and Engineering Research Council (NSERC) of Canada.

\printbibliography

@Article{Roberts1997,
  author    = {Roberts, Gareth O and Gelman, Andrew and Gilks, Walter R},
  title     = {Weak convergence and optimal scaling of random walk {Metropolis} algorithms},
  journal   = {The Annals of Applied Probability},
  year      = {1997},
  volume    = {7},
  number    = {1},
  pages     = {110--120},
  owner     = {Jun},
  publisher = {Institute of Mathematical Statistics},
  timestamp = {2017.12.09},
}

@Article{Roberts1998,
  author    = {Roberts, Gareth O and Rosenthal, Jeffrey S},
  title     = {Optimal scaling of discrete approximations to Langevin diffusions},
  journal   = {Journal of the Royal Statistical Society: Series B (Statistical Methodology)},
  year      = {1998},
  volume    = {60},
  number    = {1},
  pages     = {255--268},
  owner     = {Jun},
  publisher = {Wiley Online Library},
  timestamp = {2018.05.06},
}

@Article{Roberts2016,
  author    = {Roberts, Gareth O and Rosenthal, Jeffrey S},
  title     = {Complexity bounds for {Markov chain Monte Carlo} algorithms via diffusion limits},
  journal   = {Journal of Applied Probability},
  year      = {2016},
  volume    = {53},
  number    = {2},
  pages     = {410--420},
  owner     = {Jun},
  publisher = {Cambridge University Press},
  timestamp = {2018.05.06},
}

@Article{Yang2017,
  author    = {Yang, Jun and Rosenthal, Jeffrey S},
  title     = {Complexity Results for {MCMC} derived from Quantitative Bounds},
  journal   = {arXiv:1708.00829},
  year      = {2017},
  owner     = {Jun},
  timestamp = {2018.05.06},
}

@Article{Beskos2013,
  author    = {Beskos, Alexandros and Pillai, Natesh and Roberts, Gareth and Sanz-Serna, Jesus-Maria and Stuart, Andrew},
  title     = {Optimal tuning of the hybrid {Monte Carlo} algorithm},
  journal   = {Bernoulli},
  year      = {2013},
  volume    = {19},
  number    = {5A},
  pages     = {1501--1534},
  owner     = {Jun},
  publisher = {Bernoulli Society for Mathematical Statistics and Probability},
  timestamp = {2018.05.06},
}

@Article{Roberts2001,
  author    = {Roberts, Gareth O and Rosenthal, Jeffrey S},
  title     = {Optimal scaling for various {M}etropolis--{H}astings algorithms},
  journal   = {Statistical science},
  year      = {2001},
  volume    = {16},
  number    = {4},
  pages     = {351--367},
  owner     = {Jun},
  publisher = {Institute of Mathematical Statistics},
  timestamp = {2018.05.06},
}

@Article{Neal2006,
  author    = {Neal, Peter and Roberts, Gareth},
  title     = {Optimal scaling for partially updating {MCMC} algorithms},
  journal   = {The Annals of Applied Probability},
  year      = {2006},
  volume    = {16},
  number    = {2},
  pages     = {475--515},
  owner     = {Jun},
  publisher = {Institute of Mathematical Statistics},
  timestamp = {2018.05.06},
}

@Article{Christensen2005,
  author    = {Christensen, Ole F and Roberts, Gareth O and Rosenthal, Jeffrey S},
  title     = {Scaling limits for the transient phase of local {Metropolis--Hastings} algorithms},
  journal   = {Journal of the Royal Statistical Society: Series B (Statistical Methodology)},
  year      = {2005},
  volume    = {67},
  number    = {2},
  pages     = {253--268},
  owner     = {Jun},
  publisher = {Wiley Online Library},
  timestamp = {2018.05.06},
}

@Article{Bedard2008,
  author    = {B{\'e}dard, Mylene and Rosenthal, Jeffrey S},
  title     = {Optimal scaling of {M}etropolis algorithms: Heading toward general target distributions},
  journal   = {Canadian Journal of Statistics},
  year      = {2008},
  volume    = {36},
  number    = {4},
  pages     = {483--503},
  owner     = {Jun},
  publisher = {Wiley Online Library},
  timestamp = {2018.05.06},
}

@Article{Roberts2014,
  author    = {Roberts, Gareth O and Rosenthal, Jeffrey S},
  title     = {Minimising {MCMC} variance via diffusion limits, with an application to simulated tempering},
  journal   = {The Annals of Applied Probability},
  year      = {2014},
  volume    = {24},
  number    = {1},
  pages     = {131--149},
  owner     = {Jun},
  publisher = {Institute of Mathematical Statistics},
  timestamp = {2018.05.06},
}

@Article{Neal2011,
  author    = {Neal, Peter and Roberts, Gareth},
  title     = {Optimal scaling of random walk {Metropolis} algorithms with non-{Gaussian} proposals},
  journal   = {Methodology and Computing in Applied Probability},
  year      = {2011},
  volume    = {13},
  number    = {3},
  pages     = {583--601},
  owner     = {Jun},
  publisher = {Springer},
  timestamp = {2018.05.06},
}

@Article{Sherlock2009,
  author    = {Sherlock, Chris and Roberts, Gareth},
  title     = {Optimal scaling of the random walk {Metropolis} on elliptically symmetric unimodal targets},
  journal   = {Bernoulli},
  year      = {2009},
  pages     = {774--798},
  owner     = {Jun},
  publisher = {JSTOR},
  timestamp = {2018.05.06},
}

@Article{Pillai2012,
  author    = {Pillai, Natesh S. and Stuart, Andrew M. and Thi{\'e}ry, Alexandre H.},
  title     = {Optimal scaling and diffusion limits for the {Langevin} algorithm in high dimensions},
  journal   = {The Annals of Applied Probability},
  year      = {2012},
  volume    = {22},
  number    = {6},
  pages     = {2320--2356},
  month     = {12},
  doi       = {10.1214/11-AAP828},
  fjournal  = {The Annals of Applied Probability},
  publisher = {The Institute of Mathematical Statistics},
  url       = {https://doi.org/10.1214/11-AAP828},
}

@Article{Zanella2017,
  author   = {Giacomo Zanella and Mylène Bédard and Wilfrid S. Kendall},
  title    = {A {Dirichlet} form approach to {MCMC} optimal scaling},
  journal  = {Stochastic Processes and their Applications},
  year     = {2017},
  volume   = {127},
  number   = {12},
  pages    = {4053 - 4082},
  doi      = {https://doi.org/10.1016/j.spa.2017.03.021},
  keywords = {Dirichlet form, Infinite-dimensional stochastic processes, Asymptotic analysis for MCMC, Markov chain Monte Carlo (MCMC), Metropolis–Hastings Random Walk (MHRW) sampler, Mosco convergence, Scaling limits, Optimal scaling, Weak convergence},
  url      = {http://www.sciencedirect.com/science/article/pii/S0304414917300947},
}

@Article{Rosenthal1995a,
  author    = {Rosenthal, Jeffrey S},
  title     = {Minorization conditions and convergence rates for {M}arkov chain {M}onte {C}arlo},
  journal   = {Journal of the American Statistical Association},
  year      = {1995},
  volume    = {90},
  number    = {430},
  pages     = {558--566},
  owner     = {Jun},
  publisher = {Taylor \& Francis Group},
  timestamp = {2017.06.10},
}

@Article{Rosenthal1996,
  author    = {Rosenthal, Jeffrey S},
  title     = {Analysis of the Gibbs sampler for a model related to {J}ames-{S}tein estimators},
  journal   = {Statistics and Computing},
  year      = {1996},
  volume    = {6},
  number    = {3},
  pages     = {269--275},
  owner     = {Jun},
  publisher = {Springer},
  timestamp = {2017.06.11},
}

@Article{Rajaratnam2015,
  author    = {Rajaratnam, Bala and Sparks, Doug},
  title     = {MCMC-based inference in the era of big data: A fundamental analysis of the convergence complexity of high-dimensional chains},
  journal   = {arXiv:1508.00947},
  year      = {2015},
  owner     = {Jun},
  timestamp = {2017.07.24},
}

@Book{Meyn2012,
  title     = {Markov chains and stochastic stability},
  publisher = {Springer Science \& Business Media},
  year      = {2012},
  author    = {Meyn, Sean P and Tweedie, Richard L},
  owner     = {Jun},
  timestamp = {2018.05.06},
}

@Book{Brooks2011,
  title     = {Handbook of {M}arkov chain {M}onte {C}arlo},
  publisher = {CRC press},
  year      = {2011},
  author    = {Brooks, Steve and Gelman, Andrew and Jones, Galin and Meng, Xiao-Li},
  owner     = {Jun},
  timestamp = {2017.07.24},
}

@Article{Bedard2007,
  author    = {B{\'e}dard, Myl{\'e}ne},
  title     = {Weak convergence of {M}etropolis algorithms for non-i.i.d. target distributions},
  journal   = {The Annals of Applied Probability},
  year      = {2007},
  volume    = {17},
  number    = {4},
  pages     = {1222--1244},
  month     = {08},
  doi       = {10.1214/105051607000000096},
  fjournal  = {The Annals of Applied Probability},
  publisher = {The Institute of Mathematical Statistics},
  url       = {https://doi.org/10.1214/105051607000000096},
}

@Article{Bedard2008a,
  author    = {B{\'e}dard, Myl{\`e}ne},
  title     = {Optimal acceptance rates for {M}etropolis algorithms: moving beyond 0.234},
  journal   = {Stochastic Processes and their Applications},
  year      = {2008},
  volume    = {118},
  number    = {12},
  pages     = {2198--2222},
  doi       = {10.1016/j.spa.2007.12.005},
  keywords  = {60F05 (65C40)},
  mrnumber  = {2474348},
  owner     = {Jun},
  timestamp = {2018.05.06},
}

@Book{Ethier1986,
  title     = {Markov processes: Characterization and Convergence},
  publisher = {John Wiley \& Sons, Inc., New York},
  year      = {1986},
  author    = {Ethier, Stewart N. and Kurtz, Thomas G.},
  series    = {Wiley Series in Probability and Mathematical Statistics: Probability and Mathematical Statistics},
  isbn      = {0-471-08186-8},
  note      = {Characterization and convergence},
  doi       = {10.1002/9780470316658},
  keywords  = {60J25 (60B10 60F05 60F17 60G44 60J80)},
  mrnumber  = {838085},
  owner     = {Jun},
  pages     = {x+534},
  timestamp = {2018.05.06},
}

@Article{Jourdain2014,
  author    = {Jourdain, Benjamin and Leli\`evre, Tony and Miasojedow, B\l a\.zej},
  title     = {Optimal scaling for the transient phase of {M}etropolis {H}astings algorithms: the longtime behavior},
  journal   = {Bernoulli},
  year      = {2014},
  volume    = {20},
  number    = {4},
  pages     = {1930--1978},
  doi       = {10.3150/13-BEJ546},
  keywords  = {65C05 (60B10 60J10 60J22 65C40)},
  mrnumber  = {3263094},
  owner     = {Jun},
  timestamp = {2018.05.06},
}

@Article{Jourdain2015,
  author    = {Benjamin Jourdain and Tony Leli{\`{e}}vre and B{{\l}}a{\.{z}}ej Miasojedow},
  title     = {Optimal scaling for the transient phase of the random walk {M}etropolis algorithm: The mean-field limit},
  journal   = {The Annals of Applied Probability},
  year      = {2015},
  volume    = {25},
  number    = {4},
  pages     = {2263--2300},
  abstract  = {We consider the random walk Metropolis algorithm on $\mathbb{R}^n$ with Gaussian proposals, and when the target probability measure is the $n$-fold product of a one-dimensional law. In the limit $n\to\infty$, it is well known (see [Ann. Appl. Probab. 7 (1997) 110-120]) that, when the variance of the proposal scales inversely proportional to the dimension $n$ whereas time is accelerated by the factor $n$, a diffusive limit is obtained for each component of the Markov chain if this chain starts at equilibrium. This paper extends this result when the initial distribution is not the target probability measure. Remarking that the interaction between the components of the chain due to the common acceptance/rejection of the proposed moves is of mean-field type, we obtain a propagation of chaos result under the same scaling as in the stationary case. This proves that, in terms of the dimension $n$, the same scaling holds for the transient phase of the Metropolis-Hastings algorithm as near stationarity. The diffusive and mean-field limit of each component is a diffusion process nonlinear in the sense of McKean. This opens the route to new investigations of the optimal choice for the variance of the proposal distribution in order to accelerate convergence to equilibrium (see [Optimal scaling for the transient phase of Metropolis-Hastings algorithms: The longtime behavior Bernoulli (2014) To appear]).},
  doi       = {10.1214/14-AAP1048},
  file      = {:http\://arxiv.org/pdf/1210.7639v3:PDF},
  keywords  = {math.PR},
  owner     = {Jun},
  timestamp = {2018.05.06},
}

@Article{Neal2008,
  author    = {Neal, Peter and Roberts, Gareth},
  title     = {Optimal scaling for random walk {M}etropolis on spherically constrained target densities},
  journal   = {Methodology and Computing in Applied Probability},
  year      = {2008},
  volume    = {10},
  number    = {2},
  pages     = {277--297},
  doi       = {10.1007/s11009-007-9046-2},
  keywords  = {60G50 (60F05 65C05)},
  mrnumber  = {2399684},
  owner     = {Jun},
  timestamp = {2018.05.06},
}

@Article{Neal2012,
  author    = {Neal, Peter and Roberts, Gareth and Yuen, Wai Kong},
  title     = {Optimal scaling of random walk {M}etropolis algorithms with discontinuous target densities},
  journal   = {The Annals of Applied Probability},
  year      = {2012},
  volume    = {22},
  number    = {5},
  pages     = {1880--1927},
  doi       = {10.1214/11-AAP817},
  keywords  = {60F05 (60J22 65C05)},
  mrnumber  = {3025684},
  owner     = {Jun},
  timestamp = {2018.05.06},
}

@Article{Breyer2000,
  author    = {Breyer, Laird A and Roberts, Gareth O},
  title     = {From Metropolis to diffusions: Gibbs states and optimal scaling},
  journal   = {Stochastic Processes and their Applications},
  year      = {2000},
  volume    = {90},
  number    = {2},
  pages     = {181--206},
  owner     = {Jun},
  publisher = {Elsevier},
  timestamp = {2018.05.12},
}

@Article{Breyer2004,
  author    = {Breyer, Laird Arnault and Piccioni, Mauro and Scarlatti, Sergio},
  title     = {Optimal scaling of {MALA} for nonlinear regression},
  journal   = {The Annals of Applied Probability},
  year      = {2004},
  volume    = {14},
  number    = {3},
  pages     = {1479--1505},
  owner     = {Jun},
  publisher = {Institute of Mathematical Statistics},
  timestamp = {2018.05.12},
}

@Article{Mattingly2012,
  author    = {Mattingly, Jonathan C and Pillai, Natesh S and Stuart, Andrew M},
  title     = {Diffusion limits of the random walk {M}etropolis algorithm in high dimensions},
  journal   = {The Annals of Applied Probability},
  year      = {2012},
  volume    = {22},
  number    = {3},
  pages     = {881--930},
  owner     = {Jun},
  publisher = {Institute of Mathematical Statistics},
  timestamp = {2018.05.12},
}

@Article{Hastings1970,
  author    = {Hastings, W Keith},
  title     = {Monte Carlo sampling methods using Markov chains and their applications},
  journal   = {Biometrika},
  year      = {1970},
  volume    = {57},
  number    = {1},
  pages     = {97--109},
  owner     = {Jun},
  publisher = {Biometrika Trust},
  timestamp = {2018.05.12},
}

@Article{Metropolis1953,
  author    = {Metropolis, Nicholas and Rosenbluth, Arianna W and Rosenbluth, Marshall N and Teller, Augusta H and Teller, Edward},
  title     = {Equation of state calculations by fast computing machines},
  journal   = {The Journal of Chemical Physics},
  year      = {1953},
  volume    = {21},
  number    = {6},
  pages     = {1087--1092},
  owner     = {Jun},
  publisher = {AIP},
  timestamp = {2018.05.12},
}

@Book{Liu2008,
  title     = {Monte Carlo strategies in scientific computing},
  publisher = {Springer Science \& Business Media},
  year      = {2008},
  author    = {Liu, Jun S},
  owner     = {Jun},
  timestamp = {2018.05.12},
}

@Article{Robert2004,
  author    = {Robert, Christian P and Casella, G.},
  title     = {Monte Carlo Statistical Methods},
  journal   = {Springer, New York},
  year      = {2004},
  owner     = {Jun},
  timestamp = {2018.05.12},
}

@Article{Beskos2008,
  author    = {Beskos, Alexandros and Roberts, Gareth and Stuart, Andrew and Voss, Jochen},
  title     = {{MCMC} methods for diffusion bridges},
  journal   = {Stochastics and Dynamics},
  year      = {2008},
  volume    = {8},
  number    = {03},
  pages     = {319--350},
  owner     = {Jun},
  publisher = {World Scientific},
  timestamp = {2018.05.12},
}

@Article{Beskos2009,
  author    = {Beskos, Alexandros and Roberts, Gareth and Stuart, Andrew},
  title     = {Optimal scalings for local Metropolis--Hastings chains on nonproduct targets in high dimensions},
  journal   = {The Annals of Applied Probability},
  year      = {2009},
  volume    = {19},
  number    = {3},
  pages     = {863--898},
  owner     = {Jun},
  publisher = {Institute of Mathematical Statistics},
  timestamp = {2018.05.12},
}

@Article{Sherlock2010,
  author    = {Sherlock, Chris and Fearnhead, Paul and Roberts, Gareth O},
  title     = {The random walk Metropolis: linking theory and practice through a case study},
  journal   = {Statistical Science},
  year      = {2010},
  pages     = {172--190},
  owner     = {Jun},
  publisher = {JSTOR},
  timestamp = {2018.05.12},
}

@Article{Rosenthal2011,
  author    = {Rosenthal, Jeffrey S},
  title     = {Optimal proposal distributions and adaptive {MCMC}},
  journal   = {Handbook of Markov Chain Monte Carlo},
  year      = {2011},
  volume    = {4},
  owner     = {Jun},
  publisher = {CRC Press, Boca Raton, FL},
  timestamp = {2018.05.12},
}

@Article{Andrieu2008,
  author    = {Andrieu, Christophe and Thoms, Johannes},
  title     = {A tutorial on adaptive {MCMC}},
  journal   = {Statistics and Computing},
  year      = {2008},
  volume    = {18},
  number    = {4},
  pages     = {343--373},
  owner     = {Jun},
  publisher = {Springer},
  timestamp = {2018.05.12},
}

@Article{Sherlock2015,
  author    = {Sherlock, Chris and Thiery, Alexandre H and Roberts, Gareth O and Rosenthal, Jeffrey S},
  title     = {On the efficiency of pseudo-marginal random walk {M}etropolis algorithms},
  journal   = {The Annals of Statistics},
  year      = {2015},
  volume    = {43},
  number    = {1},
  pages     = {238--275},
  owner     = {Jun},
  publisher = {Institute of Mathematical Statistics},
  timestamp = {2018.05.12},
}

@Article{Bedard2012,
  author    = {B{\'e}dard, Myl{\`e}ne and Douc, Randal and Moulines, Eric},
  title     = {Scaling analysis of multiple-try {MCMC} methods},
  journal   = {Stochastic Processes and their Applications},
  year      = {2012},
  volume    = {122},
  number    = {3},
  pages     = {758--786},
  owner     = {Jun},
  publisher = {Elsevier},
  timestamp = {2018.05.12},
}

@Article{Gelman1992,
  author    = {Gelman, Andrew and Rubin, Donald B},
  title     = {Inference from iterative simulation using multiple sequences},
  journal   = {Statistical Science},
  year      = {1992},
  pages     = {457--472},
  owner     = {Jun},
  publisher = {JSTOR},
  timestamp = {2017.07.24},
}

@Book{Gilks1995,
  title     = {{M}arkov chain {M}onte {C}arlo in practice},
  publisher = {CRC press},
  year      = {1995},
  author    = {Gilks, Walter R and Richardson, Sylvia and Spiegelhalter, David},
  owner     = {Jun},
  timestamp = {2018.05.13},
}

@Article{Meyn1994,
  author    = {Meyn, Sean P and Tweedie, Robert L},
  title     = {Computable bounds for geometric convergence rates of {M}arkov chains},
  journal   = {The Annals of Applied Probability},
  year      = {1994},
  pages     = {981--1011},
  owner     = {Jun},
  publisher = {JSTOR},
  timestamp = {2017.07.31},
}

@Article{Roberts1999,
  author    = {Roberts, Gareth O and Tweedie, Richard L},
  title     = {Bounds on regeneration times and convergence rates for {M}arkov chains},
  journal   = {Stochastic Processes and their applications},
  year      = {1999},
  volume    = {80},
  number    = {2},
  pages     = {211--229},
  owner     = {Jun},
  publisher = {Elsevier},
  timestamp = {2017.07.31},
}

@Article{Jones2001,
  author    = {Jones, Galin L and Hobert, James P},
  title     = {Honest exploration of intractable probability distributions via {M}arkov chain {M}onte {C}arlo},
  journal   = {Statistical Science},
  year      = {2001},
  pages     = {312--334},
  owner     = {Jun},
  publisher = {JSTOR},
  timestamp = {2017.07.24},
}

@Article{Jones2004,
  author    = {Jones, Galin L and Hobert, James P},
  title     = {Sufficient burn-in for {G}ibbs samplers for a hierarchical random effects model},
  journal   = {The Annals of Statistics},
  year      = {2004},
  volume    = {32},
  number    = {2},
  pages     = {784--817},
  owner     = {Jun},
  publisher = {Institute of Mathematical Statistics},
  timestamp = {2017.07.24},
}

@Article{Rosenthal2002,
  author    = {Rosenthal, Jeffrey S},
  title     = {Quantitative convergence rates of {M}arkov chains: A simple account},
  journal   = {Electronic Communications in Probability},
  year      = {2002},
  volume    = {7},
  pages     = {123--128},
  owner     = {Jun},
  publisher = {The Institute of Mathematical Statistics and the Bernoulli Society},
  timestamp = {2017.07.24},
}

@Article{Baxendale2005,
  author    = {Baxendale, Peter H},
  title     = {Renewal theory and computable convergence rates for geometrically ergodic {M}arkov chains},
  journal   = {The Annals of Applied Probability},
  year      = {2005},
  volume    = {15},
  number    = {1B},
  pages     = {700--738},
  owner     = {Jun},
  publisher = {Institute of Mathematical Statistics},
  timestamp = {2017.07.31},
}

@Article{Flegal2008,
  author    = {Flegal, James M and Haran, Murali and Jones, Galin L},
  title     = {{M}arkov chain {M}onte {C}arlo: Can we trust the third significant figure?},
  journal   = {Statistical Science},
  year      = {2008},
  pages     = {250--260},
  owner     = {Jun},
  publisher = {JSTOR},
  timestamp = {2018.05.13},
}

@InCollection{Cobham1965,
  author    = {Alan Cobham},
  title     = {The Intrinsic Computational Difficulty of Functions},
  booktitle = {Logic, Methodology and Philosophy of Science: Proceedings of the 1964 International Congress (Studies in Logic and the Foundations of Mathematics)},
  publisher = {North-Holland Publishing},
  year      = {1965},
  editor    = {Yehoshua Bar{-}Hillel},
  pages     = {24--30},
  owner     = {Jun},
  timestamp = {2018.05.13},
}

@InProceedings{Cook1971,
  author       = {Cook, Stephen A},
  title        = {The complexity of theorem-proving procedures},
  booktitle    = {Proceedings of the third annual ACM symposium on Theory of computing},
  year         = {1971},
  pages        = {151--158},
  organization = {ACM},
  owner        = {Jun},
  timestamp    = {2017.07.24},
}

@Article{Sinclair1989,
  author    = {Sinclair, Alistair and Jerrum, Mark},
  title     = {Approximate counting, uniform generation and rapidly mixing {M}arkov chains},
  journal   = {Information and Computation},
  year      = {1989},
  volume    = {82},
  number    = {1},
  pages     = {93--133},
  owner     = {Jun},
  publisher = {Elsevier},
  timestamp = {2017.07.24},
}

@Article{Lovasz2006,
  author    = {Lov{\'a}sz, L{\'a}szl{\'o} and Vempala, Santosh},
  title     = {Hit-and-run from a corner},
  journal   = {SIAM Journal on Computing},
  year      = {2006},
  volume    = {35},
  number    = {4},
  pages     = {985--1005},
  owner     = {Jun},
  publisher = {SIAM},
  timestamp = {2018.01.28},
}

@Article{Lovasz2003,
  author    = {Lov{\'a}sz, L{\'a}szl{\'o} and Vempala, Santosh},
  title     = {Hit-and-run is fast and fun},
  journal   = {preprint, Microsoft Research},
  year      = {2003},
  owner     = {Jun},
  timestamp = {2018.01.28},
}

@Article{Vempala2005,
  author    = {Vempala, S},
  title     = {Geometric random walk: a survey},
  journal   = {Combinatorial and Computational Geometry},
  year      = {2005},
  volume    = {52},
  pages     = {577-616},
  owner     = {Jun},
  timestamp = {2018.01.28},
}

@Article{Woodard2009,
  author    = {Woodard, Dawn and Schmidler, Scott and Huber, Mark},
  title     = {Sufficient conditions for torpid mixing of parallel and simulated tempering},
  journal   = {Electronic Journal of Probability},
  year      = {2009},
  volume    = {14},
  pages     = {780--804},
  owner     = {Jun},
  publisher = {The Institute of Mathematical Statistics and the Bernoulli Society},
  timestamp = {2017.07.24},
}

@Article{Woodard2009a,
  author    = {Woodard, Dawn B and Schmidler, Scott C and Huber, Mark},
  title     = {Conditions for rapid mixing of parallel and simulated tempering on multimodal distributions},
  journal   = {The Annals of Applied Probability},
  year      = {2009},
  pages     = {617--640},
  owner     = {Jun},
  publisher = {JSTOR},
  timestamp = {2017.07.24},
}

@Article{Durmus2017,
  author    = {Durmus, Alain and Le Corff, Sylvain and Moulines, Eric and Roberts, Gareth O},
  title     = {Optimal scaling of the random walk Metropolis algorithm under $L^p$ mean differentiability},
  journal   = {Journal of Applied Probability},
  year      = {2017},
  volume    = {54},
  number    = {4},
  pages     = {1233--1260},
  owner     = {Jun},
  publisher = {Cambridge University Press},
  timestamp = {2018.05.14},
}

@Article{Atchade2011,
  author    = {Atchad{\'e}, Yves F. and Roberts, Gareth O. and Rosenthal, Jeffrey S.},
  title     = {Towards optimal scaling of metropolis-coupled Markov chain Monte Carlo},
  journal   = {Statistics and Computing},
  year      = {2011},
  volume    = {21},
  number    = {4},
  pages     = {555--568},
  month     = {Oct},
  abstract  = {We consider optimal temperature spacings for Metropolis-coupled Markov chain Monte Carlo (MCMCMC) and Simulated Tempering algorithms. We prove that, under certain conditions, it is optimal (in terms of maximising the expected squared jumping distance) to space the temperatures so that the proportion of temperature swaps which are accepted is approximately 0.234. This generalises related work by physicists, and is consistent with previous work about optimal scaling of random-walk Metropolis algorithms.},
  day       = {01},
  doi       = {10.1007/s11222-010-9192-1},
  owner     = {Jun},
  timestamp = {2018.11.13},
  url       = {https://doi.org/10.1007/s11222-010-9192-1},
}

@Article{Pasarica2010,
  author    = {Cristian Pasarica and Andrew Gelman},
  title     = {Adaptively Scaling the Metropolis Algorithm using Expected Squard Jumped Distance},
  journal   = {Statistica Sinica},
  year      = {2010},
  volume    = {20},
  number    = {1},
  pages     = {343--364},
  publisher = {Institute of Statistical Science, Academia Sinica},
}

@InCollection{Hairer2011,
  author    = {Hairer, M. and Stuart, A. and Voss, J.},
  title     = {Signal processing problems on function space: {B}ayesian formulation, stochastic {PDE}s and effective {MCMC} methods},
  booktitle = {The {O}xford handbook of nonlinear filtering},
  publisher = {Oxford Univ. Press, Oxford},
  year      = {2011},
  pages     = {833--873},
  keywords  = {62F15 (60G35 62M20 65C05)},
  mrnumber  = {2884617},
  owner     = {Jun},
  timestamp = {2018.11.29},
}

@Article{Stuart2010,
  author    = {Stuart, A. M.},
  title     = {Inverse problems: a {B}ayesian perspective},
  journal   = {Acta Numerica},
  year      = {2010},
  volume    = {19},
  pages     = {451--559},
  doi       = {10.1017/S0962492910000061},
  keywords  = {65J22 (35R25 35R30 62C10 65J20)},
  mrnumber  = {2652785},
  owner     = {Jun},
  timestamp = {2018.11.29},
}

@PhdThesis{Tawn2017,
  author    = {Nicholas Tawn},
  title     = {Towards Optimality of the Parallel Tempering Algorithm},
  school    = {University of Warwick},
  year      = {2017},
  owner     = {Jun},
  timestamp = {2019.03.09},
}

@PhdThesis{Sherlock2006,
  author    = {Christopher Sherlock},
  title     = {Methodology for inference on the {M}arkov modulated {P}oisson process and theory for optimal scaling of the random walk {M}etropolis},
  school    = {Lancaster University},
  year      = {2006},
  owner     = {Jun},
  timestamp = {2019.03.09},
}

@Article{Dalalyan2019,
  author    = {Dalalyan, Arnak S and Karagulyan, Avetik},
  title     = {User-friendly guarantees for the Langevin Monte Carlo with inaccurate gradient},
  journal   = {Stochastic Processes and their Applications},
  year      = {2019},
  publisher = {Elsevier},
}

@Article{Kuntz2018,
  author    = {Kuntz, Juan and Ottobre, Michela and Stuart, Andrew M},
  title     = {Non-stationary phase of the {MALA} algorithm},
  journal   = {Stochastics and Partial Differential Equations: Analysis and Computations},
  year      = {2018},
  volume    = {6},
  number    = {3},
  pages     = {446--499},
  publisher = {Springer},
}

@Article{Kuntz2019,
  author    = {Kuntz, Juan and Ottobre, Michela and Stuart, Andrew M.},
  title     = {Diffusion limit for the random walk Metropolis algorithm out of stationarity},
  journal   = {Annales de l'Institut Henri Poincar{\'e}, Probabilit{\'e}s et Statistiques},
  year      = {2019},
  volume    = {55},
  number    = {3},
  pages     = {1599--1648},
  month     = {08},
  doi       = {10.1214/18-AIHP929},
  publisher = {Institut Henri Poincar{\'e}},
  url       = {https://doi.org/10.1214/18-AIHP929},
}

@Article{Chen2018,
  author      = {Victor Chen and Matthew M. Dunlop and Omiros Papaspiliopoulos and Andrew M. Stuart},
  title       = {Dimension-Robust {MCMC} in {B}ayesian Inverse Problems},
  abstract    = {The methodology developed in this article is motivated by a wide range of prediction and uncertainty quantification problems that arise in Statistics, Machine Learning and Applied Mathematics, such as non-parametric regression, multi-class classification and inversion of partial differential equations. One popular formulation of such problems is as Bayesian inverse problems, where a prior distribution is used to regularize inference on a high-dimensional latent state, typically a function or a field. It is common that such priors are non-Gaussian, for example piecewise-constant or heavy-tailed, and/or hierarchical, in the sense of involving a further set of low-dimensional parameters, which, for example, control the scale or smoothness of the latent state. In this formulation prediction and uncertainty quantification relies on efficient exploration of the posterior distribution of latent states and parameters. This article introduces a framework for efficient MCMC sampling in Bayesian inverse problems that capitalizes upon two fundamental ideas in MCMC, non-centred parameterisations of hierarchical models and dimension-robust samplers for latent Gaussian processes. Using a range of diverse applications we showcase that the proposed framework is dimension-robust, that is, the efficiency of the MCMC sampling does not deteriorate as the dimension of the latent state gets higher. We showcase the full potential of the machinery we develop in the article in semi-supervised multi-class classification, where our sampling algorithm is used within an active learning framework to guide the selection of input data to manually label in order to achieve high predictive accuracy with a minimal number of labelled data.},
  date        = {2018-03-09},
  eprintclass = {stat.ME},
  eprinttype  = {arXiv},
  file        = {:http\://arxiv.org/pdf/1803.03344v2:PDF},
  groups      = {[Jun:]},
  owner       = {Jun},
  timestamp   = {2019.10.21},
}

@Article{Cotter2013,
  author    = {Cotter, Simon L and Roberts, Gareth O and Stuart, Andrew M and White, David},
  title     = {{MCMC} methods for functions: modifying old algorithms to make them faster},
  journal   = {Statistical Science},
  year      = {2013},
  pages     = {424--446},
  publisher = {JSTOR},
}

@Article{Bedard2019,
  author    = {B{\'e}dard, Myl{\`e}ne},
  title     = {Hierarchical Models and Tuning of Random Walk Metropolis Algorithms},
  journal   = {Journal of Probability and Statistics},
  year      = {2019},
  volume    = {2019},
  publisher = {Hindawi},
}

@Article{Bou-Rabee2018,
  author    = {Nawaf Bou-Rabee and Andreas Eberle and Raphael Zimmer},
  title     = {Coupling and Convergence for Hamiltonian Monte Carlo},
  year      = {2018},
  note      = {arXiv:1805.00452},
  owner     = {Jun},
  timestamp = {2019.10.21},
}

@Article{Dalalyan2017,
  author    = {Dalalyan, Arnak S},
  title     = {Theoretical guarantees for approximate sampling from smooth and log-concave densities},
  journal   = {Journal of the Royal Statistical Society: Series B (Statistical Methodology)},
  year      = {2017},
  volume    = {79},
  number    = {3},
  pages     = {651--676},
  publisher = {Wiley Online Library},
}

@InProceedings{Dwivedi2018,
  author    = {Dwivedi, Raaz and Chen, Yuansi and Wainwright, Martin J and Yu, Bin},
  title     = {Log-concave sampling: Metropolis-Hastings algorithms are fast!},
  booktitle = {Conference On Learning Theory},
  year      = {2018},
  pages     = {793--797},
}

@Article{Roberts1994,
  author    = {Roberts, Gareth O and Smith, Adrian FM},
  journal   = {Stochastic processes and their applications},
  title     = {Simple conditions for the convergence of the {G}ibbs sampler and {M}etropolis-{H}astings algorithms},
  year      = {1994},
  number    = {2},
  pages     = {207--216},
  volume    = {49},
  owner     = {Jun},
  publisher = {North-Holland},
  timestamp = {2020.04.21},
}

@Article{Mengersen1996,
  author    = {Mengersen, Kerrie L and Tweedie, Richard L},
  journal   = {The Annals of Statistics},
  title     = {Rates of convergence of the {H}astings and {M}etropolis algorithms},
  year      = {1996},
  number    = {1},
  pages     = {101--121},
  volume    = {24},
  owner     = {Jun},
  publisher = {Institute of Mathematical Statistics},
  timestamp = {2020.04.21},
}

@Article{Jordan2004,
  author    = {Jordan, M. I.},
  journal   = {Statistical Science},
  title     = {Graphical Models},
  year      = {2004},
  number    = {1},
  pages     = {140--155},
  volume    = {19},
  owner     = {Jun},
  timestamp = {2020.04.21},
}

\appendix

\renewcommand{\refname}{Appendix}

\section{Proof of \cref{key_theorem}}\label{proof_key_theorem}

Throughout the proof, for simplicity, we assume the coordinates are linear ordered. The ``neighborhoods'' of a coordinate is defined by  $\mathcal{H}_i:=\{j: |i-j|<l_d\}$. Therefore $\sup_{(i,j): j\in \mathcal{H}_i}$ can be simplified to $\sup_{|i-j|<l_d}$ and $\sup_{(i,j): j\notin \mathcal{H}_i}$ can be simplified to $\sup_{|i-j|\ge l_d}$. Note that the use of linear ordering is only for simplifying notations. It is straightforward to extend the proof to the cases of general ordering.

For \cref{key_theorem}, we only prove
\[
\left|\ESJD(d)-2\frac{d\ell^2}{d-1}\EE_{X^d\sim \pi^d}\left[ \Phi\left(-\frac{\ell \sqrt{I_d(X^d)}}{2}\right) \right]\right|\to 0,
\]
since the proof of 
\[
\left|\EE_{X^d\sim\pi^d}\EE_{Y^d}\left(1\wedge \frac{\pi^d(Y^d)}{\pi^d(X^d)}\right)-2\EE_{X^d\sim \pi^d}\left[ \Phi\left(-\frac{\ell \sqrt{I_d(X^d)}}{2}\right) \right]\right|\to 0
\]
follows similarly.

First, we write $\ESJD$ as
$\ESJD(d)=:\sum_{i=1}^d\ESJD_i(d)$,
where 
\[
\ESJD_i(d):=\EE_{X^d\sim\pi^d}\EE_{Y^d}\left[(Y_i-X_i)^2\left(1\wedge \frac{\pi^d(Y^d)}{\pi^d(X^d)}\right)\right].
\]
Then it suffices to show that 
\[
\sup_{i\in\{1,\dots,d\}}\left|\ESJD_i(d)-\frac{2\ell^2}{d-1}\EE_{X^d\sim \pi^d}\left[ \Phi\left(-\frac{\ell \sqrt{I_d(X^d)}}{2}\right) \right]\right|=o(d^{-1}).
\]
Writing 
$\ESJD_i(d)=\EE_{X^d\sim\pi^d}\EE_{Y_i}\left[(Y_i-X_i)^2\EE_{Y_{-i}}\left(1\wedge \frac{\pi^d(Y^d)}{\pi^d(X^d)}\right)\right]$, it suffices to show that uniformly over $i\in\{1,\dots,d\}$
\[
&\EE_{X^d\sim \pi^d}\left|\EE_{Y_i}\left[(Y_i-X_i)^2\EE_{Y_{-i}}\left(1\wedge \frac{\pi^d(Y^d)}{\pi^d(X^d)}\right)\right]-\frac{2\ell^2}{d-1} \Phi\left(-\frac{\ell \sqrt{I_d(X^d)}}{2}\right)\right|\\
&=\EE_{X^d\sim \pi^d}\left|\EE_{Y_i}\left\{(Y_i-X_i)^2\left[\EE_{Y_{-i}}\left(1\wedge \frac{\pi^d(Y^d)}{\pi^d(X^d)}\right)-2 \Phi\left(-\frac{\ell \sqrt{I_d(X^d)}}{2}\right)\right]\right\}\right|\\
&=o(d^{-1}).
\]
It then suffices to show 
\[
&\sup_{x^d \in F_d}\left|\EE_{Y_i}\left\{(Y_i-x_i)^2\mathbbm{1}_{y^d(i)\in F_d^{(i)}}\left[\EE_{Y_{-i}}\left(1\wedge \frac{\pi^d(Y^d)}{\pi^d(x^d)}\right)-2 \Phi\left(-\frac{\ell \sqrt{I_d(x^d)}}{2}\right)\right]\right\}\right|\\
&\le \EE_{Y_i}\left\{(Y_i-x_i)^2 \sup_{y^d(i)\in F_d^{(i)}, x^d \in F_d}\left|\EE_{Y_{-i}}\left(1\wedge \frac{\pi^d(Y^d)}{\pi^d(x^d)}\right)-2 \Phi\left(-\frac{\ell \sqrt{I_d(x^d)}}{2}\right)\right|\right\}=o(d^{-1}),
\]
where $y^d(i):=(x_1,\dots,x_{i-1},Y_i,x_{i+1},\dots,x_d)$. 
Defining $M_{x^d}^{(i)}(Y_i):=\EE_{Y_{-i}}\left(1\wedge \frac{\pi^d(Y^d)}{\pi^d(x^d)}\right)$,
since
\[
\log\frac{\pi^d(Y^d)}{\pi^d(x^d)}&=\log\frac{\pi_i(Y_i)}{\pi_i(x_i)}+\log\frac{\pi_{-i}(Y_{-i}\,|\,Y_i)}{\pi_{-i}(x_{-i}\,|\,x_i)}\\
&=\left(\log\frac{\pi_i(Y_i)}{\pi_i(x_i)}+\log\frac{\pi_{-i}(x_{-i}\,|\,Y_i)}{\pi_{-i}(x_{-i}\,|\,x_i)}\right)
+\log\frac{\pi_{-i}(Y_{-i}\,|\,Y_i)}{\pi_{-i}(x_{-i}\,|\,Y_i)},
\]
we can write
\[\label{tmp_eq_def_M}
M_{x^d}^{(i)}(Y_i)=&\EE_{Y_{-i}}\left[1\wedge \frac{\pi^d(Y^d)}{\pi^d(x^d)}\right]=\EE_{Y_{-i}}\left[1\wedge \exp\left(\log\frac{\pi^d(Y^d)}{\pi^d(x^d)}\right)\right]\\
=&\EE_{Y_{-i}}\left[1\wedge \exp\left(\log\frac{\pi_i(Y_i)}{\pi_i(x_i)}+\log\frac{\pi_{-i}(x_{-i}\,|\,Y_i)}{\pi_{-i}(x_{-i}\,|\,x_i)}+\log\frac{\pi_{-i}(Y_{-i}\,|\,Y_i)}{\pi_{-i}(x_{-i}\,|\,Y_i)}\right)\right].
\]
Note that the expectation is taken over $Y_{-i}$ and only the last term,
$\log\frac{\pi_{-i}(Y_{-i}\,|\,Y_i)}{\pi_{-i}(x_{-i}\,|\,Y_i)}$, involves $Y_{-i}$. 

In the following, we then first focus on approximating $\log\frac{\pi_{-i}(Y_{-i}\,|\,x_i)}{\pi_{-i}(x_{-i}\,|\,x_i)}$ for given $x^d\in F_d^+$. Since $Y^d\sim\mathcal{N}(x^d,\frac{\ell^2}{d-1}I)$, we first approximate $\log\frac{\pi_{-i}(Y_{-i}\,|\,x_i)}{\pi_{-i}(x_{-i}\,|\,x_i)}$ by the first two terms of its Taylor expansion.

Define
\[
m_1^{(i)}(Y_{-i},x^d):=(\nabla\log\pi_{-i})^T(Y_{-i}-x_{-i}) +\frac{1}{2}(Y_{-i}-x_{-i})^T[\nabla^2\log\pi_{-i}](Y_{-i}-x_{-i}),
\]
where
\[
(\nabla\log\pi_{-i})^T(Y_{-i}-x_{-i}):=\sum_{j\in\{1,\dots,d\},j\neq i} \frac{\partial \log \pi_{-i}(x_{-i}\,|\,x_i)}{\partial x_j}(Y_j-x_j)
\]
and $[\nabla^2\log\pi_{-i}]$ denotes the $(d-1)\times (d-1)$ matrix with elements $$\left\{\frac{\partial^2 \log \pi_{-i}(x_{-i}\,|\,x_i)}{\partial x_j\partial x_k}\right\}_{j,k\in\{1,\dots,d\},j\neq i, k\neq i}.$$ Then, we have the following result.
\begin{lemma}\label{lemma_approx1}
	Uniformly over $i\in\{1,\dots,d\}$, we have
	\[
	\sup_{x^d\in F_d^+}\EE_{Y_{-i}}\left[\left|m_1^{(i)}(Y_{-i},x^d)-\log\frac{\pi_{-i}(Y_{-i}\,|\,x_i)}{\pi_{-i}(x_{-i}\,|\,x_i)}\right|\right]\to 0.
	\]
\end{lemma}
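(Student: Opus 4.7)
The plan is to recognise $m_1^{(i)}(Y_{-i},x^d)-\log\bigl(\pi_{-i}(Y_{-i}|x_i)/\pi_{-i}(x_{-i}|x_i)\bigr)$ as a third-order Taylor remainder and to control its expected absolute value via the third-derivative bounds in \ref{assumption_A3} together with the moments of the Gaussian increments $b_j:=Y_j-x_j$, which are i.i.d.\ $\normal(0,\sigma^2)$ with $\sigma^2=\ell^2/(d-1)$. Applying Taylor's theorem in integral form to $t\mapsto\log\pi_{-i}(x_{-i}+tb\,|\,x_i)$ and using $\partial_{x_j}\log\pi_{-i}(\cdot|x_i)=\partial_{x_j}\log\pi^d$ for $j\ne i$, I obtain
\[
R := m_1^{(i)}(Y_{-i},x^d)-\log\frac{\pi_{-i}(Y_{-i}|x_i)}{\pi_{-i}(x_{-i}|x_i)}
= -\int_0^1\frac{(1-t)^2}{2}\sum_{j,k,l\ne i}\frac{\partial^3\log\pi^d(\xi_t)}{\partial x_j\partial x_k\partial x_l}\,b_j b_k b_l\,dt,
\]
where $\xi_t$ has $i$th coordinate $x_i$ and the others equal to $(1-t)x_{-i}+tY_{-i}$. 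The goal is $\sup_{x^d\in F_d^+}\EE_{Y_{-i}}|R|\to 0$ uniformly in $i$.

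Partition the ordered triples by multiplicities and write $R=R_{\mathrm{I}}+R_{\mathrm{II}}+R_{\mathrm{III}}$ where $R_{\mathrm{I}}$ collects the diagonal $j=k=l$, $R_{\mathrm{II}}$ the ``exactly two equal'' terms, and $R_{\mathrm{III}}$ the all-distinct terms. Using $\EE|b_j|^p=O(d^{-p/2})$, the outer two cases yield to direct $L^1$ bounds. For $R_{\mathrm{I}}$, the third clause of \ref{assumption_A3} gives $\sup_j\sup_x|\partial^3\log\pi^d/\partial x_j^3|=o(d^{1/2})$, so $\EE|R_{\mathrm{I}}|\le d\cdot o(d^{1/2})\cdot O(d^{-3/2})=o(1)$. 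For $R_{\mathrm{III}}$, the fourth clause of \ref{assumption_A3} bounds the sum of $\sup_x|\partial^3\log\pi^d/(\partial x_j\partial x_k\partial x_l)|$ over distinct triples by $o(d^{3/2})$, which multiplied by $\EE|b_jb_kb_l|=O(d^{-3/2})$ produces $o(1)$. Both estimates are uniform in $i$ and in $x^d\in F_d^+$ because the bounds in \ref{assumption_A3} are index- and state-uniform.

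The main obstacle is the two-equal case $R_{\mathrm{II}}$: a naive $L^1$ bound using the $o(d/l_d)$ estimate from \ref{assumption_A3} yields only $o(d^{1/2})$, which does not vanish. The remedy will be to pass to the second moment via $\EE|R_{\mathrm{II}}|\le(\EE R_{\mathrm{II}}^2)^{1/2}$ and exploit the cancellation $\EE[b_j^2 b_k]=0$ for $j\ne k$ (odd Gaussian moment in $b_k$). Collecting the three orderings that give each multiset $\{j,j,k\}$, one has
\[
R_{\mathrm{II}}=-3\int_0^1\tfrac{(1-t)^2}{2}\sum_{j\ne k}\frac{\partial^3\log\pi^d(\xi_t)}{\partial x_j^2\partial x_k}\,b_j^2 b_k\,dt.
\]
Expanding the square and applying Wick's formula, the only non-vanishing pairings are those that ``align'' the single odd index; they produce contributions of the form $\sigma^6\sum_{j\ne k}(\partial^3\log\pi^d/\partial x_j^2\partial x_k)^2$ and $\sigma^6\sum_k\bigl(\sum_{j\ne k}\partial^3\log\pi^d/\partial x_j^2\partial x_k\bigr)^2$. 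By \ref{assumption_A3}, the first is at most $dl_d\cdot o(d^2/l_d^2)+d^2\cdot o(1)=o(d^3/l_d)+o(d^2)$, and the second is at most $d\cdot(o(d))^2=o(d^3)$; each, when multiplied by $\sigma^6=O(d^{-3})$, is $o(1)$. The dependence of $\partial^3(\xi_t)$ on $Y$ will be absorbed using the fact that the bounds in \ref{assumption_A3} are uniform over all of $\Reals^d$: this lets me dominate $|\partial^3(\xi_t)|$ by its $\Reals^d$-uniform envelope outside the Gaussian-moment calculation while still retaining the structural cancellation coming from odd moments of $b_k$. Taking square roots yields $\EE|R_{\mathrm{II}}|=o(1)$ uniformly in $i$ and $x^d\in F_d^+$, and combining the three cases completes the proof.
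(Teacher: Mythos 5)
Your overall strategy --- the integral-form Taylor remainder, the split of the third-order sum by index multiplicities, and the use of the four clauses of \ref{assumption_A3} together with Gaussian moment bounds --- is the same as the paper's, and your treatment of the diagonal case $j=k=l$ and of the all-distinct case $j\neq k\neq l$ matches the paper's argument essentially line for line. The divergence is in the two-equal case $R_{\mathrm{II}}$, where the paper uses a direct first-moment argument (grouping the sum as $\sum_j (Y_j-x_j)^2\sum_{l} T_{jl}(Y_l-x_l)$ with $T_{jl}=\partial^3\log\pi^d/\partial x_j^2\partial x_l$, bounding $\EE\bigl|\sum_l T_{jl}(Y_l-x_l)\bigr|$ via the $o(1)$ versus $o(d/l_d)$ dichotomy, and then multiplying by $\sum_j\EE(Y_j-x_j)^2=\bigO(1)$), whereas you propose an $L^2$ bound with Wick's formula.

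Your $L^2$ route has a genuine gap. The cancellation $\EE[b_j^2b_k]=0$ and the Wick computation you invoke are only available when the coefficients multiplying $b_j^2b_k$ are deterministic (or independent of $Y_{-i}$); your coefficients are $\frac{\partial^3\log\pi^d(\xi_t)}{\partial x_j^2\partial x_k}$ evaluated at the random intermediate point $\xi_t$, which depends on $Y_{-i}$ through \emph{every} coordinate. You cannot simultaneously ``dominate $|\partial^3(\xi_t)|$ by its uniform envelope'' and ``retain the structural cancellation'': once you pass to the envelope $\bar T_{jk}:=\sup_{x}|T_{jk}(x)|$, the cross terms in $\EE[R_{\mathrm{II}}^2]$ with $k\neq k'$ satisfy $\EE[\bar T_{jk}\bar T_{j'k'}b_j^2|b_k|b_{j'}^2|b_{k'}|]>0$ and no longer vanish; summing the roughly $(d\,l_d)^2$ surviving cross terms, each of size $o(d^2/l_d^2)\cdot\bigO(\sigma^6)$ with $\sigma^6=\bigO(d^{-3})$, returns $\EE[R_{\mathrm{II}}^2]=o(d)$, i.e.\ exactly the $o(d^{1/2})$ estimate you were trying to beat. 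Conversely, if you keep the random $T_{jk}(\xi_t)$ inside the expectation, the Gaussian moments cannot be evaluated in closed form. To close the argument along your lines you would need to decouple the coefficient from the increments, e.g.\ by writing $T_{jk}(\xi_t)=T_{jk}(x^d)+[T_{jk}(\xi_t)-T_{jk}(x^d)]$ and controlling the second piece, which requires a modulus of continuity or fourth-derivative information that \ref{assumption_A3} does not provide; as written, the step ``apply Wick's formula after replacing the coefficients by their envelope'' is not a valid operation and the case $R_{\mathrm{II}}$ remains unproved.
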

\begin{proof}
	See \cref{proof_lemma_approx1}.
\end{proof}

Next, we approximate the second order term of the Taylor approximation $\frac{1}{2}(Y_{-i}-x_{-i})^T[\nabla^2\log\pi_{-i}](Y_{-i}-x_{-i})$ by a non-random term $\frac{1}{2}\frac{\ell^2}{d-1}\sum_{j\neq i}\frac{\partial^2 \log\pi_{-i}}{\partial x_j^2}$.
\begin{lemma} \label{lemma_approx2}
	Uniformly over $i\in\{1,\dots,d\}$, we have
	\[
	\sup_{x^d\in F_d^+}\EE_{Y_{-i}}\left[\left|(Y_{-i}-x_{-i})^T[\nabla^2\log\pi_{-i}](Y_{-i}-x_{-i})-\frac{\ell^2}{d-1}\sum_{j\neq i}\frac{\partial^2 \log\pi_{-i}}{\partial x_j^2}\right|\right]\to 0.
	\]
\end{lemma}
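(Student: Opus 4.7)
The plan is to notice that under the proposal $Y_{-i}-x_{-i}\sim\mathcal{N}(0,\tfrac{\ell^2}{d-1}I_{d-1})$, so the deterministic quantity $\tfrac{\ell^2}{d-1}\sum_{j\neq i}\tfrac{\partial^2\log\pi_{-i}}{\partial x_j^2}$ is precisely $\EE_{Y_{-i}}[Q]$ for the Gaussian quadratic form $Q:=(Y_{-i}-x_{-i})^T[\nabla^2\log\pi_{-i}](Y_{-i}-x_{-i})$. Hence the expression inside the supremum equals $\EE_{Y_{-i}}|Q-\EE_{Y_{-i}}[Q]|$, which by Jensen's inequality is bounded by $\sqrt{\mathrm{Var}_{Y_{-i}}(Q)}$, so it suffices to show $\sup_{x^d\in F_d^+}\mathrm{Var}_{Y_{-i}}(Q)\to 0$ uniformly in $i$.

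For a Gaussian quadratic form the standard identity gives
\[
\mathrm{Var}_{Y_{-i}}(Q)=\frac{2\ell^4}{(d-1)^2}\sum_{j,k\neq i}\left(\frac{\partial^2\log\pi^d(x^d)}{\partial x_j\partial x_k}\right)^2,
\]
where the passage from $\log\pi_{-i}(\cdot\,|\,x_i)$ to $\log\pi^d$ is legitimate because the two functions differ by $\log\pi_i(x_i)$, which does not depend on the coordinates $x_j,x_k$ for $j,k\neq i$. The task thus reduces to showing that the Frobenius norm squared of the Hessian, $\sum_{j,k\neq i}\bigl(\tfrac{\partial^2\log\pi^d}{\partial x_j\partial x_k}\bigr)^2$, is $o(d^2)$ uniformly on $F_d^+$.

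I would split the sum according to whether $k\in\mathcal{H}_j$. On the one hand, \ref{assumption_A1} gives $\bigl|\tfrac{\partial^2\log\pi^d}{\partial x_j\partial x_k}\bigr|=o(\sqrt{d/l_d})$ on $F_d^+$ for every ``neighbor'' pair, and since each coordinate has at most $l_d$ neighbors there are at most $d\cdot l_d$ such pairs, contributing $o(d^2)$ in total. On the other hand, the remaining at most $d^2$ ``non-neighbor'' pairs each contribute $o(1)$ by the other half of \ref{assumption_A1}, for another $o(d^2)$. Note the diagonal entries $j=k$ fall into the ``neighbor'' class under the convention $j\in\mathcal{H}_j$. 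Combining, $\mathrm{Var}_{Y_{-i}}(Q)=O(1/d^2)\cdot o(d^2)=o(1)$ uniformly over $x^d\in F_d^+$ and $i\in\{1,\dots,d\}$, which is the desired conclusion.

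There is no real obstacle here; the argument reduces to a variance computation for a Gaussian quadratic form and a direct application of \ref{assumption_A1}. The only point requiring attention is that the uniformity over $i$ is inherited for free because the bounds in \ref{assumption_A1} are already taken as a supremum over all pairs of indices, and because the reduction from $\log\pi_{-i}$ to $\log\pi^d$ eliminates the dependence on which coordinate is being held fixed.
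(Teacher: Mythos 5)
Your proposal is correct and follows essentially the same route as the paper: identify the deterministic term as the mean of the Gaussian quadratic form, reduce to showing the variance vanishes via the standard variance identity (the paper's Lemma on quadratic forms), and bound the squared Frobenius norm of the Hessian by splitting pairs into neighbors (at most $d\,l_d$ terms, each $o(d/l_d)$ after squaring) and non-neighbors (at most $d^2$ terms, each $o(1)$) using \ref{assumption_A1}. The only cosmetic difference is that you invoke Jensen's inequality explicitly where the paper leaves that step implicit, and you phrase the split via $\mathcal{H}_j$ rather than the paper's linear-ordering shorthand $|j-k|\le l_d$.
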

\begin{proof}
	See \cref{proof_lemma_approx2}.
\end{proof}
Defining
\[
m_2^{(i)}(Y_{-i},x^d):=(\nabla\log\pi_{-i})^T(Y_{-i}-x_{-i}) +\frac{1}{2}\frac{\ell^2}{d-1}\sum_{j\neq i}\frac{\partial^2 \log\pi_{-i}}{\partial x_j^2},
\]
we have
\[
m_2^{(i)}(Y_{-i},x^d)\sim \mathcal{N}\left(\ell^2S_d^{(i)}/2,\ell^2R_d^{(i)}\right),
\]
where
\[
R_d^{(i)}:=\frac{1}{d-1}\sum_{j\neq i} \left(\frac{\partial \log \pi_{-i}(x_{-i}\,|\,x_i)}{\partial x_j}\right)^2,\quad S_d^{(i)}:=\frac{1}{d-1}\sum_{j\neq i}\frac{\partial^2 \log\pi_{-i}(x_{-i}\,|\,x_i)}{\partial x_j^2}.
\]	
Next, we show we can approximate $S_d^{(i)}$ by $-R_d^{(i)}$.
\begin{lemma}\label{lemma_approx3}
	There exists a sequence of subsets of states $\{F_d'\}$, such that $\pi^d(F_d')\to 1$ and
	\[
	\sup_{i \in \{1,\dots,d\}} \sup_{x^d \in F_d'}\left| R_d^{(i)}+S_d^{(i)} \right|\to 0.
	\]
\end{lemma}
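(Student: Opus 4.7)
The plan is to recognize $R_d^{(i)}+S_d^{(i)}$ as a Bartlett/score-type average, establish its mean is zero and its second moment is small, and then build $F_d'$ by a Markov/Chebyshev bound combined with a union bound over the $d$ coordinates. The key algebraic observation is that, because $\pi^d(x^d)=\pi_i(x_i)\pi_{-i}(x_{-i}\,|\,x_i)$ and $\pi_i(x_i)$ does not depend on $x_j$ for $j\neq i$, we have $\frac{\partial\log\pi_{-i}}{\partial x_j}=\frac{\partial\log\pi^d}{\partial x_j}$ and $\frac{\partial^2\log\pi_{-i}}{\partial x_j^2}=\frac{\partial^2\log\pi^d}{\partial x_j^2}$ for all $j\neq i$. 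Applying the identity $\frac{\partial^2\log\pi^d}{\partial x_j^2}+\left(\frac{\partial\log\pi^d}{\partial x_j}\right)^2=\frac{1}{\pi^d}\frac{\partial^2\pi^d}{\partial x_j^2}$ coordinate-wise gives
\[
R_d^{(i)}+S_d^{(i)}\;=\;\frac{1}{d-1}\sum_{j\neq i}\frac{1}{\pi^d(x^d)}\frac{\partial^2\pi^d(x^d)}{\partial x_j^2}.
\]

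Taking the expectation under $X^d\sim\pi^d$, the $1/\pi^d$ cancels and each term becomes $\int\frac{\partial^2\pi^d}{\partial x_j^2}\,\dee x^d$, which vanishes by integrating twice in $x_j$ using the flat-tail behavior guaranteed by \ref{assumption_A2} (see also \cref{remark_A2}). Hence $\EE_{X^d\sim\pi^d}[R_d^{(i)}+S_d^{(i)}]=0$ uniformly in $i$. Next I would compute the second moment
\[
\EE_{X^d\sim\pi^d}\!\left[(R_d^{(i)}+S_d^{(i)})^2\right]=\frac{1}{(d-1)^2}\sum_{j\neq i}\sum_{k\neq i}\int\!\frac{\partial^2\pi^d}{\partial x_j^2}\frac{\partial^2\pi^d}{\partial x_k^2}\frac{1}{\pi^d}\dee x^d,
\]
and split the double sum into three regimes: (i) the diagonal terms $j=k$ ($O(d)$ of them), each of order $O(d^{2\alpha})$ by \ref{assumption_A4} once one expands $\frac{1}{\pi^d}\frac{\partial^2\pi^d}{\partial x_j^2}=\frac{\partial^2\log\pi^d}{\partial x_j^2}+(\frac{\partial\log\pi^d}{\partial x_j})^2$; (ii) the neighbor off-diagonal terms $j\neq k$ with $k\in\mathcal{H}_j$ ($O(d\,l_d)$ of them), again bounded by the derivative estimates in \ref{assumption_A4} and \ref{assumption_A1}; (iii) the non-neighbor off-diagonal terms $k\notin\mathcal{H}_j$ ($O(d^2)$ of them), for which one rewrites the integrand using $\pi^d=\pi_{-j-k}(x_{-j-k})\,\pi_{j,k|-j-k}(x_j,x_k\,|\,x_{-j-k})$ to reduce the inner $(x_j,x_k)$-integral exactly to the quantity controlled by \ref{assumption_A2}. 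Assembling these estimates with the normalization $1/(d-1)^2$ yields $\EE[(R_d^{(i)}+S_d^{(i)})^2]=o(1)$ uniformly in $i$, and with a little care the bound actually decays at a polynomial rate in $d$.

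Given the uniform $L^2$ bound, for any slowly decaying $\epsilon_d\to 0$ I set
\[
F_d'\;:=\;\bigcap_{i=1}^{d}\Oof{x^d\in\Reals^d:\ \bigl|R_d^{(i)}(x^d)+S_d^{(i)}(x^d)\bigr|\le\epsilon_d}.
\]
Markov's inequality gives $\pi^d(\{|R_d^{(i)}+S_d^{(i)}|>\epsilon_d\})\le \epsilon_d^{-2}\,\EE[(R_d^{(i)}+S_d^{(i)})^2]$, and a union bound across $i=1,\dots,d$ yields $\pi^d((F_d')^c)\le d\epsilon_d^{-2}\sup_i \EE[(R_d^{(i)}+S_d^{(i)})^2]$. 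Choosing $\epsilon_d$ to decay slower than $(d\cdot \sup_i\EE[(R_d^{(i)}+S_d^{(i)})^2])^{1/2}$ (which is feasible once the polynomial rate above is made explicit) makes $\pi^d(F_d')\to 1$ while $\epsilon_d\to 0$, yielding the required uniform bound on $F_d'$.

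The main obstacle I anticipate is the third regime: the non-neighbor off-diagonals contribute $\sim d^2$ terms and \ref{assumption_A2} only provides an $o(1)$ bound per term, so naively the sum is $o(d^2)$ and only matches $(d-1)^2$ at the borderline. What rescues it is that \ref{assumption_A2} is uniform over the large set $\{x_{-j-k}:x^d\in F_d\}$, and moreover it must be combined with the contribution from $x_{-j-k}$ outside this allowed slice, which has small $\pi_{-j-k}$-mass by the typical-set property. Quantifying this carefully---likely by further restricting $F_d'$ and using \ref{assumption_A4} to control the integrand on the small complementary slice---is where the technical work lies and will determine the admissible rate $\epsilon_d$ used in the union bound.
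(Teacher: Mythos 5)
Your algebraic identity and the three-regime splitting of the second moment are exactly the paper's computation, and the assumptions you invoke for each regime (\ref{assumption_A4}/\ref{assumption_A1} for the diagonal and neighbor terms, \ref{assumption_A2} for the $O(d^2)$ non-neighbor terms) are the right ones. The gap is in how you convert the per-coordinate moment bound into uniformity over $i$. Your union bound requires $d\,\epsilon_d^{-2}\sup_i\EE[(R_d^{(i)}+S_d^{(i)})^2]\to 0$, i.e.\ a per-coordinate second moment of order $o(\epsilon_d^2/d)$. But the dominant contribution comes from the $\sim d^2$ non-neighbor terms, each controlled by \ref{assumption_A2} only as an unquantified $o(1)$; after dividing by $(d-1)^2$ this gives $\EE[(R_d^{(i)}+S_d^{(i)})^2]=o(1)$ with no rate, exactly as you observe, and then no admissible choice of $\epsilon_d$ can absorb the extra factor of $d$ coming from the union over coordinates. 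Your proposed rescue (refining the typical set and controlling the complementary slice via \ref{assumption_A4}) does not address this, because the obstruction is not the off-$F_d$ mass but the absence of any rate in \ref{assumption_A2}.

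The resolution is to avoid the union bound entirely: since $R_d^{(i)}+S_d^{(i)}=\frac{1}{d-1}\sum_{j\neq i}\frac{\partial^2\pi^d}{\partial x_j^2}\frac{1}{\pi^d}$ differs across $i$ only in which single term of the sum is omitted, one can bound $\sup_i\bigl(R_d^{(i)}+S_d^{(i)}\bigr)^2$ pointwise by the $i$-free full double sum $\frac{1}{(d-1)^2}\sum_{j}\sum_{k}\int\bigl(\frac{\partial^2\pi^d}{\partial x_j^2}\frac{1}{\pi^d}\bigr)\bigl(\frac{\partial^2\pi^d}{\partial x_k^2}\frac{1}{\pi^d}\bigr)\pi^d\,\dee x^d$ plus a lower-order cross term, and then run your same three-regime estimate once on this single quantity. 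This yields $\EE\bigl[\sup_i(R_d^{(i)}+S_d^{(i)})^2\bigr]=o(1)$ directly, so one Chebyshev bound gives $\sup_i|R_d^{(i)}+S_d^{(i)}|\to 0$ in probability, and the existence of $\{F_d'\}$ with $\pi^d(F_d')\to 1$ follows by the standard extraction of a sequence $\epsilon_d\downarrow 0$ along which the tail probabilities vanish. If you replace your union bound with this pointwise domination of the supremum, the rest of your argument (including the observation that the first moment vanishes, which is pleasant but not needed) goes through.
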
 
\begin{proof}
	See \cref{proof_lemma_approx3}.
\end{proof}
Now defining
\[
m_3^{(i)}(Y_{-i},x^d):=(\nabla\log\pi_{-i})^T(Y_{-i}-x_{-i}) +\frac{1}{2}\frac{\ell^2}{d-1}\sum_{j\neq i} \left(\frac{\partial \log \pi_{-i}(x_{-i}\,|\,x_i)}{\partial x_j}\right)^2,
\]
we have
\[
m_3^{(i)}(Y_{-i},x^d)\sim \mathcal{N}\left(-\ell^2R_d^{(i)}/2,\ell^2R_d^{(i)}\right).
\]
By triangle inequality, we can write
\[
\left|m_3^{(i)}(Y_{-i},x^d)-\log\frac{\pi_{-i}(Y_{-i}\,|\,x_i)}{\pi_{-i}(x_{-i}\,|\,x_i)}\right|&\le \left|m_1^{(i)}(Y_{-i},x^d)-\log\frac{\pi_{-i}(Y_{-i}\,|\,x_i)}{\pi_{-i}(x_{-i}\,|\,x_i)}\right|\\
&+ \left| m_2^{(i)}(Y_{-i},x^d)-m_1^{(i)}(Y_{-i},x^d) \right|\\
&+ \left| m_3^{(i)}(Y_{-i},x^d)-m_2^{(i)}(Y_{-i},x^d) \right|.
\]
Therefore, using \cref{lemma_approx1,lemma_approx2,lemma_approx3}, we get
\[\label{tmp_eq1}
\sup_{i\in\{1,\dots,d\}}\sup_{x^d\in F_d^+\cap F_d'}\EE_{Y_{-i}}\left[\left|m_3^{(i)}(Y_{-i},x^d)-\log\frac{\pi_{-i}(Y_{-i}\,|\,x_i)}{\pi_{-i}(x_{-i}\,|\,x_i)}\right|\right]\to 0.
\]
Next, we abuse the notation a little bit by defining
\[
R_d^{(i)}(y):=\frac{1}{d-1}\sum_{j\neq i} \left(\frac{\partial \log \pi_{-i}(x_{-i}\,|\,x_i=y)}{\partial x_j}\right)^2.
\]
Then by the definition of $m_3^{(i)}$, we replace $x^d$ by $y^d(i)=(x_1,\dots,x_{i-1},Y_i,x_{i+1},\dots,x_d)$, which yields
\[
m_3^{(i)}(Y_{-i},y^d(i))&=
(\nabla\log\pi_{-i}(x_{-i}\,|\,Y_i))^T(Y_{-i}-x_{-i})\\ &\quad +\frac{1}{2}\frac{\ell^2}{d-1}\sum_{j\neq i} \left(\frac{\partial \log \pi_{-i}(x_{-i}\,|\,Y_i)}{\partial x_j}\right)^2.
\]
Then, we have
\[
m_3^{(i)}(Y_{-i},y^d(i))\sim \mathcal{N}\left(-\ell^2R_d^{(i)}(Y_i)/2,\ell^2R_d^{(i)}(Y_i)\right).
\]
Recall that $M_{x^d}^{(i)}(Y_i)=\EE_{Y_{-i}}\left[1\wedge \exp\left(\log\frac{\pi_i(Y_i)}{\pi_i(x_i)}+\log\frac{\pi_{-i}(x_{-i}\,|\,Y_i)}{\pi_{-i}(x_{-i}\,|\,x_i)}+\log\frac{\pi_{-i}(Y_{-i}\,|\,Y_i)}{\pi_{-i}(x_{-i}\,|\,Y_i)}\right)\right]$, defining
\[
\hat{M}_{x^d}^{(i)}(Y_i)=\EE_{Y_{-i}}\left[1\wedge \exp\left(\log\frac{\pi_i(Y_i)}{\pi_i(x_i)}+\log\frac{\pi_{-i}(x_{-i}\,|\,Y_i)}{\pi_{-i}(x_{-i}\,|\,x_i)}+ m_3^{(i)}(Y_{-i},y^d(i))\right)\right],
\]
we next apply the following two lemmas from \cite{Roberts1997}. 
\begin{lemma}(\cite[Proposition 2.2]{Roberts1997})\label{lemma_Lipschitz_bound}
	The function $g(x)=1\wedge {e}^{x}$ is Lipschitz such that 
	\[
	|g(x)-g(y)|\le |x-y|,\quad \forall x, y.
	\]
\end{lemma}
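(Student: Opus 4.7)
The plan is to split into cases according to the two regions where $g(x)=\min(1,e^x)$ has different closed-form expressions. Observe that $g(x)=e^x$ when $x\le 0$ and $g(x)=1$ when $x\ge 0$, so $g$ is continuous everywhere and differentiable except at $x=0$, with derivative $g'(x)=e^x \le 1$ on $(-\infty,0)$ and $g'(x)=0$ on $(0,\infty)$. Thus $\sup_{z\ne 0}|g'(z)|\le 1$.

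With this in hand, the inequality $|g(x)-g(y)|\le|x-y|$ follows immediately from the fundamental theorem of calculus applied to the absolutely continuous function $g$: write $g(x)-g(y)=\int_y^x g'(z)\,\dee z$ and bound the integrand pointwise by $1$. Equivalently, one can give a three-case argument depending on the signs of $x$ and $y$: (i) if $x,y\le 0$, then $|e^x-e^y|\le|x-y|$ by the mean value theorem since the exponential has derivative $\le 1$ on $(-\infty,0]$; (ii) if $x,y\ge 0$, then both values equal $1$ so the difference is $0$; (iii) if $y<0<x$, then $|g(x)-g(y)|=1-e^y=\int_y^0 e^z\,\dee z\le|y|\le|x-y|$.

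There is essentially no obstacle here beyond handling the single non-smooth point at $x=0$, which is resolved automatically since $g$ is Lipschitz on each half-line with constant $1$ and continuous at the joining point $0$, so the global Lipschitz constant is the maximum of the two, namely $1$. I would present the proof in the integral-of-derivative form because it is the shortest and also foreshadows how $g$ will later be applied inside the expectation bounds of the main proof (in combination with Jensen-type inequalities on $|m_3^{(i)}-\log(\pi_{-i}(Y_{-i}\,|\,x_i)/\pi_{-i}(x_{-i}\,|\,x_i))|$).
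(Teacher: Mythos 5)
Your proof is correct and is the standard elementary argument for this fact; the paper itself states the lemma by citation to \cite[Proposition 2.2]{Roberts1997} without reproducing a proof, and the argument there is the same case analysis you give (derivative bounded by $1$ on each half-line, continuity at $0$, hence global Lipschitz constant $1$). Nothing further is needed.
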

\begin{lemma}(\cite[Proposition 2.4]{Roberts1997})\label{lemma_Lipschitz_expectation}
	If $z\sim \normal(\mu,\sigma^2)$ then
	\[
	\EE(1\wedge e^z)=\Phi(\mu/\sigma)+\exp(\mu+\sigma^2/2)\Phi(-\sigma-\mu/\sigma).
	\]
\end{lemma}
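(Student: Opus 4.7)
The plan is a direct Gaussian integral computation. Since $1\wedge e^{z}$ equals $1$ when $z\ge 0$ and equals $e^{z}$ when $z<0$, I would first split
\[
\EE(1\wedge e^{Z}) \;=\; \Pr(Z\ge 0) \;+\; \EE\!\left[e^{Z}\,\mathbbm{1}_{\{Z<0\}}\right].
\]
The first summand is immediate: standardizing $Z$ gives $\Pr(Z\ge 0)=\Pr\!\left((Z-\mu)/\sigma\ge -\mu/\sigma\right)=\Phi(\mu/\sigma)$, which already matches the first term of the stated formula.

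For the second summand I would write it as a Gaussian integral and complete the square. Namely,
\[
\EE\!\left[e^{Z}\,\mathbbm{1}_{\{Z<0\}}\right] \;=\; \int_{-\infty}^{0} \frac{1}{\sigma\sqrt{2\pi}}\exp\!\left(z-\frac{(z-\mu)^{2}}{2\sigma^{2}}\right)dz.
\]
Regrouping the exponent as
\[
z-\frac{(z-\mu)^{2}}{2\sigma^{2}} \;=\; -\frac{\bigl(z-(\mu+\sigma^{2})\bigr)^{2}}{2\sigma^{2}} + \mu + \frac{\sigma^{2}}{2}
\]
lets me pull the prefactor $\exp(\mu+\sigma^{2}/2)$ outside, and what remains inside the integral is the density of $\mathcal{N}(\mu+\sigma^{2},\sigma^{2})$ integrated over $(-\infty,0)$, i.e.\ $\Pr(W\le 0)$ for $W\sim\mathcal{N}(\mu+\sigma^{2},\sigma^{2})$. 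Standardizing gives $\Phi\!\left(-(\mu+\sigma^{2})/\sigma\right)=\Phi(-\sigma-\mu/\sigma)$.

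Summing the two pieces yields exactly the claimed identity. There is no real obstacle here: the only steps requiring care are the algebra of completing the square and the equivalent rewriting $-(\mu+\sigma^{2})/\sigma=-\sigma-\mu/\sigma$, which produces the precise form stated in the lemma. In substance this is the standard moment-generating-function computation for truncated Gaussians, and it works under no conditions on $\mu$ and $\sigma>0$.
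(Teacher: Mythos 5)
Your computation is correct and complete: the split at $z=0$, the identification $\Pr(Z\ge 0)=\Phi(\mu/\sigma)$, and the completion of the square giving $\exp(\mu+\sigma^2/2)\,\Phi(-(\mu+\sigma^2)/\sigma)$ all check out. The paper does not reprove this lemma (it is quoted directly from \cite{Roberts1997}), and your argument is precisely the standard truncated-Gaussian computation used in that original reference, so nothing further is needed.
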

By \cref{lemma_Lipschitz_bound} and \cref{tmp_eq1}, we have that uniformly over $i\in\{1,\dots,d\}$
\[\label{tmp_eq_M_by_hatM}
\sup_{y^d(i)\in F_d^+\cap F_d'}\left| M_{x^d}^{(i)}(Y_i)- \hat{M}_{x^d}^{(i)}(Y_i)\right|\to 0.
\]
Applying \cref{lemma_Lipschitz_expectation} to $\hat{M}_{x^d}^{(i)}(Y_i)$ yields
\[
\hat{M}_{x^d}^{(i)}(Y_i)
&=\Phi\left(R_d^{(i)}(Y_i)^{-1/2}\left(\ell^{-1}\log\frac{\pi^d(y^d(i))}{\pi^d(x^d)}-\ell R_d^{(i)}(Y_i)/2\right)\right)\\
&\quad +\exp\left(\log\frac{\pi^d(y^d(i))}{\pi^d(x^d)}\right)\Phi\left(-\ell R_d^{(i)}(Y_i)^{1/2}/2-\log\frac{\pi^d(y^d(i))}{\pi^d(x^d)} R_d^{(i)}(Y_i)^{-1/2}\ell^{-1}\right).
\]
Note that it is easy to check that
$\hat{M}_{x^d}^{(i)}(x_i)=2\Phi\left(-\frac{\ell \sqrt{R_d^{(i)}}}{2}\right)$.
We then show $\hat{M}_{x^d}^{(i)}(x_i)$ converges to $2 \Phi\left( -\frac{\ell \sqrt{I_d(x^d)}}{2}\right)$.
\begin{lemma}\label{lemma_approx4}
	\[
	\sup_{i\in\{1,\dots,d\}}\sup_{x^d \in F_d^+}\left|2\Phi\left(-\frac{\ell \sqrt{R_d^{(i)}}}{2}\right) -2 \Phi\left( -\frac{\ell \sqrt{I_d(x^d)}}{2}\right)\right|\to 0.
	\]
\end{lemma}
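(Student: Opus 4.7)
The plan is to reduce everything to a uniform bound on $|R_d^{(i)} - I_d(x^d)|$ via a simple algebraic identity, and then pass this through $\sqrt{\cdot}$ and $\Phi$ using elementary Lipschitz estimates.

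First, I would exploit the factorization $\pi^d(x^d) = \pi_i(x_i)\pi_{-i}(x_{-i}\,|\,x_i)$, which gives $\log\pi^d(x^d) = \log\pi_i(x_i) + \log\pi_{-i}(x_{-i}\,|\,x_i)$. Since $\pi_i(x_i)$ does not depend on $x_j$ for $j\neq i$, this yields
\[
\frac{\partial \log \pi_{-i}(x_{-i}\,|\,x_i)}{\partial x_j} = \frac{\partial \log \pi^d(x^d)}{\partial x_j}, \qquad j\neq i.
\]
Substituting into the definition of $R_d^{(i)}$ produces the clean identity
\[
R_d^{(i)} = \frac{1}{d-1}\sum_{j\neq i}\left(\frac{\partial \log \pi^d(x^d)}{\partial x_j}\right)^2 = \frac{d}{d-1}I_d(x^d) - \frac{1}{d-1}\left(\frac{\partial \log \pi^d(x^d)}{\partial x_i}\right)^2.
\]

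Second, I would bound $|R_d^{(i)}-I_d(x^d)|$ uniformly in $i$ and $x^d\in F_d^+$ using \ref{assumption_A4}. The identity above gives
\[
\left|R_d^{(i)} - I_d(x^d)\right| \le \frac{1}{d-1}I_d(x^d) + \frac{1}{d-1}\left(\frac{\partial \log \pi^d(x^d)}{\partial x_i}\right)^2.
\]
By \ref{assumption_A4}, $\sup_i\sup_{x^d\in F_d^{(i)}} |\partial_i \log\pi^d(x^d)| = \bigO(d^\alpha)$, so the second term is $\bigO(d^{2\alpha-1})$ uniformly, and the crude bound $I_d(x^d)\le \bigO(d^{2\alpha})$ obtained by summing the same pointwise estimate makes the first term $\bigO(d^{2\alpha-1})$ as well. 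Since $\alpha<1/2$, both vanish uniformly.

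Third, I would pass this through the composition $\Phi(-\ell\sqrt{\cdot}/2)$. Using the elementary inequality $|\sqrt{a}-\sqrt{b}|\le \sqrt{|a-b|}$ for $a,b\ge 0$, together with the global Lipschitz bound $|\Phi(u)-\Phi(v)|\le (2\pi)^{-1/2}|u-v|$, one obtains
\[
\left|2\Phi\!\left(-\tfrac{\ell\sqrt{R_d^{(i)}}}{2}\right) - 2\Phi\!\left(-\tfrac{\ell\sqrt{I_d(x^d)}}{2}\right)\right| \le \tfrac{\ell}{\sqrt{2\pi}}\sqrt{\left|R_d^{(i)}-I_d(x^d)\right|} = \bigO(d^{\alpha-1/2}),
\]
which tends to $0$ uniformly in $i\in\{1,\dots,d\}$ and $x^d\in F_d^+$, since $\alpha<1/2$.

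There is essentially no obstacle beyond recognizing the algebraic identity relating $R_d^{(i)}$ and $I_d(x^d)$; once that is in hand the rest is routine, and crucially we avoid any need for a positive lower bound on $I_d$ because the sub-additivity bound $|\sqrt{a}-\sqrt{b}|\le \sqrt{|a-b|}$ handles the square root without blowing up near zero.
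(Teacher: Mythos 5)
Your proof is correct and follows essentially the same route as the paper: the key step in both is the identity $\partial_j \log\pi_{-i}(x_{-i}\,|\,x_i)=\partial_j\log\pi^d(x^d)$ for $j\neq i$, which reduces $R_d^{(i)}-I_d(x^d)$ to an $\bigO(d^{2\alpha-1})$ quantity controlled by \ref{assumption_A4}. The only difference is that you spell out the final passage through $\Phi(-\ell\sqrt{\cdot}/2)$ via $|\sqrt{a}-\sqrt{b}|\le\sqrt{|a-b|}$ and the Lipschitz bound on $\Phi$, a step the paper leaves implicit.
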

\begin{proof}
	See \cref{proof_lemma_approx4}.
\end{proof}
Finally, using Taylor expansion together with $\EE_{Y_i}(Y_i-x_i)^2=\ell^2/(d-1)$ and $\EE_{Y_i}|Y_i-x_i|^3=\bigO(d^{-3/2})$, we have 
\[
&\EE_{Y_i}\left\{(Y_i-x_i)^2 \sup_{y^d(i)\in F_d^+}\left|\hat{M}_{x^d}^{(i)}(Y_i)-2 \Phi\left(-\frac{\ell \sqrt{I_d(x^d)}}{2}\right)\right|\right\}\\
&\le \frac{\ell^2}{d-1} \sup_{x^d \in F_d^+}\left|2\Phi\left(-\frac{\ell \sqrt{R_d^{(i)}}}{2}\right) -2 \Phi\left( -\frac{\ell \sqrt{I_d(x^d)}}{2}\right)\right|\\
&\quad + \bigO(d^{-3/2}) \sup_{y^d(i)\in F_d^+} \left|\frac{\dee \hat{M}_{x^d}^{(i)}(y_i) }{\dee y_i}(Y_i) \right|.
\]
For the last term, we have the following lemma.
\begin{lemma}\label{lemma_approx5}
	\[
	\sup_{i\in \{1,\dots,d\}}\sup_{y^d(i)\in F_d^+} \left|\frac{\dee \hat{M}_{x^d}^{(i)}(y_i) }{\dee y_i}(Y_i)\right|=o\left(d^{1/2}\right).
	\]
\end{lemma}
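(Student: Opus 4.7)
The plan is to differentiate the closed-form expression $\hat{M}_{x^d}^{(i)}(y_i) = \Phi(a) + e^{L}\Phi(b)$ obtained from \cref{lemma_Lipschitz_expectation} just before the lemma, exploit a standard algebraic identity between the two Mills-ratio terms to produce a cancellation, and then bound the two surviving pieces using Cauchy--Schwarz together with \ref{assumption_A1} and \ref{assumption_A4}. Throughout I write $L = L(y_i) \defas \log[\pi^d(y^d(i))/\pi^d(x^d)]$ and $r = r(y_i) \defas R_d^{(i)}(y_i)$, so that $a = \ell^{-1}L r^{-1/2} - \ell r^{1/2}/2$ and $b = -\ell r^{1/2}/2 - \ell^{-1}L r^{-1/2}$.

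The first step is to derive the identity $e^{L}\Phi'(b) = \Phi'(a)$ from $(a-b)(a+b) = -2L$. Combined with the elementary cancellations $\partial_L a + \partial_L b = 0$ and $\partial_r a + \partial_r b = -\ell/(2\sqrt{r})$, the chain rule then collapses to
\[
\frac{\dee \hat{M}_{x^d}^{(i)}}{\dee y_i} = -\frac{\ell\, \Phi'(a)\, r'(y_i)}{2\sqrt{r(y_i)}} + L'(y_i)\, e^{L(y_i)}\, \Phi(b).
\]
The first term is controlled by applying Cauchy--Schwarz directly to $r'(y_i) = \frac{2}{d-1}\sum_{j\neq i}(\partial_{x_j}\log \pi^d)(\partial^2_{x_j y_i}\log \pi^d)$, which gives
\[
\frac{|r'(y_i)|}{\sqrt{r(y_i)}} \;\le\; \frac{2}{\sqrt{d-1}}\left(\sum_{j\neq i}\left(\frac{\partial^2 \log \pi^d}{\partial x_j\, \partial y_i}\right)^{\!2}\right)^{\!1/2},
\]
since $\sum_{j\neq i}(\partial_{x_j}\log\pi^d)^2 = (d-1)r$. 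Splitting the inner sum into $j\in\mathcal{H}_i$ and $j\notin\mathcal{H}_i$ and invoking \ref{assumption_A1}, the first piece is at most $l_d\cdot o(d/l_d) = o(d)$ and the second at most $d\cdot o(1)=o(d)$. Hence $|r'|/\sqrt{r} = o(1)$, and the first term is $o(1)$ since $\Phi'$ is bounded.

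For the second term, $|L'(y_i)| = |\partial_{y_i}\log \pi^d(y^d(i))| = \bigO(d^{\alpha})$ by \ref{assumption_A4}, $|\Phi(b)|\le 1$, and $|L|\le \bigO(d^{\alpha})\cdot\sqrt{\log d/d}=o(1)$ (so that $e^{L}=\bigO(1)$) whenever $y^d(i)\in F_d^{(i)}$ (which is the relevant case; see below). Combining gives $|\dee\hat{M}_{x^d}^{(i)}/\dee y_i| = o(1) + \bigO(d^\alpha) = \bigO(d^\alpha)$, which is $o(d^{1/2})$ since $\alpha<1/2$, uniformly in $i$.

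The main obstacle I anticipate is the mild mismatch between the stated supremum over $F_d^+$ and the cleaner $\bigO(1)$ bound on $e^L$, which is only automatic on the narrower tube $F_d^{(i)}$: if $y^d(i)\in F_d^{(j)}$ with $j\neq i$ then $|Y_i-x_i|$ need not be small and $e^L$ could in principle blow up. This is handled by observing that the lemma is invoked inside the Taylor-remainder step of \cref{key_theorem} along the segment joining $x_i$ and $Y_i$, which lies in $F_d^{(i)}\subseteq F_d^+$ on the high-probability Gaussian event $\{|Y_i-x_i|<\sqrt{\log d/d}\}$; the complementary event contributes an exponentially small correction that is easily absorbed into the $o(d^{-1})$ remainder in \cref{key_theorem}.
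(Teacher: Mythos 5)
Your proof is correct, and it takes a genuinely different route from the paper's. The paper differentiates the closed form $\Phi(a)+e^{L}\Phi(b)$ brute-force and then bounds each of the resulting products separately ($e^{g}|g'|$, $e^{g}|f'g|$, $e^{g}|g'f|$, $e^{g}|(f^{-1})'|$, \ldots), which forces it to invoke all three conditions of \ref{assumption_A4}, including $\sup_{F_d^+}\pi^d=o(d^{1/2-\alpha})$ and $\sup_{F_d^+}1/I_d=\bigO(d^{\alpha/2})$, to control the terms involving $g$ itself and the factors $f=1/(\ell\sqrt{R_d})$. Your identity $\Phi'(a)=e^{L}\Phi'(b)$ (from $a^2-b^2=-2L$) makes the two $\Phi'$-terms collapse onto $\Phi'(a)\,\tfrac{\dee}{\dee y}(a+b)=-\ell\Phi'(a)R_d'/(2\sqrt{R_d})$, so only two terms survive; the first is handled by exactly the paper's Cauchy--Schwarz argument for $R_d'/\sqrt{R_d}=o(1)$ (its \eqref{tmp_eq2}, via \ref{assumption_A1}), and the second needs only the gradient bound $\bigO(d^{\alpha})$ from \ref{assumption_A4} plus $e^{L}=\bigO(1)$. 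Your mean-value bound $|L|\le\bigO(d^{\alpha})\sqrt{\log d/d}=o(1)$ on the tube $F_d^{(i)}$ is also cleaner than the paper's treatment, which bounds the density \emph{ratio} $e^{g}=\pi^d(y^d(i))/\pi^d(x^d)$ by $\sup_{F_d^+}\pi^d$ without a lower bound on the denominator; and you correctly flag the $F_d^+$-versus-$F_d^{(i)}$ mismatch that the paper glosses over, noting that the lemma is only ever applied along the segment from $x_i$ to $Y_i$ with $x^d\in F_d$, where the tube bound applies. Two trivial quibbles: the complementary event $\{|Y_i-x_i|\ge\sqrt{\log d/d}\}$ is polynomially, not exponentially, small (of order $d^{-1/(2\ell^2)}$ up to logs, since the proposal standard deviation is $\ell/\sqrt{d-1}$), but weighted by $(Y_i-x_i)^2$ and using $0\le\hat{M}\le 1$ it is still $o(d^{-1})$ as you need; and your final bound $\bigO(d^{\alpha})$ is in fact sharper than the stated $o(d^{1/2})$. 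The net effect is a shorter proof under strictly weaker hypotheses, at the cost of proving the supremum only over the tube actually used in \cref{key_theorem} rather than over all of $F_d^+$ as literally stated.
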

\begin{proof}
	See \cref{proof_lemma_approx5}.
\end{proof}
The proof of \cref{key_theorem} is completed by applying \cref{lemma_approx4} and  \cref{lemma_approx5}.

\section{Proof of Lemmas in \cref{proof_key_theorem}}\label{proof_lemmas_key_theorem}
\subsection{Proof of \cref{lemma_approx1}}\label{proof_lemma_approx1}

{For $x^d\in F_d^+$,}
by Taylor expansion and mean value theorem, we have 
\[
&|\log \pi_{-i}(Y_{-i}\,|\,x_i)-\log \pi_{-i}(x_{-i}\,|\,x_i)-m_1(Y_{-i},x^d)|\\
&\le \sup_{\tilde{x}^d{\in \Reals^d}}\left|\frac{1}{6}\sum_{j,k,l\neq i}\frac{\partial^3 \log \pi^d(\tilde{x}^d)}{\partial x_j\partial x_k\partial x_l}(Y_j-x_j)(Y_k-x_k)(Y_l-x_l)	\right|.
\]
In the above summation, the summation over {the cases of $j=k=l$ equals to}
\[
{\sup_{\tilde{x}^d\in \Reals^d}\left|\frac{\partial^3\log \pi^d(\tilde{x}^d)}{\partial x_j^3}\right|}\bigO(d\EE |Y_j-x_j|^3)={o(d^{1/2})}\bigO\left(d(\sqrt{\ell^2/(d-1)})^3\right)=o(1).
\]
For the cases of $j=k\neq l$, we have
\[
\sum_{j=k\neq l}\frac{\partial^3 \log \pi_{-i}(\tilde{x}^d)}{\partial x_j^2 \partial x_l}(Y_j-x_j)^2(Y_l-x_l)
=\sum_j (Y_j-x_j)^2 \sum_{l\neq k} \frac{\partial^3 \log \pi_{-i}(\tilde{x}^d)}{\partial x_j^2 \partial x_l} (Y_l-x_l).
\]
By Assumption \ref{assumption_A3}, we have
$\EE\left|\sum_{j\neq l} \frac{\partial^3 \log \pi_{-i}(\tilde{x}^d)}{\partial x_j^2 \partial x_l} (Y_l-x_l)\right|=\bigO(l_d/d) o(d/l_d)=o(1)$ since $\frac{\partial^3 \log \pi^d(\tilde{x}^d)}{\partial x_j^2 \partial x_l}$ goes to zero when $|k-i|>l_d$. Then, by $\EE |Y_j-x_j|^2=\bigO(1/d)$,  the summation over all cases of $j=k\neq l$ equals to $d\bigO_{\Pr}(1/d)o_{\Pr}(1)=o_{\Pr}(1)$.

Finally, for $j\neq k\neq l$, it suffices to show
\[
&\sup_{\tilde{x}^d\in {\Reals^d}}\left|\sum_{j\neq k\neq l\neq i}\frac{\partial^3 \log \pi_{-i}(\tilde{x}^d)}{\partial x_j\partial x_k\partial x_l}(Y_j-x_j)(Y_k-x_k)(Y_l-x_l)	\right|\\
&\le \sum_{i\neq j\neq k\neq l}\left(\sup_{\tilde{x}^d\in {\Reals^d}}\left|\frac{\partial^3 \log \pi_{-i}(\tilde{x}^d)}{\partial x_j\partial x_k \partial x_l}\right|\right)\left|(Y_j-x_j)(Y_k-x_k)(Y_l-x_l)\right|=o_{\Pr}(1).
\]
Note that {$\{\left|(Y_j-x_j)(Y_k-x_k)(Y_l-x_l)\right|\}_{j\neq k\neq l}$ are independent random variables which don't depend on the values of $x_j,x_k,x_l$, and 
	\[
	\left|(Y_j-x_j)(Y_k-x_k)(Y_l-x_l)\right|=\bigO_{\Pr}\left((\sqrt{\ell^2/(d-1)})^3\right)=\bigO_{\Pr}(d^{-3/2}).
	\]}
Therefore, {the summation for cases $j\neq k\neq l$ is $o_{\Pr}(1)$ under Assumption \ref{assumption_A3}.}
We have proven the result for fixed $i$. Finally, it is easy to check the proof holds uniformly over $i\in\{1,\dots,d\}$.

\subsection{Proof of \cref{lemma_approx2}}\label{proof_lemma_approx2}

\begin{lemma}(Quadratic Form of Gaussian Random Vector)\label{lemma_quadratic_form}
	If $z^d\sim \normal_d(\mu,\Sigma)$, then
	\[
	\EE(z^T A z)=\trace(A\Sigma)+\mu^T A\mu,\quad \var(z^T Az)=2\trace(A\Sigma A\Sigma)+4\mu^T A\Sigma A\mu.
	\]
\end{lemma}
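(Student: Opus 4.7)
The plan is to handle the mean and the variance separately, using the trace trick for the first part and a whitening plus diagonalisation argument for the second. Throughout I would assume without loss of generality that $A$ is symmetric, since only $(A+A^T)/2$ contributes to the quadratic form.

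For the mean, I would use the identity $z^T A z = \trace(A z z^T)$, which is valid because the trace of a scalar is the scalar. Taking expectations and using linearity together with the standard identity $\EE(zz^T) = \Sigma + \mu \mu^T$ immediately gives $\EE(z^T A z) = \trace(A\Sigma) + \mu^T A \mu$. This is essentially one line of algebra.

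For the variance, the plan is to write $z = \mu + \Sigma^{1/2} w$ with $w \sim \normal_d(0, I)$, so that
\[
z^T A z = \mu^T A \mu + 2 \mu^T A \Sigma^{1/2} w + w^T B w, \qquad B := \Sigma^{1/2} A \Sigma^{1/2}.
\]
The deterministic term drops out. Next, I would diagonalise $B = U \Lambda U^T$ and set $\tilde w := U^T w$, which is again standard normal by the rotational invariance of the Gaussian. Writing $\tilde \mu := U^T \Sigma^{1/2} A^T \mu$, the nontrivial part becomes $2 \tilde\mu^T \tilde w + \sum_i \lambda_i \tilde w_i^2$. Because the odd moments $\EE(\tilde w_i^3)$ vanish and coordinates of $\tilde w$ are independent, the covariance between the linear term and the quadratic term is zero, so the two variances add. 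The linear term has variance $4 \tilde\mu^T \tilde\mu = 4 \mu^T A \Sigma A \mu$, and the quadratic term has variance $\sum_i 2 \lambda_i^2 = 2 \trace(B^2) = 2 \trace(A \Sigma A \Sigma)$ using $\chi_1^2$ having variance $2$ and the cyclic property of the trace. Adding the two gives the stated formula.

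There is no real obstacle here; the main thing to get right is the vanishing of the linear/quadratic cross-covariance, which is where symmetry of $A$ and the third-moment property of standard Gaussians are used. Everything else is bookkeeping with the trace and the cyclic identity $\trace(\Sigma^{1/2} A \Sigma^{1/2} \Sigma^{1/2} A \Sigma^{1/2}) = \trace(A \Sigma A \Sigma)$.
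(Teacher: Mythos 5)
Your proof is correct. The paper states this lemma as a standard fact and offers no proof at all, so there is nothing to compare against; your argument — the trace trick $\EE(z^TAz)=\trace(A\,\EE[zz^T])$ for the mean, and the whitening $z=\mu+\Sigma^{1/2}w$ followed by diagonalisation of $B=\Sigma^{1/2}A\Sigma^{1/2}$ for the variance — is the standard derivation and every step checks out. You are also right to flag the symmetry of $A$ explicitly: the variance formula $2\trace(A\Sigma A\Sigma)+4\mu^TA\Sigma A\mu$ is only valid as written for symmetric $A$ (for general $A$ one must replace it by its symmetric part, which changes the value of $\trace(A\Sigma A\Sigma)$), and in the paper's application $A$ is the Hessian $\nabla^2\log\pi_{-i}$, so symmetry holds. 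The vanishing of the linear--quadratic cross-covariance via $\EE(\tilde w_i^3)=0$ and independence of the coordinates of $\tilde w$ is exactly the right justification for why the two variances add.
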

Note that $Y_{-i}\sim \normal_{d-1}(x_{-i},\frac{\ell^2}{d-1}I)$ and  $(Y_{-i}-x_{-i})^T[\nabla^2\log\pi_{-i}](Y_{-i}-x_{-i})$ is a quadratic form of Gaussian random vector.	By \cref{lemma_quadratic_form},\[
\EE\left[(Y_{-i}-x_{-i})^T[\nabla^2\log\pi_{-i}](Y_{-i}-x_{-i})\right]=\frac{\ell^2}{d-1}\sum_{j\neq i}\frac{\partial^2 \log\pi_{-i}}{\partial x_j^2}.
\]
Therefore, it suffices to show the variance of the quadratic form goes to zero.
Using the assumptions, the variance satisfies
\[
&\frac{2\ell^4}{(d-1)^2}\trace\left([\nabla^2\log\pi_{-i}][\nabla^2\log\pi_{-i}]\right)\\
&=\frac{2\ell^4}{(d-1)^2}\sum_{j\neq i}\sum_{k\neq i} \left(\frac{\partial^2 \log \pi_{-i}}{\partial x_j\partial x_k}\right)^2\\
&\le\frac{2\ell^4}{(d-1)^2}\sum_{l=0}^{d-1}\sum_{\{j,k: |j-k|=l\}} \left(\frac{\partial^2 \log \pi^d}{\partial x_j\partial x_k}\right)^2\\
&=\frac{2\ell^4}{(d-1)^2}\sum_{l\le l_d}\sum_{\{j,k: |j-k|=l\}} \left(\frac{\partial^2 \log \pi^d}{\partial x_i\partial x_j}\right)^2\\
&\qquad+\frac{2\ell^4}{(d-1)^2}\sum_{l> l_d}\sum_{\{j,k: |j-k|=l\}} \left(\frac{\partial^2 \log \pi^d}{\partial x_j\partial x_k}\right)^2\\
&\le\frac{2\ell^4}{(d-1)^2}(d-1)l_d\sup_{|j-k|\le l_d}\sup_{x^d\in F_d^+} \left(\frac{\partial^2 \log \pi^d}{\partial x_j\partial x_k}\right)^2\\
&\qquad+\frac{2\ell^4}{(d-1)^2}(d-1)^2\sup_{|j-k|>l_d} \sup_{x^d\in F_d^+}\left(\frac{\partial^2 \log \pi^d}{\partial x_j\partial x_k}\right)^2\\
&=\mathcal{O}(l_d/d)o(d/l_d)+o(1)=o(1),
\]
where we have used 
$\sup_{x^d\in F_d^+}\sup_{|j-k|\le l_d} \frac{\partial^2 \log \pi^d}{\partial x_j \partial x_k}=o(\sqrt{d/l_d})$ from Assumption \ref{assumption_A1}.

\subsection{Proof of \cref{lemma_approx3}}\label{proof_lemma_approx3}
Note that
\[
R_d^{(i)}+S_d^{(i)}&=\frac{1}{d-1}\sum_{j\neq i} \left(\frac{\partial \log \pi_{-i}}{\partial x_j}\right)^2+\frac{1}{d-1}\sum_{j\neq i}\frac{\partial^2 \log\pi_{-i}}{\partial x_j^2}\\
&=\frac{1}{d-1}\sum_{j\neq i}\left\{ \left(\frac{\partial \log \pi^d}{\partial x_j}\right)^2+\frac{\partial^2 \log\pi^d}{\partial x_j^2}\right\}\\
&=\frac{1}{d-1}\sum_{j\neq i}\left\{\frac{1}{(\pi^d)^2}\left(\frac{\partial \pi^d}{\partial x_j}\right)^2+\frac{\partial }{\partial x_j} \left(\frac{\partial \log \pi^d}{\partial x_j}\right)\right\}\\
&=\frac{1}{d-1}\sum_{j\neq i}\left\{\frac{1}{(\pi^d)^2}\left(\frac{\partial \pi^d}{\partial x_j}\right)^2+\frac{\partial }{\partial x_j} \left(\frac{1}{\pi^d}\frac{\partial  \pi^d}{\partial x_j}\right)\right\}\\
&=\frac{1}{d-1}\sum_{j\neq i}\left\{\frac{1}{(\pi^d)^2}\left(\frac{\partial \pi^d}{\partial x_j}\right)^2+\frac{\pi^d\frac{\partial^2 \pi^d}{\partial x_j^2}-\left(\frac{\partial \pi^d}{\partial x_j}\right)^2}{(\pi^d)^2}\right\}\\
&=\frac{1}{(d-1)}\sum_{j\neq i} \frac{\partial^2 \pi^d}{\partial x_j^2}\frac{1}{\pi^d}.
\]
Next, we show $\EE\left[\sup_i (R_d^{(i)}+S_d^{(i)})^2\right]$ converges to $0$.
To prove this, consider writing $\EE\left[\sup_i(R_d^{(i)}+S_d^{(i)})^2\right]$ as sum of $(d-1)^2$ terms
{\[
&\EE\left[\sup_i(R_d^{(i)}+S_d^{(i)})^2\right]=\frac{1}{(d-1)^2}\int\sup_i\sum_{j\neq i}\sum_{k\neq i} \left(\frac{\partial^2 \pi^d}{\partial x_j^2}\frac{1}{\pi^d}\right)\left(\frac{\partial^2 \pi^d}{\partial x_k^2}\frac{1}{\pi^d}\right)\pi^d \dee x^d\\
&\le \frac{1}{(d-1)^2}\sum_{j=1}^d\sum_{k=1}^d\int \left(\frac{\partial^2 \pi^d}{\partial x_j^2}\frac{1}{\pi^d}\right)\left(\frac{\partial^2 \pi^d}{\partial x_k^2}\frac{1}{\pi^d}\right)\pi^d \dee x^d -\frac{2}{(d-1)^2}\int \inf_i \sum_{j\neq i} \left(\frac{\partial^2 \pi^d}{\partial x_j^2}\frac{1}{\pi^d}\right) \pi^d \dee x^d\\
&=\frac{1}{(d-1)^2}\sum_{j=1}^d\sum_{k=1}^d\int \left(\frac{\partial^2 \pi^d}{\partial x_j^2}\frac{1}{\pi^d}\right)\left(\frac{\partial^2 \pi^d}{\partial x_k^2}\frac{1}{\pi^d}\right)\pi^d \dee x^d +o(1),
\]
where the last equality follows from
\[
\frac{2}{(d-1)^2}\int \inf_i \sum_{j\neq i} \left(\frac{\partial^2 \pi^d}{\partial x_j^2}\frac{1}{\pi^d}\right) \pi^d \dee x^d&\ge \frac{2}{(d-1)^2}\int \inf_i \sum_{j\neq i} \left(\frac{\partial^2\log \pi_{-i}}{\partial x_j^2}\right) \pi^d \dee x^d\\
&=\frac{2}{(d-1)^2}  o(d\sqrt{d/l_d})=o(1).
\]
}

When $|j-k|\ge l_d$, by Assumption \ref{assumption_A2}, we have
\[
&\int \left(\frac{\partial^2 \pi^d}{\partial x_j^2}\frac{1}{\pi^d}\right)\left(\frac{\partial^2 \pi^d}{\partial x_k^2}\frac{1}{\pi^d}\right)\pi^d \dee x^d\\
&=\int \left(\frac{\partial^2 \pi^d}{\partial x_j^2}\right)\left(\frac{\partial^2 \pi^d}{\partial x_k^2}\right)\frac{1}{\pi^d} \dee x^d\\
&=\int \left(\frac{\partial^2 \pi_{j,k|-j-k}}{\partial x_j^2}\right)\left(\frac{\partial^2 \pi_{j,k|-j-k}}{\partial x_k^2}\right)\frac{1}{\pi_{j,k|-j-k}}\pi_{-j-k} \dee x_{-j-k} \dee x_j \dee x_k\\
&\le \int\left[\sup_{x^d\in F_d}\int \left(\frac{\partial^2 \pi_{j,k|-j-k}}{\partial x_j^2}\right)\left(\frac{\partial^2 \pi_{j,k|-j-k}}{\partial x_k^2}\right)\frac{1}{\pi_{j,k|-j-k}}\dee x_j\dee x_k\right] \pi_{-j-k}\dee x_{-j-k}\\
&\to 0.
\]
This implies
{$\EE\left[\sup_i (R_d^{(i)}+S_d^{(i)})^2\right]=\frac{\mathcal{O}(d\, l_d) + (d-l_d)^2 o(1)}{(d-1)^2}+o(1)\to 0$.} Therefore, uniformly over $i$, $R_d^{(i)}+S_d^{(i)}\to 0$ in probability, then there exists a sequence $\{F_d'\}$ such that $\Pr(R_d^{(i)}+S_d^{(i)} \in F_d', \forall i)\to 1$  and the following holds
\[
\sup_i \sup_{x^d \in F_d'}  \left|R_d^{(i)}+S_d^{(i)}\right|\to 0.
\]
\subsection{Proof of \cref{lemma_approx4}}\label{proof_lemma_approx4}
Note that Assumption \ref{assumption_A4} implies 
\[
\sup_{i\in\{1,\dots,d\}} \sup_{x^d \in F_d^+}\frac{\partial}{\partial x_i}\log \pi^d(x^d)=o\left(d^{1/2}\right).
\]
Then, by the definitions of $R_d^{(i)}$ and $I_d(x^d)$, we have
\[
R_d^{(i)}-I_d(x^d)&=\frac{1}{d-1}\sum_{j\neq i} \left(\frac{\partial \log \pi_{-i}(x_{-i}\,|\,x_i)}{\partial x_j}\right)^2-\frac{1}{d}\sum_{j=1}^d \left(\frac{\partial}{\partial x_j}\log \pi^d(x^d) \right)^2\\
&=\frac{1}{d-1}\sum_{j\neq i} \left(\frac{\partial \log \pi^d(x^d)}{\partial x_j}\right)^2-\frac{1}{d}\sum_{j=1}^d \left(\frac{\partial}{\partial x_j}\log \pi^d(x^d) \right)^2\\
&=\frac{1}{d}R_d^{(i)}-\frac{1}{d}\left(\frac{\partial}{\partial x_i}\log \pi^d(x^d) \right)^2\to 0.
\] 
\subsection{Proof of \cref{lemma_approx5}}\label{proof_lemma_approx5}
Recall that we have shown 
\[
\hat{M}_{x^d}^{(i)}(Y_i)
&=\Phi\left(R_d^{(i)}(Y_i)^{-1/2}\left(\ell^{-1}\log\frac{\pi^d(y^d(i))}{\pi^d(x^d)}-\ell R_d^{(i)}(Y_i)/2\right)\right)\\
&\quad +\exp\left(\log\frac{\pi^d(y^d(i))}{\pi^d(x^d)}\right)\Phi\left(-\ell R_d^{(i)}(Y_i)^{1/2}/2-\log\frac{\pi^d(y^d(i))}{\pi^d(x^d)} R_d^{(i)}(Y_i)^{-1/2}\ell^{-1}\right).
\]
For notational simplicity, we omit the index $i$ and write $R_d^{(i)}$ by $R_d$. To simplify the derivation, we note that $\hat{M}_{x^d}^{(i)}(y)$ has the following form
\[
M(y)=\Phi\left(f(y)g(y)-\frac{1}{2}f^{-1}(y)\right)+\exp(g(y))\Phi\left(-\frac{1}{2}f^{-1}(y)-f(y)g(y)\right),
\]
where $f^{-1}(y):=\ell R_d^{1/2}(y)$ and $g(y)=\log \pi^d(y^d(i))-\log \pi^d(x^d)$. Taking the derivative with respect to $y$, we get
\[
\frac{\dee M(y)}{\dee y}
&= \Phi'(fg-f^{-1}/2)\frac{\dee }{\dee y}(fg-f^{-1}/2)\\
&\quad +\exp(g)\Phi'(-f^{-1}/2-fg)\frac{\dee}{\dee y}(-fg-f^{-1}/2)\\
&\quad +\exp(g) \left(\frac{\dee }{\dee y}g\right)\Phi(-fg-f^{-1}/2)\\
&\le \|\Phi'\|_{\infty} \left|\frac{\dee f}{\dee y}g+\frac{\dee g}{\dee y}f -\frac{1}{2}\frac{\dee f^{-1}}{\dee y}\right|\\
&\quad +\exp(g) \|\Phi'\|_{\infty}\left|\frac{\dee f}{\dee y}g+\frac{\dee g}{\dee y}f +\frac{1}{2}\frac{\dee f^{-1}}{\dee y}\right|\\
&\quad +\exp(g) \left|\frac{\dee g}{\dee y}\right|\|\Phi\|_{\infty}
\]

Note that both $\Phi$ and $\Phi'$ are bounded functions.
It then suffices to show
\[
&\exp(g)\left|\frac{\dee g}{\dee y}\right|=o(d^{1/2}),\quad \exp(g)\left|\frac{\dee f}{\dee y} g\right|=o(d^{1/2}),\\
&\exp(g)\left|\frac{\dee g}{\dee y} f\right|=o(d^{1/2}),\quad \exp(g)\left|\frac{\dee f^{-1}}{\dee y} \right|=o(d^{1/2}).
\]
Observing that $\frac{\dee f^{-1}}{\dee y}=\frac{1}{2}\ell R_d'/R_d^{1/2}$ and $\frac{\dee f}{\dee y}=-\frac{1}{2\ell} \frac{1}{R_d}\frac{R_d'}{R_d^{1/2}}$, if we can show
\[\label{tmp_eq2}
\sup_{i\in\{1,\dots,d\}} \frac{\dee R_d^{(i)}(y)}{\dee y}\frac{1}{[R_d^{(i)}(y)]^{1/2}}=o(1),
\]
then we can get $\frac{\dee f^{-1}}{\dee y}=o(1)$ and $\frac{\dee f}{\dee y}=o(1/R_d)$. Using $R_d^{(i)}\to I_d(x^d)$ from \cref{proof_lemma_approx4}, it suffices to show
\[
&\left(\sup_{x^d \in F_d^+} \pi^d(x^d)\right) \left(\sup_i\sup_{x^d \in F_d^{(i)}} \frac{\partial \log \pi^d(x^d)}{\partial x_i}\right) =o(d^{1/2}),\\
&\left(\sup_{x^d \in F_d^+} \pi^d(x^d)\right) \left(\sup_{x^d \in F_d^+} \left|\log(\pi^d(x^d))/I_d(x^d)\right|\right)=o(d^{1/2}),\\
&\left(\sup_{x^d \in F_d^+}\pi^d(x^d)\right)\left(\sup_i\sup_{x^d \in F_d^{(i)}}\left|\frac{\partial \log \pi^d(x^d)}{\partial x_i}/\sqrt{I_d(x^d)} \right|\right)=o(d^{1/2}).
\]
One can easily verify that the above equations hold under Assumption \ref{assumption_A4}.

Finally, we complete the proof by showing \cref{tmp_eq2}. Recall that
\[
R_d^{(i)}(y)=\frac{1}{d-1}\sum_{j\neq i} \left(\frac{\partial \log \pi_{-i}(x_{-i}\,|\,x_i=y)}{\partial x_j}\right)^2.
\]
For notational simplicity, we write
\[
R_d^{(i)}(y)=\frac{1}{d-1}\sum_{j\neq i} f_j^2(y),
\]
where $f_j(y):=\frac{\partial \log \pi_{-i}(x_{-i}\,|\,x_i=y)}{\partial x_j}$. Then,  by Cauchy--Schwartz inequality
\[
\frac{\partial R_d^{(i)}(y)}{\dee y}=\frac{2}{d-1}\sum_{j\neq i} f_j(y) f'_j(y)\le \frac{2}{d-1}\sqrt{\sum_{j\neq i} f_j^2(y)\sum_{j\neq i} |f_j'(y)|^2}.
\]
Note that by \ref{assumption_A1}, if $|i-j|>l_d$ then $f_j'(y)\le \sup_{x^d \in F_d}\frac{\partial^2 \log \pi^d(x^d)}{\partial x_i\partial x_j}\to 0$. Hence, we have
\[
&\sup_{i\in\{1,\dots,d\}} \frac{\dee R_d^{(i)}(y)}{\dee y}\frac{1}{[R_d^{(i)}(y)]^{1/2}}\le\sup_i \frac{\frac{2}{d-1}\sqrt{\sum_{j\neq i} f_j^2(y)\sum_{j\neq i} |f_j'(y)|^2}
}{\sqrt{\frac{1}{d-1}\sum_{j\neq i} f_j^2(y)}}\\
&=2\sup_i \sqrt{\frac{1}{d-1}\sum_{j\neq i}|f_j'(y)|^2 }
\le 2\sqrt{\frac{1}{d-1}\sum_{j=1}^d |f_j'(y)|^2 }=o\left(\sqrt{\frac{l_d }{d}(\sqrt{d/l_d})^2}\right)=o(1).
\]

\section{Proof of \cref{thm_diffusion}}\label{proof_thm_diffusion}

Similar to \cref{proof_key_theorem}, we assume the coordinates are linear ordered for simplicity. 
The proof follows the framework of \cite{Roberts1997} using the generator approach \cite{Ethier1986}.

{
	Define the (discrete time) generator of $x^d$ by
	\[
		(G_d f)(x^d):&=d \EE_{Y^d}\left\{[f(Y^d)-f(x^d)]\left(1\wedge \frac{\pi^d(Y^d)}{\pi^d(x^d)}\right)\right\},
	\]
	for any function $f$ for which this definition makes sense. In the Skorokhod topology, it doesn't cause any problem to treat $G_d$ as a continuous time generator. We shall restrict attention to test functions such that $f(x^d)=f(x_1)$.
}
{	
	We show uniform convergence of $G_d$ to $G$, the generator of the limiting (one-dimensional) Langevin diffusion, for a suitable large class of real-valued functions $f$, where, for some fixed function $h(\ell)$, 
	\[
		(Gf)(x_1):=h(\ell)\left\{\frac{1}{2}f''(x_1)+\frac{1}{2}[(\log \tilde{\pi})'(x_1)]\,f'(x_1)\right\},
	\]
	in which $\tilde{\pi}$ is a one-dimensional density of the first coordinate of $\pi^d$.} {Since we have assumed in \ref{assumption_A6} that $(\log \tilde{\pi})'$ is Lipschitz, by \cite[Thm 2.1 in Ch.8]{Ethier1986}, a core for the generator has domain $C_c^{\infty}$, which is the class of continuous functions with compact support such that all orders of derivatives exist. This enable us to restrict attentions to functions $f_c\in C_c^{\infty}$ such that $f_c(x^d)=f_c(x_1)$.
}


Note that using Assumption \ref{assumption_A2_plus}, and the assumption $\pi^d(F_d^c)=\bigO(d^{-1-\delta})$, following the arguments in the proof of \cref{lemma_approx3} we can get a stronger version of \cref{lemma_approx3} for $F_d':=\{x^d: \sup_i|R_d^{(i)}+S_d^{(i)}|\le d^{-\delta}\}$. Then using a union bound yields
\[
\Pr(X^d(\lfloor ds\rfloor) \notin F_d\cap F_d', \exists 0\le s \le t)\to 0.
\]
Therefore, for any fixed $t$, if $d\to\infty$ then the probability of all $X^d(\lfloor ds \rfloor), 0\le s\le t$ are in $F_d\cap F_d'$ goes to $1$. Since $F_d\cap F_d'\subseteq F_d^+\cap F_d'\subseteq F_d^+$, it suffices to consider $x^d\in F_d^+$.

Note that   $Y^d\sim\normal(x^d,\frac{\ell^2}{d-1}I)$, we can write
\[
(G_d f_c)(x^d)=d \EE_{Y_1}\left\{[f_c(Y_1)-f_c(x_1)]\EE_{Y_{-1}}\left[1\wedge \frac{\pi^d(Y^d)}{\pi^d(x^d)}\right]\right\},
\]
where $\EE_{Y_{-1}}[\cdot]$ is short for $\EE_{Y_2,\dots,Y_d\,|\,Y_1}[\cdot]$ and $\pi^d$ denotes the target distribution in $d$-dimension. The goal is then to prove $(G_d f_c)$ converges to $(G f_c)$.

Recall the definition \cref{tmp_eq_def_M}, we omit the index to write $M_{x^d}^{(1)}$ as $M_{x^d}$, which is defined by
\[
M_{x^d}(Y_1)=\EE_{Y_{-1}}\left(1\wedge \frac{\pi^d(Y^d)}{\pi^d(x^d)}\right).
\]
Then we have previously shown in \cref{tmp_eq_M_by_hatM} that $M_{x^d}(Y_1)$ can be approximated by
\[
\hat{M}_{x^d}(Y_1)
&=\Phi\left(R_d(Y_1)^{-1/2}\left(\ell^{-1}\log\frac{\pi^d(Y_1,x_{-1})}{\pi^d(x^d)}-\ell R_d(Y_1)/2\right)\right)\\
&\quad +\exp\left(\log\frac{\pi^d(Y_1,x_{-1})}{\pi^d(x^d)}\right)\Phi\left(-\ell R_d(Y_1)^{1/2}/2-\log\frac{\pi^d(Y_1,x_{-1})}{\pi^d(x^d)} R_d(Y_1)^{-1/2}\ell^{-1}\right)
\]
For $x^d\in F_d^+$, some properties of $\hat{M}_{x^d}$ is given as follows.
\begin{lemma}\label{lemma_property_hatM} For $\hat{M}_{x^d}$, we have 
	\[
	\hat{M}_{x^d}(x_1)&=2\Phi\left(-\frac{\ell R_d^{1/2}(x_1)}{2}\right),\\
	\hat{M}'_{x^d}(x_1)
	&=\Phi\left(-\frac{\ell R_d^{1/2}(x_1)}{2}\right)  \frac{\dee [\log \pi_1(x)+\log\pi_{-1}(x_{-1}\,|x)]}{\dee x}(x_1)+o(1),\\
	\hat{M}'_{x^d}(x_1)&=o(d^{1/2}),\quad \sup_{x^d\in F_d^+} \hat{M}''_{x^d}=o(d^{1/2}).
	\]
\end{lemma}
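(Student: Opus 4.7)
The plan is to write $\hat{M}_{x^d}(y) = \Phi(A(y)) + e^{g(y)}\Phi(B(y))$ in a convenient form and then differentiate twice. Set $g(y) := \log \pi^d(y, x_{-1}) - \log \pi^d(x^d)$, $A(y) := \ell^{-1} R_d(y)^{-1/2} g(y) - \ell R_d(y)^{1/2}/2$, and $B(y) := -\ell R_d(y)^{1/2}/2 - \ell^{-1} R_d(y)^{-1/2} g(y)$, so that $A(y) + B(y) = -\ell R_d(y)^{1/2}$ and $A(y) - B(y) = 2\ell^{-1} R_d(y)^{-1/2} g(y)$. Observe that $g(x_1)=0$ and hence $A(x_1) = B(x_1) = -\ell R_d(x_1)^{1/2}/2$, which yields the first identity directly.

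For the second identity, differentiate once to obtain
\[
\hat{M}'_{x^d}(y) = \Phi'(A(y))\,A'(y) + g'(y)\,e^{g(y)}\,\Phi(B(y)) + e^{g(y)}\,\Phi'(B(y))\,B'(y).
\]
Evaluating at $y=x_1$, note that $e^{g(x_1)}=1$ and $\Phi'(A(x_1))=\Phi'(B(x_1))$, so the two $\Phi'$-terms combine into $\Phi'(-\ell R_d(x_1)^{1/2}/2)\,[A'(x_1)+B'(x_1)]$. Using $A'+B' = -\ell R_d'/(2R_d^{1/2})$, which is $o(1)$ uniformly on $F_d^+$ by the estimate $R_d'/R_d^{1/2} = o(1)$ established at the end of the proof of \cref{lemma_approx5} (relying on \ref{assumption_A1}), and the boundedness of $\Phi'$, this term is $o(1)$. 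The remaining contribution is $\Phi(-\ell R_d(x_1)^{1/2}/2)\,g'(x_1)$, and since $g'(x_1) = \frac{\dee}{\dee x_1}[\log \pi_1(x_1)+\log \pi_{-1}(x_{-1}\,|\,x_1)]$ this is exactly the second identity. The third identity then follows from $|g'(x_1)| = O(d^{\alpha/2}) = o(d^{1/2})$ (by \ref{assumption_A4_plus} with $\alpha<1/2$) combined with boundedness of $\Phi$.

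The hard part will be the fourth identity, which requires a uniform bound on $\hat{M}''_{x^d}$ on $F_d^+$. Differentiating once more gives
\[
\hat{M}''_{x^d}(y) = \Phi''(A)(A')^2 + \Phi'(A) A'' + [g''(y)+g'(y)^2]e^{g}\Phi(B) + 2g'(y) e^{g}\Phi'(B) B' + e^{g}\Phi'(B) B'' + e^{g}\Phi''(B) (B')^2.
\]
The approach is to bound each of these six terms using: (a) the uniform boundedness of $\Phi,\Phi',\Phi''$; (b) the identity $e^{g(y)} = \pi^d(y,x_{-1})/\pi^d(x^d)$ together with $\sup_{F_d^+}\pi^d = o(d^{1/2-\alpha})$, exactly as in the proof of \cref{lemma_approx5}; (c) the rate $|g'| = O(d^{\alpha/2})$ and, using $g''(y)=\partial^2_{x_1}\log \pi^d(y,x_{-1})+g'(y)^2$, the rate $|g''| = o(d^{\alpha})$ from the newly-added bound $\partial^2\log\pi^d/\partial x_i^2 = o(d^{\alpha})$ in \ref{assumption_A4_plus}; (d) bounds on $|A'|,|B'|,|A''|,|B''|$ obtained from the $A\pm B$ identities together with $R_d'/R_d^{1/2} = o(1)$ and the analogous estimate for $R_d''/R_d^{1/2}$, where $R_d''(y) = \frac{2}{d-1}\sum_{j\neq 1}[(f_j'(y))^2 + f_j(y)f_j''(y)]$ is controlled by \ref{assumption_A1} for $|i-j|\ge l_d$ and by the near-diagonal third-derivative bound in \ref{assumption_A3_plus} (combined with $|f_j|=O(d^{\alpha/2})$) for the $o(l_d)$ remaining indices. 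The main obstacle is the bookkeeping: one must verify that, term by term, the worst combination of $g',g'',R_d',R_d''$ and the exponential factor is $o(d^{1/2})$ uniformly on $F_d^+$. The strengthening from \ref{assumption_A4} to \ref{assumption_A4_plus} (and of \ref{assumption_A3} to \ref{assumption_A3_plus}) is precisely what makes this uniform second-derivative control possible.
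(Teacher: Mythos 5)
Your proposal is correct and follows essentially the same route as the paper: the same decomposition $\hat M = \Phi(A) + e^g\Phi(B)$, evaluation at $x_1$ via $g(x_1)=0$, the $o(1)$ correction from $R_d'/R_d^{1/2}=o(1)$, and the same term-by-term bounding of $\hat M''$ using the boundedness of $\Phi,\Phi',\Phi''$, the rates in \ref{assumption_A4_plus}, and the estimates $R_d',R_d''=o(R_d^{1/2})$ derived from \ref{assumption_A1} and \ref{assumption_A3_plus}. One small slip: since $g(y)=\log\pi^d(y,x_{-1})-\log\pi^d(x^d)$, we have $g''=\partial^2\log\pi^d/\partial x_1^2$ exactly (no extra $(g')^2$ term in $g''$ itself — that term arises separately from the product rule in $\frac{\dee}{\dee y}(g'e^g)$, where you already account for it), so your stated conclusion $|g''|=o(d^{\alpha})$ stands.
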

\begin{proof}
	See \cref{proof_lemma_property_hatM}.
\end{proof}
Since $f_c(Y_1)-f_c(x_1)$ is bounded, it suffices to show
\[
\EE_{Y_1}\left\{d[f_c(Y_1)-f_c(x_1)]\hat{M}_{x^d}(Y_1)\right\}\to (Gf_c)(x_1).
\]
Now using mean value theorem and Taylor expansion of $\EE_{Y_1}\left\{[f_c(Y_1)-f_c(x_1)]\hat{M}_{x^d}(Y_1)\right\}$ at $(Y_1-x_1)$ yields
\[
&[f_c(Y_1)-f_c(x_1)]\hat{M}_{x^d}(Y_1)\\
&= \left[f_c'(x_1)(Y_1-x_1)+\frac{1}{2}f_c''(x_1)(Y_1-x_1)^2+K(Y_1-x_1)^3\right]\\
&\qquad\cdot\left[\hat{M}_{x^d}(x_1)+ \hat{M}'_{x^d}(x_1)(Y_1-x_1)+\frac{1}{2}\hat{M}''_{x^d}(x')(Y_1-x_1)^2\right]\\
&=f_c'(x_1)\hat{M}_{x^d}(x_1)(Y_1-x_1)+\left[\frac{1}{2}f_c''(x_1)\hat{M}_{x^d}(x_1) + f_c'(x_1) \hat{M}'_{x^d}(x_1)\right](Y_1-x_1)^2\\
&\qquad+\left[K\hat{M}_{x^d}(x_1)+\frac{1}{2}f_c''(x_1)\hat{M}'_{x^d}(x_1)+\frac{1}{2}\hat{M}''_{x^d}(x')f_c'(x_1)\right](Y_1-x_1)^3\\
&\qquad+\left[\frac{1}{4}\hat{M}''_{x^d}(x')f_c''(x_1) +K\hat{M}'_{x^d}(x_1) \right](Y_1-x_1)^4+ \frac{1}{2}\hat{M}''_{x^d}(x')K(Y_1-x_1)^5,
\]
where $K$ is a constant since $f_c$ has bounded third derivative. Note that both $f_c'(x_1)$ and $f_c''(x_1)$ are bounded as well. Therefore, taking expectation over $Y_1$ and using $\hat{M}'_{x^d}(x_1)=o(d^{1/2}),\sup_{x^d} \hat{M}''_{x^d}=o(d^{1/2})$ in \cref{lemma_property_hatM}, we have
\[
\EE_{Y_1}\left\{[f_c(Y_1)-f_c(x_1)]\hat{M}_{x^d}(Y_1)\right\}
&=\left[\frac{1}{2}f_c''(x_1)\hat{M}_{x^d}(x_1) + f_c'(x_1) \hat{M}'_{x^d}(x_1)\right]\frac{\ell^2}{d-1}+o(d^{-1}).
\]
Finally, by Assumption \ref{assumption_A6}, we have
\[
&f_c'(x_1)\hat{M}'_{x^d}(x_1)+\frac{1}{2}f_c''(x_1)\hat{M}_{x^d}(x_1)\\
&=2\Phi\left(-\frac{\ell R_d^{1/2}(x_1)}{2}\right)\left(\frac{1}{2}f_c''(x_1)+\frac{1}{2}f_c'(x_1)\frac{\dee [\log \pi_1(x)+\log\pi_{-1}(x_{-1}\,|x)]}{\dee x}(x_1)\right)\\
&=2\Phi\left(-\frac{\ell R_d^{1/2}(x_1)}{2}\right)\left(\frac{1}{2}f_c''(x_1)+\frac{1}{2}f_c'(x_1)\frac{\dee \log\pi_{1|-1}(x\,|\,x_{-1})}{\dee x}(x_1)\right)\\
&\to 2\Phi\left(-\frac{\ell I(x^d)^{1/2}}{2}\right)\left(\frac{1}{2}f_c''(x_1)+\frac{1}{2}f_c'(x_1)\frac{\dee \log\tilde{\pi}(x)}{\dee x}(x_1)\right)\\
&\to 2\Phi\left(-\frac{\ell \bar{I}^{1/2}}{2}\right)\left(\frac{1}{2}f_c''(x_1)+\frac{1}{2}f_c'(x_1)\frac{\dee \log\tilde{\pi}(x)}{\dee x}(x_1)\right),
\]
which implies that $\EE_{Y_1}\left\{d[f_c(Y_1)-f_c(x_1)]\hat{M}_{x^d}(Y_1)\right\}\to (Gf_c)(x_1)$
where $h(\ell):=2\ell^2\Phi(-\ell\sqrt{\bar{I}}/2)$.

\subsection{Proof of \cref{lemma_property_hatM}}\label{proof_lemma_property_hatM}
The proof is quite tedious. In order to simplify the notations, we first introduce the following lemma.
\begin{lemma}\label{lemma_tmp}
	 For the function $M(y)$ defined by
	\[
	M(y)=\Phi\left(f(y)g(y)-\frac{1}{2}f^{-1}(y)\right)+e^{g(y)}\Phi\left(-\frac{1}{2}f^{-1}(y)-f(y)g(y)\right),
	\]
	we have
\[
\frac{\dee M(y)}{\dee y}
&= \Phi'(fg-f^{-1}/2)\frac{\dee }{\dee y}(fg-f^{-1}/2)\\
&\quad +e^g\Phi'(-f^{-1}/2-fg)\frac{\dee}{\dee y}(-fg-f^{-1}/2)\\
&\quad +e^g \left(\frac{\dee }{\dee y}g\right)\Phi(-fg-f^{-1}/2).
\]
\[
	&\frac{\dee^2 M(y)}{\dee y^2}=\Phi''(fg-f^{-1}/2)\left[\frac{\dee}{\dee y}(fg-f^{-1}/2)\right]^2+\Phi'(fg-f^{-1}/2)\frac{\dee^2}{\dee y^2}(fg-f^{-1}/2)\\
	&+e^g\left(\frac{\dee}{\dee y} g\right)\Phi'(-f^{-1}/2-fg)\frac{\dee}{\dee y}(-fg-f^{-1}/2)\\
	&+e^g\left\{\Phi''(-fg-f^{-1}/2)\left[\frac{\dee}{\dee y}(-fg-f^{-1}/2)\right]^2+\Phi'(-fg-f^{-1}/2)\frac{\dee^2}{\dee y^2}(-fg-f^{-1}/2)\right\}\\
	&+e^g\left(\frac{\dee}{\dee y}g\right)\Phi'(-fg-f^{-1}/2)\frac{\dee}{\dee y}(-fg-f^{-1}/2)\\
	&+\Phi(-fg-f^{-1}/2)\left[e^g\left(\frac{\dee^2}{\dee y^2}g\right)+e^g \left(\frac{\dee}{\dee y}g\right)^2 \right].
	\]
	Furthermore, if $g(x_1)=0$, then we have
	\[
	\frac{\dee M(y)}{\dee y}(x_1)&= \left(\Phi'(-f^{-1}/2)\frac{\dee }{\dee y}(fg-f^{-1}/2)\right.\\
	&\quad \left.+\Phi'(-f^{-1}/2)\frac{\dee}{\dee y}(-fg-f^{-1}/2)\right.\\
	&\left.\quad + \left(\frac{\dee }{\dee y}g\right)\Phi(-f^{-1}/2)\right)(x_1)\\
	&=\left(\Phi'(-f^{-1}/2)\frac{\dee }{\dee y}(-f^{-1}) + \left(\frac{\dee }{\dee y}g\right)\Phi(-f^{-1}/2)\right)(x_1)\\
	&=-\Phi'\left(-\frac{f^{-1}(x_1)}{2}\right)\frac{\dee f^{-1}(y)}{\dee y}(x_1) + \frac{\dee g(y)}{\dee y}(x_1)\Phi\left(-\frac{f^{-1}(x_1)}{2}\right).
	\]
\end{lemma}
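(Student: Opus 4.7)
The plan is to prove Lemma \ref{lemma_tmp} by direct application of the chain and product rules to the closed-form expression for $M(y)$. Here $f$, $g$, and $f^{-1}$ are treated as independent smooth functions of $y$ (so $(f^{-1})'$ means the derivative of the function $f^{-1}$, not a reciprocal rule), and the statement is purely computational; the only real pitfall is bookkeeping in the second-derivative expansion, and the specialization $g(x_1)=0$ then follows by plug-in.

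For the first derivative, I would decompose $M = M_1 + M_2$ with $M_1(y) = \Phi(fg - f^{-1}/2)$ and $M_2(y) = e^g \Phi(-f^{-1}/2 - fg)$. The chain rule applied to $M_1$ gives one term. Applying the product rule to $M_2$, followed by the chain rule on the $\Phi$-composition, produces two further terms: $e^g g' \Phi(-f^{-1}/2 - fg)$ from differentiating the exponential factor, and $e^g \Phi'(-f^{-1}/2 - fg)\cdot \frac{\dee}{\dee y}(-f^{-1}/2 - fg)$ from differentiating the $\Phi$. Summing yields the three-line expression for $M'(y)$ in the lemma.

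For the second derivative, I would differentiate each of the three components of $M'(y)$ in turn. Each component is a product of a derivative factor and a transcendental factor, so an additional use of the product rule together with the chain rule through $\Phi'$ (whose derivative is $\Phi''$) produces six terms overall: two from $M_1'$, namely $\Phi''(fg-f^{-1}/2)[(\,\cdot\,)']^2$ and $\Phi'(fg-f^{-1}/2)(\,\cdot\,)''$; two further terms from differentiating the $e^g \Phi'(\cdots)(\cdots)'$ piece of $M_2'$ after pulling out the $e^g g'$ prefactor; and two terms from the $e^g g'\Phi(\cdots)$ piece, namely $e^g(g''+(g')^2)\Phi(\cdots)$ together with $e^g g'\Phi'(\cdots)(\cdots)'$. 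Grouping these yields exactly the claimed six-term formula. The main obstacle here is sign discipline, because the inner arguments of the two $\Phi$'s differ in the sign of the $fg$-term, so the $\Phi''$ contributions from the two parts must not be collapsed carelessly.

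Finally, the specialization at $y = x_1$ with $g(x_1)=0$ is immediate: both arguments of $\Phi$ and $\Phi'$ collapse to $-f^{-1}(x_1)/2$, and $e^{g(x_1)}=1$. The key cancellation is
\[
\frac{\dee}{\dee y}(fg - f^{-1}/2)(x_1) + \frac{\dee}{\dee y}(-fg - f^{-1}/2)(x_1) = -(f^{-1})'(x_1),
\]
since the $(fg)'$ contributions cancel. Hence the two $\Phi'$-terms in $M'(x_1)$ combine into $-\Phi'(-f^{-1}(x_1)/2)\,(f^{-1})'(x_1)$, while the third term of $M'(y)$ contributes $g'(x_1)\Phi(-f^{-1}(x_1)/2)$. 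Adding gives the stated simplified formula.
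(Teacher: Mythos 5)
Your proposal is correct and matches the paper's (implicit) treatment: the lemma is a routine chain-rule/product-rule computation, which the paper states without further proof, and your decomposition $M=M_1+M_2$ plus the observation that the $(fg)'$ contributions cancel at $x_1$ (where both $\Phi$-arguments collapse to $-f^{-1}(x_1)/2$) is exactly the intended verification. The only nitpick is your count of ``six terms'' in the second-derivative expansion (the displayed formula has seven summands when fully expanded), but this is bookkeeping and does not affect correctness.
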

\begin{remark}
	Let $g(y)=\log\frac{\pi^d(y,x_{-1})}{\pi^d(x^d)}$ and $f^{-1}(y)=\ell R_d^{1/2}(y)$ then $\hat{M}_{x^d}(y)=M(y)$.
\end{remark}
Now substituting $g(y)=\log\frac{\pi^d(y,x_{-1})}{\pi^d(x^d)}$ and $f^{-1}(y)=\ell R_d^{1/2}(y)$ to \cref{lemma_tmp}, we have
\[
\hat{M}_{x^d}(x_1)=2\Phi\left(-\frac{\ell R_d^{1/2}(x_1)}{2}\right),
\]
and
\[
\hat{M}'_{x^d}(x_1)=&\frac{\dee \hat{M}_{x^d}(y)}{\dee y}(x_1)\\
=&\Phi\left(-\frac{\ell R_d^{1/2}(x_1)}{2}\right)  \frac{\dee [\log \pi_1(x)+\log\pi_{-1}(x_{-1}\,|x)]}{\dee x}(x_1)\\
&\quad-\Phi'\left(-\frac{\ell R_d^{1/2}(x_1)}{2}\right)\frac{\ell}{2R_d^{1/2}(x_1)}R_d'(x_1).
\]
Since $\Phi'$ is bounded and by \cref{tmp_eq2}, $R_d'(x_1)/R_d^{1/2}(x_1)\to 0$, therefore 
\[
\Phi'\left(-\frac{\ell R_d^{1/2}(x_1)}{2}\right)\frac{\ell}{2R_d^{1/2}(x_1)}R_d'(x_1)=o(1).
\]
Also, $\hat{M}'_{x^d}(x_1)=o(d^{1/2})$ since $\frac{\partial \log \pi^d}{\partial x_i}=\bigO(d^{\alpha/2})=o(d^{1/2})$.

Now we prove $ \sup_{x^d} \hat{M}''_{x^d}=o(d^{1/2})$.
For simplicity, we keep the notations of $f$ and $g$ (recall that $g(y)=\log\frac{\pi^d(y,x_{-1})}{\pi^d(x^d)}$ and $f^{-1}(y)=\ell R_d^{1/2}(y)$) and use the results in \cref{proof_lemma_approx5}.
Since $\Phi,\Phi',\Phi''$ are bounded, it suffices to bound all the following terms to be $o(d^{1/2})$:
\[
&\left[\frac{\dee}{\dee y}(fg-f^{-1}/2)\right]^2,\quad\frac{\dee^2}{\dee y} (fg-f^{-1}/2),\quad \exp(g) \left(\frac{\dee g}{\dee y}\right) \frac{\dee }{\dee y}(-fg-f^{-1}/2),\\
&\exp(g)\left[\frac{\dee}{\dee y}(fg-f^{-1}/2)\right]^2,\quad\exp(g)\frac{\dee^2}{\dee y} (fg-f^{-1}/2),\quad\exp(g)\left(\frac{\dee^2 g}{\dee y^2}\right),\quad \exp(g)\left(\frac{\dee g}{\dee y}\right)^2.
\]
Next, we show that most of them can be verified using Assumption \ref{assumption_A4_plus}, and the results in \cref{proof_lemma_approx5}:
	\[
	&\left[\frac{\dee}{\dee y}(fg-f^{-1}/2)\right]^2=\bigO\left[ \left(\sup_{x^d \in F_d^+}\log \pi^d(x^d) \bigO(d^{\alpha/4})+ \sup_{x^d \in F_d^+}\frac{\partial \log \pi^d}{\partial x_1}\right)^2\right]\\
	&\qquad= \bigO\left[(d^{\alpha/4}\log d + d^{\alpha/2})^2\right] =o(d^{1/2}),\\
	&\left|e^g (\frac{\dee g}{\dee y}) \frac{\dee }{\dee y}(-fg-f^{-1}/2)\right|=\bigO\left[\sup_{x^d \in F_d^+}\pi^d(x^d) \sup_{x^d \in F_d^+}\frac{\partial \log \pi^d}{\partial x_1}\left(d^{\alpha/4}\log d + d^{\alpha/2}\right)\right]\\
	&\qquad=o(d^{1/2-\alpha}d^{\alpha/2}(d^{\alpha/4}\log d + d^{\alpha/2}))=o(d^{1/2}),\\
	&\left|\exp(g)\left[\frac{\dee}{\dee y}(fg-f^{-1}/2)\right]^2\right|=o(d^{1/2-\alpha}d^{\alpha})=o(d^{1/2}),\\
	&\left|\exp(g)\left(\frac{\dee^2 g}{\dee y^2}\right)\right|=\bigO\left[\sup_{x^d \in F_d^+}\pi^d(x^d) \sup_{x^d\in F_d^+}\frac{\partial^2 \log \pi^d}{\partial x_1^2}\right]=o(d^{1/2-\alpha})\bigO(d^{\alpha})=o(d^{1/2}),\\
	&\left|\exp(g)\left(\frac{\dee g}{\dee y}\right)^2\right|=\bigO\left[\sup_{x^d \in F_d^+}\pi^d(x^d) \sup_{x^d\in F_d^+}\left(\frac{\partial \log \pi^d}{\partial x_1^2}\right)^2\right]=o(d^{1/2-\alpha})\bigO(d^{\alpha/2})^2=o(d^{1/2}).
	\]
	The only terms left are $\frac{\dee^2}{\dee y} (fg-f^{-1}/2)$ and $\exp(g)\frac{\dee^2}{\dee y} (fg-f^{-1}/2)$.
	Therefore, it suffices to show
	\[\label{tmp_eq3}
	\frac{\dee^2}{\dee y} (fg-f^{-1}/2)=\bigO(d^{\alpha}).
	\]
	Note that 
	\[
		&\frac{\dee^2}{\dee y} (fg-f^{-1}/2)
		=\frac{\dee}{\dee y}(f'g+g'f-\frac{1}{2}\dee f^{-1})\\
		&=\frac{\dee}{\dee y}\left[\frac{1}{R_d}\frac{R_d'}{R_d^{1/2}}g +\frac{1}{R_d^{1/2}}g' -\frac{1}{2}\frac{R_d'}{R_d^{1/2}} \right]\\
		&=\frac{1}{R_d}\frac{R_d'}{R_d^{1/2}}g' +\left(\frac{1}{R_d}\frac{R_d'}{R_d^{1/2}}\right)'g +\frac{1}{R_d^{1/2}}g'' + \left(\frac{1}{R_d^{1/2}}\right)'g' -\frac{1}{2}\left(\frac{R_d'}{R_d^{1/2}}\right)'.
	\]
	Note that we have shown $R_d'=o(R_d^{1/2})$ in \cref{proof_lemma_approx5}. Similarly, we also can show using Assumption \ref{assumption_A3_plus} that
	\[
	R_d''&=\frac{1}{d-1}(\sum_{j\neq 1} f_jf_j')'= \frac{1}{d-1}\sum_{j\neq 1} (f_j')^2 +\frac{1}{d-1}\sum_{j\neq 1} f_j f_j''\\
	&\le \frac{1}{d-1}\sum_{j\neq 1} (f_j')^2 + \sqrt{\frac{1}{d-1}\sum_{j\neq 1} f_j^2}\sqrt{\frac{1}{d-1}\sum_{j\neq 1} (f_j'')^2}\\
	&=\bigO(l_d/d)o((\sqrt{d/l_d})^2) + o(R_d^{1/2} \sqrt{l_d/d (\sqrt{d/l_d})^2})=o(R_d^{1/2}),
	\]
	where $f_j(x):=\frac{\partial \log \pi_{-1}(x_{-1}\,|\,x_1=x)}{\partial x_j}$.
	Therefore $R_d''=o(R_d^{1/2})$ as well. Finally, we can complete the proof by verifying \cref{tmp_eq3} using Assumption \ref{assumption_A4_plus} as follows.
	\[
	&\left|\frac{1}{R_d}\frac{R_d'}{R_d^{1/2}}g'\right|= \bigO\left(\frac{1}{R_d}\right)o(1)\bigO\left(\sup_{x^d \in F_d^+}\frac{\partial \log \pi^d}{\partial x_1}\right)= \bigO(d^{\alpha/4})o(d^{\alpha/2})=o(d^{\alpha}),\\
	&\left|\left(\frac{1}{R_d}\frac{R_d'}{R_d^{1/2}}\right)'g\right|=\bigO\left[ \frac{R_d'' R_d^{3/2}+ 3/2(R_d')^2R_d^{1/2}}{R_d^3}g\right]=\bigO\left[\frac{1}{R_d^{3/2}}(R_d''g)\right]\\
	&\qquad\qquad=\bigO(d^{\alpha/4})o(1)\bigO(d^{\alpha/2})=o(d^{\alpha}),\\
	& \left|\frac{1}{R_d^{1/2}}g''\right|=\bigO\left(\sup_{x^d \in F_d^+}\frac{\partial^2\log \pi^d}{\partial x_1^2}\right)=o(d^{\alpha}),\\
	& \left|\left(\frac{1}{R_d^{1/2}}\right)'g'\right|= \bigO\left(\frac{1}{2}\frac{1}{R_d^{3/2}}R_d'g'\right)=o(1/R_d)\bigO\left(\sup_{x^d \in F_d^+}\frac{\partial\log \pi^d}{\partial x_1}\right)=\bigO(d^{\alpha/4})o(d^{\alpha/2})=o(d^{\alpha}),\\
	&\left|\left(\frac{R_d'}{R_d^{1/2}}\right)'\right|= \left|\frac{R_d''R_d^{1/2}-\frac{1}{2}(R_d')^2\frac{1}{R_d^{1/2}}}{R_d}\right|=\bigO\left(R_d''/R_d^{1/2}\right)=o(1)=o(d^{\alpha}).
	\]

\section{Proof of \cref{thm_diffusion2}}\label{proof_thm_diffusion2}
We follow the same approach as in the proof of \cite[Proposition 3]{Roberts2016}. The idea is to follow the proof of \cref{thm_diffusion} except in the proof of \cref{tmp_eq_M_by_hatM}, we need a stronger version of \cref{lemma_approx3} to determine the sequence of ``typical sets'' $\{F_d'\}$.

Given fixed time $t$, considering the sequence of ``typical sets'' $\{F_d'\}$ defined by 
\[
F_d':=\{x^d: |R_d+S_d|\le d^{-\delta}\},
\]
where $\delta>0$ and we used $R_d$ and $S_d$ to denote $R_d^{(1)}$ and $S_d^{(1)}$ for simplicity.
We need to guarantee that when $d$ is large enough, we always have $X^d(\lfloor ds\rfloor)\in F_d\cap F_d', \forall 0\le s\le t$ and this happens for almost all starting state $X^d_1(0)=x$. That is, defining
\[
p(d,x):=\Pr(X(\lfloor ds\rfloor)\notin F_d\cap F_d', \exists 0\le s\le t \,|\, X_1^d(0)=x),
\]
letting $\pi_1$ denote the marginal stationary distribution for the first coordinate, we want to show that for any given $\epsilon>0$, as $d\to\infty$
\[
\Pr_{x\sim \pi_1}[p(d,x)\ge \epsilon, \textrm{infinite often}]=0.
\]
We prove it using Borel--Cantelli Lemma. Note that the application of Borel--Cantelli lemma is valid since we have assumed all of the processes are jointly defined on the same probability space as independent processes. First, note that 
\[
\EE_{x\sim \pi_1}[p(d,x)]=dt\Pr_{\pi^d}((F_d\cap F_d')^c)=dt\Pr_{\pi^d}(F_d^c\cup (F_d')^c)\le dt\Pr_{\pi^d}(F_d^c)+ dt\Pr_{\pi^d}((F_d')^c).
\]
For any given $\epsilon>0$, we have
\[
&\sum_{d=2}^{\infty}\Pr(p(x,d)\ge \epsilon)\le \sum_{d=2}^{\infty}\frac{\EE_{x\sim \pi_1}[p(d,x)]}{\epsilon}\\
&\le \frac{dt}{\epsilon}\sum_{d=2}^{\infty}\Pr_{\pi^d}(|R_d+S_d|> d^{-\delta})+\frac{dt}{\epsilon}\sum_{d=2}^{\infty}\Pr(X^d\notin F_d).
\]
By $\pi^d(F_d^c)=\bigO(d^{-2-\delta})$, we have $dt\sum_{d=2}^{\infty}\Pr(X^d\notin F_d)<\infty$.
Now in order to use Borel--Cantelli Lemma, the condition we need is that for some number of moments $m$ such that
\[\label{equ_moments}
\Pr_{\pi^d}(|R_d+S_d|> d^{-\delta})\le \frac{\EE|R_d+S_d|^m}{d^{-m\delta}}=d^{m\delta}\EE|R_d+S_d|^m=\bigO(d^{-2-\delta}),
\]	
which leads to
$\sum_{d=2}^{\infty}\Pr(p(x,d)\ge \epsilon)<\infty$. In order to obtain non-trivial conditions, we let $m=5$ and Assumption \ref{assumption_A2_pp} implies $\EE|R_d+S_d|^5=\bigO(d^{-2-6\delta})$.
We can then use this sequence of typical sets $\{F_d'\}$ in the proof of \cref{thm_diffusion} to replace the sequence of $\{F_d'\}$ used in \cref{lemma_approx3}. The residual proof follows the same as \cref{thm_diffusion}.

\section{Proof of \cref{prop_realistic_example}}\label{proof_prop_realistic_example}
Note that we have the number of parameters $d=n^2+n+2$ in this example. The target distribution (i.e. the posterior distribution) satisfies
\[
\begin{split}
\pi^d(x^d)&=\Pr(x^d~|~ \{Y_{ij}\}_{i,j=1}^n)\\
&\propto \frac{b^a}{\Gamma(a)}A^{-a-1}e^{-b/A} \prod_{j=1}^n\frac{1}{\sqrt{2\pi A }}e^{-\frac{(\mu_j-\nu)^2}{2A}}\prod_{i=1}^n\frac{1}{\sqrt{2\pi V }}e^{-\frac{(\theta_{ij}-\mu_j)^2}{2V}}\frac{1}{\sqrt{2\pi W}}e^{-\frac{(Y_{ij}-\theta_{ij})^2}{2W}}.
\end{split}
\]
Clearly, this model can be represented by the graphical model in \cref{graphical_model0}. It can be easily checked that the maximum number cliques any coordinate belongs to is $n+1$ and the cardinality of cliques is bounded by constant $2$, so $\sup_k |C_k|=o(d/m_d)=o(n)$. Furthermore, the target distribution clearly satisfies ``flat tail'' condition required by \cref{corollary_jeffrey} since all the conditional distributions are standard distributions. Therefore,  the first equation in \ref{assumption_A1}, the first equation in \ref{assumption_A3}, and \ref{assumption_A2} hold by \cref{corollary_jeffrey}.

Next, we verify  \ref{assumption_A5} using \cref{corollary_aaron}. Note that this model can be represented by the graphical model in \cref{graphical_model} using $K=3$ layers. In order to check the conditions in \cref{corollary_aaron}, note that
\[
\log\pi^d\propto (-a-1-\frac{n}{2})\log A -\frac{b}{A}-\frac{\sum_j (\mu_j-\nu)^2}{2A}-\frac{\sum_{i,j}(\theta_{ij}-\mu_j)^2}{2V}-\frac{\sum_{i,j}(Y_{ij}-\theta_{ij})^2}{2W}.
\]
Observing that, under $X^d=(\nu,A,\{\mu_j\}_{j=1}^n, \{\theta_{ij}\}_{i,j=1}^n)\sim \pi^d$, we have
\[
\theta_{ij}~|~Y_{ij},\mu_j\quad &\sim^{\textrm{indep.}} \mathcal{N}\left(\frac{W\mu_j+VY_{ij}}{W+V},\frac{VW}{W+V}\right),\quad i,j\in\{1,\dots, n\},\\
\mu_j~|~\sum_i \theta_{ij}, \nu, A &\sim^{\textrm{indep.}} \mathcal{N}\left(\frac{\sum_i A\theta_{ij}+V\nu}{nA+V},\frac{AV}{nA+V}\right), \quad i\in \{1,\dots, n\},\\
\nu~|~\bar{\mu}, A &\sim \mathcal{N}\left(\bar{\mu},\frac{A}{n}\right),\\
A~|~\{\mu_j\},\nu &\sim \textbf{IG}\left(a+\frac{n}{2}, b+\frac{1}{2}\sum_j(\mu_j-\nu)^2\right).
\]	
Therefore, we have
\[
\left|\frac{\partial \log \pi^d}{\partial A}\right|=\left|\frac{b+\frac{1}{2}\sum_j(\mu_j-\nu)^2}{A^2}-\frac{a+1+\frac{n}{2}}{A}\right|=\bigO_{\Pr}(d^{1/2}).
\]
since $\frac{a+1+n/2}{A}\to_{\Pr} \frac{a+1+n/2}{A_0}=\bigO(d^{1/2})$ and $\sum_j(\mu_j-\nu)^2
\to_{\Pr}\sum_j(\mu_j-\bar{\mu})^2+\frac{A_0}{n}=\bigO_{\Pr}(d^{1/2})$.
Other coordinates can also be verified, which are shown as follows.
\[
\left(\frac{\partial \log \pi^d}{\partial \nu}\right)^2&=\left(\frac{n(\bar{\mu}-\nu)}{A}\right)^2=\bigO_{\Pr}\left(\frac{n}{A}\right)=\bigO_{\Pr}(d/n),\\
\left(\frac{\partial \log \pi^d}{\partial \mu_j}\right)^2&=\left(\frac{\sum_i (\theta_{ij}-\mu_j)}{V} -\frac{\mu_j-\nu}{A}\right)^2=(nA+V)^2\left(\frac{A\sum_i \theta_{ij}+V\nu}{nA+V}-\mu_j\right)^2\\
&=\bigO_{\Pr}\left[ (nA+V)^2 \frac{AV}{nA+V}\right]=\bigO_{\Pr}(d/n),\\
\left(\frac{\partial \log \pi^d}{\partial \theta_{ij}}\right)^2&=\left(\frac{Y_{ij}-\theta_{ij}}{V}-\frac{\theta_{ij}-\mu_j}{W}\right)^2=(W+V)^2\left(\frac{VY_{ij}+W\mu_j}{W+V}-\theta_{ij}\right)^2=\bigO_{\Pr}(d/n^2).\\
\]
Therefore,  \ref{assumption_A5} holds by \cref{corollary_aaron}.  Finally, all the other conditions in \ref{assumption_A1}, \ref{assumption_A3}, and \ref{assumption_A4} can be verified in a similar way as in \cref{subsection_verify} for \cref{example_toy}. 
\end{document}